\documentclass[format=acmsmall, review=true]{acmart}
\usepackage{acm-ec-23}
\usepackage{booktabs} 
\usepackage[ruled]{algorithm2e} 

\usepackage{verbatim}
\usepackage{changepage}
\usepackage[font=footnotesize,labelfont=bf]{caption}
\usepackage[font=footnotesize]{subcaption}
\usepackage{algorithmic}
\usepackage{thm-restate}
\usepackage{amsmath,mathtools}
\usepackage[english]{babel}
\usepackage{multirow}
\usepackage{soul}
\usepackage{float}
 \usepackage[english]{babel}

\bibpunct[, ]{(}{)}{,}{a}{}{,}

\usepackage{color}

\usepackage{optidef}

\newif\ifpfapp  
\pfapptrue      
\pfappfalse     

\def\naturals{\mathbb{N}}
\def\reals{\mathbb{R}}

\def\election{\ensuremath\mathcal{E}}
\def\districtplans{\ensuremath\mathcal{D}}
\def\districtplan{\ensuremath D}
\def\voterballots{\ensuremath\mathcal{V}}
\def\voterballot{\ensuremath V}
\def\fairnessfunction{\ensuremath f}

\def\fairnessthreshold{\ensuremath \delta}
\def\pool{\ensuremath \mathcal{P}}
\def\shiftedvotes{\ensuremath x}
\def\poolsize{\ensuremath N}


\def\Set#1{\left\{ #1 \right\}}
\def\Abs#1{\left| #1 \right|}

\def\Paren#1{\left( #1 \right)}		
\def\Brack#1{\left[ #1 \right]}		

\def\govmap{\texttt{GOV2021}}

\setcitestyle{authoryear}

\title[Votemandering]{Votemandering: Strategies and Fairness in Political Redistricting}

\author{Sanyukta P. Deshpande, Ian G. Ludden, Sheldon H. Jacobson }

\begin{abstract}
Gerrymandering, the deliberate manipulation of electoral district boundaries for political advantage, is a persistent issue in U.S. redistricting cycles. 
In this work, we introduce and analyze Votemandering, a strategic blend of gerrymandering and targeted political campaigning devised to gain more seats by circumventing fairness measures. Votemandering leverages accurate demographic and socio-political data, bolstered by advancements in technology and data analytics, to influence voter decisions in pursuit of subtle gerrymandering strategies. We formulate votemandering as a Mixed Integer Program (MIP) that performs fairness-constrained gerrymandering over multiple election rounds. 
To combat votemandering, we present a computationally efficient heuristic for creating and testing district maps that more robustly preserve voter preferences.  
We analyze the influence of various redistricting constraints and parameters on votemandering efficacy.  We explore the interconnectedness of gerrymandering, substantial campaign budgets, and strategic campaigning, illustrating their collective potential to generate biased electoral maps. A case study of Wisconsin State Senate redistricting substantiates our findings on real data, demonstrating how major parties can secure additional seats through votemandering. Our findings underscore the practical implications of these manipulations, stressing the need for informed policy and regulation to safeguard democratic processes.
\end{abstract}

\begin{document}

\begin{titlepage}

\maketitle

\end{titlepage}
\section{Introduction}
\emph{Partisan gerrymandering} is the manipulation of voting district lines for political gain. There is numerous evidence of gerrymandering in the US electoral history, giving unfair political advantage to various parties in power \cite{bickerstaff2020election}. In an effort to detect and quantify gerrymandering, political scientists have devised various \emph{fairness measures}, some of which incorporate historical voting data. If a proposed district plan has a fairness measure outside of a typical range, or, more robustly, is an outlier with respect to a fairness measure over a sample of feasible plans, this anomaly offers evidence of partisan gerrymandering. However, federal courts have refrained from endorsing proposed fairness measures as gerrymandering litmus tests, indicating a need for further research on the robustness and trade-offs of such tests (Rucho v. Common Cause \citeyear{rucho}).

In addition to gerrymandering, political parties seek to enhance their political representation through huge campaign budgets \citep{horncastle2020scale,evers2021most}. 
Although campaigning alone cannot change the party inclinations of voters, it supports the Get Out The Vote (GOTV) cause, increasing voter turnout \citep{karp2008getting, imai2011estimation}.
Recent GOTV campaigns carefully target specific audiences for maximum impact, 
leveraging advanced machine learning algorithms that use voter data (collected through geographical surveys and the available telemetric data) to deliver information about the political inclination of the audience \citep{zarouali2020using}.
Once the targets are clear, personalized campaigns are delivered through direct messages or via social media advertisements. Such campaign efforts have been used in both the 2016 and 2020 U.S. presidential elections, where clear evidence of the effectiveness of the advertisements as well as research scrutinizing the implications of presenting the social choice surfaced \citep{liberini2020politics, brodnax2022home}. 
The implications of such precise campaign efforts become critical, 
as historical election data (influenced by the campaigns) 
are often used to judge the fairness of proposed maps. 
A question of interest is then studying how smart campaign strategies can simultaneously affect immediate elections and future redistricting, and help in securing even higher political representation. 

 This paper aims to investigate the robustness of fairness measures to strategic campaigning and traditional gerrymandering, which we term \emph{votemandering}. Votemandering is based on the idea that a party can strategically campaign in an election to alter the voting data and then draw a new district plan that appears fair for a fixed fairness measure, but gives them an unfair advantage in the next election. The focus is on identifying patterns of selective and disproportionate amendments to the representation of social choice through voting, to circumvent fairness measures for redistricting. This manipulation can be critical, particularly when slight deviations in election data can lead to significantly different fairness measure evaluations.  On this background, the paper seeks to address the following research questions:
\begin{itemize}
\item How vulnerable are popular fairness measures to votemandering?
\item How might voter turnout levels and political geography exacerbate or thwart votemandering?
\item Can careful combinations of fairness measures and legal constraints promote district plans that are more robust to votemandering?
\end{itemize}
These questions delve into both social choice theory and practical public policy considerations. A significant concern arises when technological advancements enable greater access to detailed data on voters' preferences and increase the capacity to influence decisions, thus allowing strategic actors to target specific communities in ways that undermine fairness and equity.



We next present a brief  description of the problem: 
Consider two election rounds with a redistricting cycle falling in between. 
The majority party in the state legislature, referred to as the "majority party," campaigns in the first election, winning the maximum number of seats while simultaneously ensuring they can draw a desired district plan for the second election, which appears fair. Fairness is measured by a metric that uses past election data, such as the efficiency gap (EG), which is influenced by campaigning. Assuming complete information about the opponent party's Get Out The Vote (GOTV) campaign, the goal of the majority party is to maximize the number of seats won in both rounds. We refer to this as votemandering and formulate an optimization framework that identifies the best campaign strategies with the combined objective of securing maximum wins in both rounds and drawing a desired map that remains valid for many years.  Motivated by practical and often legal constraints on redistricting in the US, we also analyze the special case of imposing proximity constraints for the proposed maps, i.e., making the least changes to the original plan while proposing a new plan, calling it \emph{local votemandering}.
 Through this research, we aim to shed light on the unreliability in the process of redistricting (and detecting gerrymandering), and further point at measures that ensure more robust maps in general. 
\\
Key takeaways from this work include:
\begin{enumerate}
    \item We demonstrate that fairness measures can be susceptible to data manipulation, leading to an indirect form of gerrymandering called votemandering. Therefore, the quality of a fairness measure can also be defined by its robustness against strategic amendments to the vote-share data. We formally model this phenomenon and discuss the case of the efficiency gap.
    \item  We show the fragility of district maps concerning votemandering and establish sufficient conditions for a party to benefit from it. We show how 
    campaign budgets and access to opponents' campaign information facilitate; 
    high voter turnout and stricter compactness bounds curtail; 
    and voter clustering patterns have little effect on votemandering. 
    \item 
   We lay the groundwork for creating and evaluating district plans that strongly preserve social choice, providing computationally efficient votemandering solutions. Our work is applicable to real-world data, as demonstrated by the case studies. 
\end{enumerate}

The remainder of the paper is structured as follows. 
Section \ref{sec: litreview} summarizes literature from various disciplines that connect methodologically or philosophically. 
Section \ref{sec:methodology} formally defines votemandering, expounding the model and methods. 
Section \ref{sec:results} proves the efficacy and computability of votemandering specific to the efficiency gap and further explores its sensitivity to state-specific factors such as voter distribution and nonpartisan redistricting constraints. 
Section \ref{sec:local_vm} defines and analyzes local votemandering, a variation with the constraint that the new district plan is close to the original. 
Section \ref{sec:case_study} applies votemandering to  
the  case of state senate redistricting in Wisconsin, demonstrating votemandering strategies for both major parties. 
Finally, Section \ref{sec:conclusion} concludes and outlines 
directions for future work.

\section{Related Literature}
\label{sec: litreview}
This paper connects to a rich body of work from the perspectives of social choice, game theory, optimization, and statistics.

\paragraph{Social Choice Theory}
Social choice theory studies and evaluates the translation of individual preferences or votes to collective societal decisions \citep{sen1986social}. In our work, we examine the impact of strategic campaigning on political redistricting, which may be easily translated to a form of strategic voting aimed at manipulating social choice. The pure form of strategic voting has been studied for decades, although the focus has been more on various voting mechanisms and their evaluation using strategy-proofness, Pareto efficiency, independence of irrelevant alternatives, etc \citep{lackner2018approval, myatt2007theory}. As famously shown by \citet{gibbard1973manipulation} and \citet{satterthwaite1975strategy}, no voting system for more than two players is strategy-proof. \citet{bartholdi1989computational} came up with a voting rule where it is NP-complete for manipulative voters to perform strategic voting, and also noted that many voting rules including the plurality rule can be manipulated with only polynomial computational effort. As we see within our framework, finding optimal strategies for votemandering is hard, but good solutions can be achieved with little computational effort. 

\paragraph{Manipulations within Plurality Voting}
Within the domain of plurality voting, such problems have also been studied from a computational theory point of view, while making a few abstractions on the redistricting part. \citet{cohen2018gerrymandering} study the problem of gerrymandering over graphs and show that the problem of dividing a social network into connected components is NP-complete, and \citet{ito2021algorithms} build over their settings. \citet{lewenberg2017divide, eiben2020manipulating} have studied another variant involving geographic manipulation of borders and location of districts. In another interesting work by \citet{stewart2019information}, information gerrymandering has been studied where the structure of the influence network manipulates the voting outcomes, along with newly placed zealots. \citet{lev2019reverse} study reverse gerrymandering in multi-group decision-making systems, where agents move across units to maximize their influence. The game of allocating optimal resources for campaigning has been modeled as the classic Colonel Blotto game, and its complexity, as well as equilibria, are studied \citep{behnezhad2017faster,behnezhad2018battlefields, macdonell2015waging}, although without examining the subsequent consequences on redistricting.

\paragraph{Quantifying District Plan Fairness}
Lately, with a lot of research being done on finding ways to fairly draw the district boundaries and on knowing if a particular map is gerrymandered \citep{swamy2022multiobjective, landau2009fair, chikina2017assessing, benade2021you}, there has been a growing interest in defining measures to judge the fairness of a map. With multiple redistricting processes reaching the Supreme Court \citep{royden2017extreme}, and the latter relying on ongoing research for the mathematical analysis (Pennsylvania Case \citeyear{penncase}), we ask if there are any strategies for fooling the measures while drawing the politically motivated map boundaries. We study a different form of strategic voting, where the strategies are implemented by the political parties, although carried out through a section of voters. In our work, we introduce a new criterion for the evaluation of voting mechanisms as well as the fairness of the district maps, stressing on the fact that the representation of social choice through voting is inherently connected to redistricting. 

\paragraph{Fooling Fairness Measures}
The idea of fooling the measures that are actually designed for achieving fairness is not new. Starting with \citet{adsul2010nash}, there has been a lot of work in the field of fair division in algorithmic game theory \citep{branzei2017nash, babaioff2021competitive}. By manipulating the preference data of buyers, the fairness criteria of allocation results in higher utility for the strategic players. In the field of redistricting, our work is philosophically the closest to \citet{brubach2020meddling}, where the authors study the effects of fairness measurements on voting strategies. Using the outlier detection method, the work heuristically studies the game of strategic voting where loyal voters alter their votes as directed by their political party. Building on this work and also addressing some open questions raised, we demonstrate our results using indirect manipulation of voter turnout through selective campaigning, and we use a popular fairness measure called the efficiency gap. 

\paragraph{The Efficiency Gap and its Shortcomings}
\citet{stephanopoulos2015partisan} introduce 
the \emph{efficiency gap} (EG) fairness measure 
to quantify partisan gerrymandering. 
EG is a fairly straightforward measure 
that computes the difference between the wasted votes of two major parties and labels a map as  unfair if a party disproportionately wastes more votes than the other. It has been widely used because of its simplicity,  intuition, and the use of actual voter preference data from the elections (Gill v Whitford \citeyear{gill}, Missouri Constitution \citeyear{missouri}). 

With the widespread use of EG, there has also been growing literature on the shortcomings of EG, typically focusing on its implications and the nature of it being a single-dimensional number trying to capture the complex forms of gerrymandering \citep{bernstein2017formula, kean2018flaw}. In particular, \citet{chambers2017flaws} have majorly criticized the philosophical implications like the possible increase in polarization, the problematic ranking of maps and  technical implications like discouragement in proportional representation. Moreover, the EG has also been criticized for favoring uncompetitive elections and voter suppression \citep{plener2018quantifying} and volatility in competitive elections leading to a high number of asymmetric wasted votes as well as for terming 3 to 1 victories as neutral \citep{bernstein2017formula}. \citet{tam2017measuring} have observed the problematic variations of EG implications across states with comparable vote shares; and further have talked about the limited number of values EG can take for any fixed vote shares. 
\citet{alexeev2018impossibility} show that 
sometimes only bizarrely shaped districts satisfy both population balance and EG constraints. 
Other philosophical shortcomings include the definition and weights of wasted votes (winner's surplus wasted votes, loser's all votes are weighted the same), incorrect reporting of the social choice, and bias to the winning party \citep{nagle2019criteria, barton2018improving}. Numerous updates are proposed to the current computation of the EG \citep{barton2018improving, tam2017measuring}, while also criticizing the implication of wasted votes being improperly biased towards districts with higher voting turnout \citep{wallin2017equal}. We note that the main criticism offered by our work is fundamentally independent of the previous work done on evaluating the EG, and our main focus is on the sensitivity of EG and its susceptibility to getting fooled in the broader context of votemandering.

\section{Methodology}
\label{sec:methodology}
In this section, we formally discuss the votemandering model and our methodology. 
Section \ref{sec: highleveloverview} sets the premise with a high-level description of the problem and Section \ref{subsec:vm_model} formally expounds the model.
Section \ref{subsec:vm_opt} presents votemandering as an optimization problem, applicable to a general fairness metric using past-election data. 
Finally, Section \ref{subsec:methods} outlines a two-stage heuristic approach to solving the votemandering optimization problem and describes the specific case of EG.
\subsection{High-level Votemandering Model} \label{sec: highleveloverview}
We begin by defining a function, $\election:\districtplans \times \voterballots \rightarrow \naturals$, to determine state-wide election winners. This function maps a district plan, $\districtplan \in \districtplans$, and a set of voter ballots, $\voterballot \in \voterballots$, to the number of districts won by party $A$ in the election. The election function, $\election$, represents a specific electoral system, such as single-member districts with first-past-the-post voting. Although the voting data, $\voterballot$, may be influenced by stochastic processes like migration and political dialogue, $\election$ is deterministic.

In this framework, partisan gerrymandering involves replacing $\districtplan$ with $\tilde{\districtplan}$ to win more districts, i.e., $\election(\tilde{\districtplan}, \voterballot) > \election(\districtplan, \voterballot)$. Similarly, election campaigning alters $\voterballot$ to $\tilde{\voterballot}$ to secure more districts: $\election(\districtplan, \tilde{\voterballot}) > \election(\districtplan, \voterballot)$.
Note that election campaigning is generally considered fair within the confines of the Federal Election Campaign Act.

Existing approaches to limit partisan gerrymandering involve calculating a fairness measure, $\fairnessfunction : \districtplans \times \voterballots \rightarrow \reals$, and rejecting a district plan $\districtplan$ if and only if $\fairnessfunction(\districtplan, \voterballot_{0}) > \fairnessthreshold$. Here, $\voterballot_{0}$ represents historical voting data, and $\fairnessthreshold$ is a predetermined threshold. This fairness constraint aims to reduce the strategic impact of gerrymandering on election outcomes. However, as noted by \citet{brubach2020meddling}, partisan agents may manipulate voting data in one election to make a future gerrymandered district plan appear fair. Let $\districtplan_0$ represent the current district plan. The manipulative partisan agent, party $A$, attempts to solve the optimization problem:
\begin{maxi}
{\tilde{\districtplan} \in \districtplans, \tilde{\voterballot} \in \voterballots}{\election(\districtplan_0, \tilde{\voterballot}) + \election(\tilde{\districtplan}, \voterballot_0)\label{opt:general_votemander}}{}{}
\addConstraint{\fairnessfunction(\tilde{\districtplan}, \tilde{\voterballot})}{\le \fairnessthreshold.}
\end{maxi}


Previous research on elections and redistricting has focused on the effects of either $\tilde{\districtplan}$, $\tilde{\voterballot}$, or $\fairnessfunction$. In contrast, this paper investigates the efficacy of votemandering, which combines gerrymandering and past or present campaigning, primarily in opposition to a specific partisan bias measure, such as the efficiency gap (EG). The votemandering framework assumes translation of campaign budgets to improved voter turnout and an access to other party's budget allocation information, although it is fairly robust to overcome small uncertainties within the data, as discussed in Section \ref{sec: experimentalresults}.

\subsection{Model Details and Terminology}
\label{subsec:vm_model}
Consider two political parties: 
the (state legislative) majority party, $A$, and the minority party, $B$. 
Party $A$ is assumed to be in-charge of the redistricting process, in line with requirements of majority of the states in the US \citep{brennan}. Suppose parties $A$ and $B$ compete in two rounds of elections 
with a redistricting cycle in between. 
By examining this narrow time window, 
our model studies only short-term implications of campaigning, 
affecting the round-1 election and the subsequent map-drawing process. 

Recall the high-level votemandering optimization problem \eqref{opt:general_votemander}. Set the electoral system, $\election$, as single-member districts with first-past-the-post voting. Function $\fairnessfunction$ represents the fairness measure, such as the EG. We distinguish between \emph{plan} and \emph{map}, with the former indicating unit-to-district assignments and the latter encompassing both a district plan and unit-level voter data. 

Figure \ref{fig: model} illustrates the stages of votemandering. In round-1, elections use the existing district plan, $\districtplan_0$, with voter ballots $\tilde{\voterballot}$ resulting from GOTV campaign efforts. We refer to $\voterballot_0$ as the \textit{original data} and $\tilde{\voterballot}$ as the \textit{new data}. We label $(\districtplan_0,\voterballot_0)$ the \emph{initial map} and $(\districtplan_0, \tilde{\voterballot})$ the \emph{campaigned map}. Following round-1, party $A$ creates a new district plan, $\tilde{\districtplan}$, satisfying fairness constraints using voter data from the round-1 elections, i.e., the new data. Round-2 elections employ the new plan, $\tilde{\districtplan}$, but with the original data, $\voterballot_0$. We designate $(\tilde{\districtplan}, \tilde{\voterballot})$ as the \emph{votemandered map} and $(\tilde{\districtplan}, \voterballot_0)$ as the \emph{target map}. The reversion to $\voterballot_0$ in round-2 implicitly assumes party $A$ can precisely match party $B$'s GOTV budget allocation, negating any increases in voter turnout. We do not model campaign budget strategies in round-2 to avoid added complexity and, more importantly, to concentrate on showcasing the ability to manipulate vote shares for generating a desired map while still appearing to uphold fairness.

\begin{figure}
    \centering
    \includegraphics[width=8.3cm]{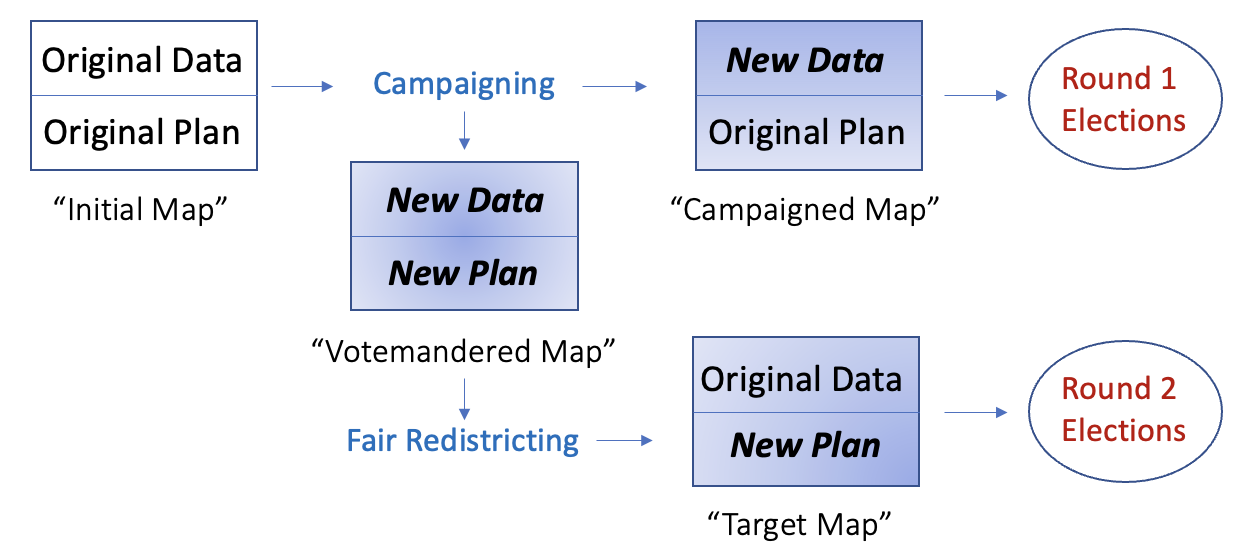}
    \caption{The model highlighting various stages of votemandering}
    \label{fig: model}
\end{figure}

Party $A$'s strategic GOTV campaign in round-1 influences their seat-share in both election rounds: directly through wins in the campaigned map, $\election(\districtplan_0,\tilde{\voterballot})$, and indirectly through wins in the target map, $\election(\tilde{\districtplan}, \voterballot_0)$. To examine the strategies of the majority party, we fix party $B$'s budget allocation across all units and consider party $A$'s optimization problem \eqref{opt:general_votemander} of maximizing their total number of seats. Wins in round-2 are critical because the target map will remain in effect until the next redistricting phase. To cover a complete redistricting cycle (such as a 10-year period in U.S. elections), the model can be extended by adjusting the weight of round-2 wins accordingly. Furthermore, the model accommodates the inclusion of aggregated historical data from multiple elections by appropriately adjusting the weight attributed to campaign influence. 

It is essential to emphasize that votemandering is fundamentally different from both strategic campaigning and gerrymandering due to its interactions between stages. As Section \ref{sec:case_study} demonstrates,  even modest budget allocations can lead to significant votemandering outcomes, setting it apart from traditional campaigning by incorporating additional gerrymandering tactics. An example in Appendix \ref{app: example} illustrates the votemandering process.

\subsection{ Optimization Framework for Votemandering}
\label{subsec:vm_opt}
The votemandering model motivates an optimization framework 
for exploring potential campaign and redistricting strategies for party $A$. Table \ref{tab: notation} in Appendix \ref{app:notation} lists the notation built.

\subsubsection{State Characteristics.}
\label{subsubsec:state_chars}
Let $K$ denote the set of units in a state 
with $n$ districts. 
Each district designates one unit as its \emph{center}. 
The district assignment of each unit $j \in K$ in each round $r \in \Set{1,2}$ 
is represented by the indicator variables 
    \begin{equation*}
    z_{ij}^{r}=
    \begin{cases}
      1, & \text{if unit $j$ is assigned to  the district centered at unit $i$ in round $r$} \\
      0, & \text{otherwise.}
    \end{cases}
    \end{equation*}
Moreover, $z_{ii}^{r} = 1$ if $i$ is a district center in round $r$. 
The original district plan, $\districtplan_0$, determines 
the values of $z_{ij}^1$, and all $z_{ij}^{2}$ are decision variables. 
The following constraints enforce the proper formation of districts in round-2.
\begin{align}
    & \sum_{k \in K} z_{ki}^{2} = 1 & \forall i \in K \label{eq: binary1}\\
    & \sum_{k \in K} z_{kk}^{2} = n  \label{eq: binary2}
\end{align}
Constraint \eqref{eq: binary1} ensures every unit is assigned to some district, and 
constraint \eqref{eq: binary2} ensures exactly $n$ units are chosen as district centers. 


\subsubsection{Budget and Campaigning.}
\label{subsubsec:budget_campaigning}
Assume complete information about unit populations and the corresponding party affiliations, i.e., the maximum number of voters 
for each party in each unit. 
The maximum vote counts for party $A$ ($B$) are given by $v_{init, k}^A$ ($v_{init,k}^B$) in unit $k \in K$, with total unit population $p_k = v_{init,k}^A+v_{init,k}^B$. 
Let $\alpha \in \Brack{0,1}$ denote the fractional baseline voter turnout, 
assumed constant across all units. 
The number of party $A$ votes is the sum of $\alpha v_{init, k}$ and 
the votes through GOTV campaigning in unit $k$. 
The vote shares $v_{init,k}$ and $\alpha$ are fixed for all rounds, and the actual voting turnout varies depending on  campaigning.

Let $\mathcal{B}^A$, $\mathcal{B}^B$ 
denote the parties' GOTV campaign budgets 
in terms of the total number of their supporters 
they can convince to show up to the polls. 
Budget allocations in unit $k$ by party $P \in \Set{A,B}$ 
may push their actual number of votes above 
the baseline turnout $\alpha v_{init,k}^{P}$, 
but their total number of votes cannot exceed $v_{init,k}^{P}$. 
(Hence if $\alpha = 1$, then GOTV budget allocations have no effect.)
This constraint is implemented by defining $v_k^{A}$ as the actual voter turnout for party $A$ in unit $k$, with budget $b_k^A$ spent satisfying
\begin{align}
   & v_k^{A} = \alpha v_{init,k}^{A} + b_k^A && \forall k \in K \label{eq: budgetcapA}\\
   & v_k^{B} = \alpha v_{init,k}^{B} + b_k^B && \forall k \in K \label{eq: budgetcapB}\\
   & b_k^A  \leq (1-\alpha) v_{init,k}^{A} && \forall k \in K \label{eq: budgetcap2}\\
   & \sum_{k \in K} b_{k} \leq \mathcal{B}^A \label{eq: budgetconstraint}
\end{align}
By assumption, 
party $B$'s GOTV campaign allocation, 
and therefore the values $v_k^B$, are known to party $A$. 


\subsubsection{Winning Districts.}
\label{subsubsec:winning_districts}
A party must win more than half of the votes in a district to secure a win. We use indicator variables  $\hat{s}_i^1 $ and $\hat{s}_i^{2}$ and the big-$M$ method for incorporating the wins in campaigned and target maps respectively
 \begin{align}
& 1-M(1-\hat{s}_i^{1}) \leq \sum_{k \in K} z_{ik}^{1} \left(v_k^A- v_k^B\right) \leq M \hat{s}_i^{1} & \forall i \in K \label{eq: round1wins}\\
&  1-M(1-\hat{s}_i^{2}) \leq \sum_{k \in K} z_{ik}^{2} \left( v_{init,k}^A- v_{init,k}^B\right) \leq M \hat{s}_i^{2} & \forall i \in K \label{eq: round2wins}
 \end{align}
Note that constraints \eqref{eq: round1wins} and \eqref{eq: round2wins} are both linear: $z_{ik}^{1}$ show the unit to district assignments in the initial map and are given, although variables $v_k^A$ depend on the budget spent. For \eqref{eq: round2wins}, we know the values of $v_{init,k}^A, v_{init, k}^B$, but variables $z_{ik}^{2}$ depend on the plan that we make for round-2.

\subsubsection{The Votemandering MIP}
\label{subsubsec:vm_mip}
The objective function of \eqref{opt:general_votemander} 
is now represented by the sum of individual district wins in both rounds, i.e., $\hat{s}_i^{1}$ and $\hat{s}_i^{2}$ for every unit $i$. 
As described in \eqref{opt:general_votemander}, a fairness measure constraint $\fairnessfunction(\tilde{\districtplan}, \tilde{\voterballot}) \le \fairnessthreshold $ is implemented, here precisely represented as a  function of the first round variables (updated vote shares $v_{k}^A, v_k^B$) as well as the second-round assignment variables ($z_{ij}^{2}$). 
Using our notation, this constraint refers to the fairness constraint on the votemandered map. Furthermore, the round-2 plan, i.e., $z_{ij}^{2}$ also needs to satisfy the contiguity, population, and/or compactness constraints for making districts. 
We omit these nonpartisan constraints for brevity and 
refer the reader to \cite{swamy2022multiobjective} for implementation details. 
 
Finally, given $v_{init,k}^{A}, v_{init, k}^{B}$, $z_{ij}^{1}$,  $v_{k}^B$,   $\mathcal{B}^A$ and $\fairnessthreshold$, 
a mixed-integer program (MIP) formulation of party $A$'s optimization problem is 
\small
\begin{maxi}[2]
{\Set{b_k^A}_k, \Set{z_{ik}^{2}}_{i,k}}{\sum_{i\in K }\hat{s}_i^{1} & +\sum_{i\in K} \hat{s}_i^{2}}{\label{opt:votemander_mip}}{}
\addConstraint{\textup{constraints } \eqref{eq: binary1}-\eqref{eq: round2wins}}{}
\addConstraint{z_{ik}^{2}, \hat{s}_i^{1}, \hat{s}_i^{2}}{\in \Set{0,1}\quad \forall i,k\in K}
\addConstraint{b_i^A}{\ge 0 \quad \forall i\in K}
\addConstraint{\fairnessfunction \Paren{z_{ik}^{2}, v_{k}^{A}, v_{k}^{B}}}{\le \fairnessthreshold}
\addConstraint{\Set{z_{ik}^{2}} \text{ satisfy nonpartisan constraints.}}{}
\end{maxi}
\normalsize

This concludes the description of the optimization problem \eqref{opt:votemander_mip}.
It is evident that the problem is computationally challenging due to the complex map-making constraints. For most fairness constraints, an exact approach to solving this optimization problem is only feasible for very small-sized grids (on the order of $3 \times 4$). 

\subsection{A Sampling-Based Votemandering Heuristic}
\label{subsec:methods}
The complexity of the optimization problem \eqref{opt:votemander_mip} arises from the interplay between the four votemandering stages. Campaigning decisions depend on the first and third stages (the initial and votemandered maps), whereas the objective depends on the second and fourth stages (the campaigned and target maps). Additionally, \eqref{opt:votemander_mip} accounts for the budget, voter turnout, and feasible map-making constraints, making it difficult to determine the best direction to improve the objective and find an optimal solution.

To address this complexity, the problem is split into two parts leading to an efficient heuristic approach: find a promising target map, then increase round-1 wins while maintaining the apparent fairness of the votemandered map. For a fixed target map defined by $z_{ik}^2$ variables, solving \eqref{opt:votemander_mip} reduces to finding an optimal budget allocation $b_k^A$ while maintaining feasibility (if possible).

A brute-force method of checking all possible new plans $\tilde{\districtplan} \in \districtplans$ is computationally infeasible due to the size of $\districtplans$, i.e., the combinatorial explosion of possible redistricting plans. Instead, sampling is used to reduce the new plan search space from the set $\districtplans$ of all district plans satisfying nonpartisan redistricting constraints to a smaller pool, $\pool \subset \districtplans$. To quickly sample a small but diverse pool $\pool$, we implement the popular recombination Markov chain \citep{deford2021recombination}.

The proposed algorithm considers each candidate plan in $\pool$ according to a priority order, stopping when a pool-optimal plan, $D^*$, is found. Note that an optimal solution within the pool may not be unique, and experiments suggest a large number of pool-optimal plans exist. The number of wins for party $A$ in $D^*$ with the original data, $\election(\tilde{\districtplan}, \voterballot_0)$, is a valid lower bound on the global optimum across all of $\districtplans$. Although recombination sampling may miss optimal new plans, this two-stage heuristic is tractable for standard-sized instances and provides practical solutions that effectively utilize votemandering strategies, showing improvements in the number of seats won.

Let $\pool \subset \districtplans$ be a pool of $\poolsize$ candidate new plans, i.e., 
$\pool \equiv \Set{\districtplan_1, \districtplan_2, \ldots, \districtplan_{\poolsize}}$. 
The choice of new plan $\districtplan_i$ 
combined with the original voter data $\voterballot_0$ 
determines the number of wins in the target map, 
$\election\Paren{\districtplan_i, \voterballot_0}$. 
Hence the best new plan for $A$ is determined by 
finding, for each plan $\districtplan_i \in \pool$, 
the maximum number of round-1 wins for $A$ (via spending budget $\mathcal{B}^A$ )
such that the votemandered map with plan $\districtplan_i$ fools the fairness constraint. 
By decoupling the round-1 and round-2 contributions to 
the objective function of \eqref{opt:votemander_mip}, 
this heuristic efficiently returns the optimal new plan from the pool. 
Algorithm \ref{algo: globalvotemandering} provides a high-level description 
of the heuristic. 


 \begin{algorithm}
    \small
    \caption{Votemandering Heuristic: Select Optimal Plan from a Pool}\label{algo: globalvotemandering}
    \begin{algorithmic}[1]
        \STATE  Input: Pool $\pool=\Set{\districtplan_1,\districtplan_2,\ldots,\districtplan_\poolsize}$ of candidate new plans
        \STATE Sort $\pool$ in decreasing order of $\election\Paren{\districtplan_i, \voterballot_0}$, relabeling from $\districtplan_1$ to $\districtplan_N$
        \STATE $s_{\text{max}}^{1} \gets $ maximum number of round-1 wins for party $A$ by spending campaign budget $\mathcal{B}^A$\label{line:find_s1_max} 
        \STATE $\texttt{best\_plan} \gets \texttt{NULL}$
        \STATE $\texttt{best\_obj} \gets -\infty$
        \FORALL{$\districtplan_i \in \pool$}
            \IF{$s_{\text{max}}^{1} + \election\Paren{\districtplan_i,\voterballot_0} < \texttt{best\_obj}$}\label{line:termination}
                \STATE \textbf{break}
            \ENDIF
            \STATE $\texttt{obj} \gets $ solve \eqref{opt:votemander_mip} with $z_{ik}^{2}$ variables fixed to encode $\districtplan_i$, returning $-\infty$ if infeasible\label{line:fairness_step}
            \IF{\texttt{obj} $>$ \texttt{best\_obj}}
                \STATE $\texttt{best\_plan} \gets \districtplan_i$
                \STATE $\texttt{best\_obj} \gets \texttt{obj}$
            \ENDIF
        \ENDFOR
    \STATE Output: \texttt{best\_plan}
    \end{algorithmic}
\end{algorithm}
\normalsize
\begin{proposition} \label{thm: algo}
Algorithm \ref{algo: globalvotemandering} 
returns a district plan in $\pool$ 
that, when used for the votemandered and target maps, 
maximizes the total number of wins for party $A$ 
across the two election rounds. 
\end{proposition}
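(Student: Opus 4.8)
The proposition is an optimality claim \emph{within the pool} $\pool$: the returned plan, when paired with a best feasible campaign-budget allocation, attains the largest two-round seat total among all new plans in $\pool$. The plan is to exploit the fact that, once the new plan is fixed, the two-round objective of \eqref{opt:votemander_mip} separates into a plan-dependent constant (the round-2 wins) plus an optimization over the round-1 budget alone, and then to check that the sort-and-prune loop of Algorithm~\ref{algo: globalvotemandering} correctly maximizes the sum of these two quantities.

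First I would fix a plan $\districtplan_i \in \pool$ and substitute its unit-to-district assignments for the $z_{ik}^2$ variables in \eqref{opt:votemander_mip}. Because $\pool \subseteq \districtplans$, this choice already satisfies constraints \eqref{eq: binary1}, \eqref{eq: binary2} and the nonpartisan constraints. With the $z_{ik}^2$ fixed and the data $v_{init,k}^A, v_{init,k}^B$ given, each margin $\sum_{k} z_{ik}^2(v_{init,k}^A - v_{init,k}^B)$ is a constant, so the big-$M$ inequalities \eqref{eq: round2wins} force each $\hat s_i^2$ to equal the indicator that party $A$ wins district $i$ under $\voterballot_0$; hence $\sum_{i} \hat s_i^2 = \election(\districtplan_i, \voterballot_0) =: W_2(\districtplan_i)$, which does not depend on any remaining variable. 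The leftover decision variables $\{b_k^A\}$ and $\{\hat s_i^1\}$ are constrained only by \eqref{eq: budgetcapA}-\eqref{eq: budgetconstraint}, by \eqref{eq: round1wins} (which refers to the fixed initial plan through $z_{ik}^1$), and by the fairness constraint $\fairnessfunction(z_{ik}^2, v_k^A, v_k^B) \le \fairnessthreshold$ evaluated at the fixed plan. Thus solving \eqref{opt:votemander_mip} with $z_{ik}^2$ encoding $\districtplan_i$ returns exactly $W_1^*(\districtplan_i) + W_2(\districtplan_i)$, where $W_1^*(\districtplan_i)$ is the maximum number of round-1 wins over budget allocations feasible for \eqref{eq: budgetcapA}-\eqref{eq: budgetconstraint} that also keep the votemandered map fair (and $W_1^*(\districtplan_i) = -\infty$ when no such allocation exists); this is precisely the value stored in \texttt{obj} at line~\ref{line:fairness_step}. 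Consequently, the best achievable two-round total over the pool equals $\max_{\districtplan_i \in \pool}\bigl(W_1^*(\districtplan_i) + W_2(\districtplan_i)\bigr)$.

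Next I would argue that the loop returns this maximum despite the early exit. Round-1 wins are determined solely by the fixed initial plan $\districtplan_0$, the allocation $\{b_k^A\}$, and turnout data that does not depend on the new plan, and dropping the fairness constraint only enlarges the feasible region; hence the quantity $s_{\text{max}}^1$ computed at line~\ref{line:find_s1_max} is a plan-independent upper bound, $W_1^*(\districtplan_i) \le s_{\text{max}}^1$ for every $\districtplan_i \in \pool$. Since the pool is processed in nonincreasing order of $W_2(\cdot)$ after the sort at line~2, at iteration $i$ every not-yet-considered plan $\districtplan_j$ satisfies $W_2(\districtplan_j) \le W_2(\districtplan_i)$, so $W_1^*(\districtplan_j) + W_2(\districtplan_j) \le s_{\text{max}}^1 + W_2(\districtplan_i)$. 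When the break condition of line~\ref{line:termination} holds, this common upper bound is already strictly below the incumbent \texttt{best\_obj}, so no remaining plan can improve it and stopping is safe; otherwise the iteration evaluates $\districtplan_i$ and updates \texttt{best\_obj} to the exact value $W_1^*(\districtplan_i) + W_2(\districtplan_i)$. A straightforward loop invariant --- ``\texttt{best\_obj} equals the maximum of $W_1^*(\districtplan_j) + W_2(\districtplan_j)$ over all $j$ examined so far, and \texttt{best\_plan} attains it'' --- then yields the claim at termination.

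I expect the main obstacle to be making the separation airtight: one must confirm that fixing $z_{ik}^2$ removes every coupling between the round-1 and round-2 parts of the objective --- in particular that the budget variables influence only round-1 wins and the fairness constraint, never $\sum_i \hat s_i^2$ --- and that $s_{\text{max}}^1$ is genuinely a plan-independent upper bound on $W_1^*(\districtplan_i)$, since it is this fact that simultaneously licenses sorting by $W_2$ and pruning at line~\ref{line:termination}. Once these two points are in place, the remainder is routine bookkeeping of the loop invariant.
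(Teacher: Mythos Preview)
Your proposal is correct and follows essentially the same approach as the paper: decompose the objective into a plan-determined round-2 term $W_2(\districtplan_i)=\election(\districtplan_i,\voterballot_0)$ and a budget-optimized round-1 term $W_1^*(\districtplan_i)$, observe that $s_{\max}^1$ is a plan-independent upper bound on the latter, and combine this with the sort on $W_2$ to justify the early break. Your write-up is somewhat more explicit---making the loop invariant and the infeasibility convention overt---but the logical skeleton matches the paper's proof.
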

\begin{proof}
See Appendix \ref{app:sec_methodologyproof}. 
\end{proof}

The main computational effort in Algorithm \ref{algo: globalvotemandering} 
occurs in Line \ref{line:fairness_step}. 
With the target map fully determined, 
the objective of \eqref{opt:votemander_mip} simplifies 
to maximize the campaigned map (round-1) wins 
over all possible budget allocations $\Set{b_k^A}_k \in K$ 
while maintaining the fairness of the votemandered map. 
We call Line \ref{line:fairness_step} the \emph{fairness step}, 
because the goal is to maximize wins conditioned on 
plan $\districtplan_i$ appearing fair as the votemandered map. Henceforward, we use a specific fairness measure, the efficiency gap (EG), which we formally define in Section \ref{subsec: egdefn}.
We next expound on the simplified version of \eqref{opt:votemander_mip} 
solved with the fairness step specific to EG. 

\subsubsection{Additional Notation.} Let $\mathcal{I} = \{I_1, ..I_{n}\}$  be the set of districts $I_i$ in the original plan (round-1) such that each $I_i$ is a set of units from $K$. Sets $I_i$ satisfy $I_i \cap I_j = \emptyset$ as no unit can belong to two districts in any round.  
Let ($V_{init,I}^{A}$, $V_{init,I}^{B}$) and  ($V_{I}^{A}$, $V_{I}^{B}$) denote pre-campaigning and post-campaigning votes, 
respectively, in district $I$. 
Similarly, $\mathcal{J}= \{J_1, ..J_{n}\}$ is the set of districts in the new plan (round-2), with ($V_{init,J}^{A}$, $V_{init,J}^{B}$) and  ($V_{J}^{A}$, $V_{J}^{B}$) denoting pre and post-campaigning votes in district $J \in \mathcal{J}$. Let $\hat{x}_{I_i}$, indexed using sets  $I_i \in \mathcal{I}$ and $\hat{y}_{J_j}$, indexed using sets  $J_j \in \mathcal{J}$ be the indicator variables denoting the wins in the campaigned map (round-1) and the votemandered map (round-2), respectively. Note that we resort to the set notation ($\hat{x}_{I}$ using $I, J$) unlike that of the original optimization problem (i.e., $\hat{s}_i^{1}$), as we now have assignments of both initial and target maps, allowing lesser notation. 

\subsubsection{Incorporating EG into the Fairness Step.} \label{subsec: egdefn}
Using the definition for EG, the difference between wasted votes for each district $I \in \mathcal{I}$ (denoted henceforth by $\mathcal{W}(B-A)$) is given by
\begin{equation}
   \mathcal{W} (B-A)(I)= 
\begin{cases}
    \frac{3V_{I}^{B}  - V_{I}^{A}}{2},              & \text{if }   V_{I}^{A} >  V_{I}^{B} \text{ ($A$ wins})\\
    \frac{V_{I}^{B} - 3V_{I}^{A} }{2},              & \text{if }  V_{I}^{A} < V_{I}^{B} \text{ ($B$ wins})\\
\end{cases}
\end{equation}
 Using this definition, we further write the constraint of EG less than a particular constant, say $8\%$ \citep{stephanopoulos2015partisan}.  
\begin{equation}\text{efficiency gap of the state } = \Abs{\sum_{I\in \mathcal{I}} \mathcal{W} (B-A)(I) / \left( \sum_{I\in \mathcal{I}}  V_I^A+V_I^B \right)}  \leq 0.08  \label{defn: eg}\end{equation}
Next, we describe the MIP we use to ensure the fairness of the proposed map in round-2, i.e., the votemandered map. Letting $\tau_{J}, \ \forall J \in \mathcal{J}$ denote the difference between wasted votes in $J$'s district, i.e., $\mathcal{W} (B-A)(I)$, we can write: 
\small
{\allowdisplaybreaks
\begin{align}
      \max \  \sum_{I\in \mathcal{I}} \hat{x}_{I} \notag \\
        s.t. \ \ 
        & \text{Constraints \eqref{eq: budgetcap2}, \eqref{eq: budgetconstraint}} \notag\\
        & V_{I}^{A}  = \sum_{k \in I} \alpha v_{init,k}^{A} + b_k^A  & \forall I \in \mathcal{I} \notag\\
         &   V_{J}^{A}   =  \sum_{k \in J} \alpha v_{init,k}^{A} + b_k^A & \forall J \in \mathcal{J}\notag\\
       &  1-M(1-\hat{x}_{I}) \leq  (V_I^{A}- V_I^{B}) \leq M \hat{x}_{I} & \forall I \in \mathcal{I}
       \label{eq: seat1MIP} \\
       & 1-M(1-\hat{y}_{J}) \leq  (V_J^{A}- V_J^{B}) \leq M \hat{y}_{J} & \forall J \in \mathcal{J}
        \label{eq: seat2MIP} \\
       &  0 \leq - \tau_{J}+ \left( \frac{3 V_J^B -  V_J^A}{2} \right) \leq M (1-\hat{y}_{J}) & \forall J \in \mathcal{J} \label{eq: tau1} \\
       &  0 \leq  \tau_{J} - \left( \frac{ V_J^B - 3V_J^A}{2} \right) \leq M \hat{y}_{J} & \forall J \in \mathcal{J} \label{eq: tau2} \\
       & -0.08  \leq \sum_{J} \tau_{J}/\left( \sum_{J\in \mathcal{J}}  V_J^A+V_J^B \right) \leq 0.08 \label{eq: egMIP} \\
      & \hat{x}_{I}, \hat{y}_{J} \in \{0,1\},  \ \ \ b_k^A, \tau_{J} \geq 0 & \forall k \in K, J \in \mathcal{J} 
    \label{prog: step2}
\end{align}}
\normalsize



Algorithm \ref{algo: globalvotemandering} finds an optimal solution within a pool of target maps,  but one may question about the probability that a globally optimal solution exists within a pool generated by running a recombination chain for $\poolsize$ steps. However, given the hardness of finding a globally optimal solution, it is unlikely that a bound on this probability can be determined. In practice, the algorithm is computationally efficient as shown in Section \ref{sec:results} (Theorem \ref{thm: polytime}), and the returns diminish as the size of the pool increases. It is important to note that the primary goal of this paper is to establish the mechanism of votemandering and study its dependence on various crucial factors that affect redistricting, as opposed to finding the optimal votemandering strategies.

As the algorithm works given any inputs of the initial map and campaign budget, it establishes a framework that can be used to test the robustness of any district plan or pool of maps against votemandering. This framework is used in later sections to compare the effects of various state characteristics and external redistricting conditions on votemandering. An ideal map would have a lower objective when tested against a standard pool of target maps. The higher the budget required to votemander, the better the robustness.

\section{Results and Analysis}
\label{sec:results}
This section presents the efficacy and efficiency of votemandering under various conditions. Using EG as our fairness measure, we begin by examining the impact of campaigning on votemandering objective and fairness in Section \ref{sec: vmability}. We show that under certain general conditions, votemandering can always occur. In Section \ref{sec: vmefficiency}, we establish the polynomial-time complexity of Algorithm \ref{algo: globalvotemandering}. Finally, in Section \ref{sec: experimentalresults}, we experimentally analyze the dependence of various factors on votemandering, such as the budget of Party $A$ and Party $B$, compactness, voter turnout, and the concentration index Moran's I.
\subsection{Sufficient Conditions for Votemandering} \label{sec: vmability}
As a build-up to this question, we analyze the strategy space of party $A$: it can add new votes via campaigning in round-1; effectively gerrymander to shift votes from a winning (W) district to a losing (L) district or vice versa. We discuss their key implications on fairness in Lemma \ref{lem: tabvotes}.
\begin{restatable}{lemma}{lem}
     \label{lem: tabvotes}
    The actions of campaigning and vote shifts have an impact on the difference between wasted votes, i.e., $\mathcal{W}(B-A)$, as given in Table \ref{tab: deltaWcases}.
\begin{table}
\small
\centering
\caption{Change in the difference between wasted votes (EG) as new votes are added/shifted}
\label{tab: deltaWcases}
\begin{tabular}{|c|c|c|}
\hline
  & Action                                             & Impact on $\mathcal{W}(B-A)$ \\ \hline
1 & Wasting an additional vote on a losing district    & $-3/2$         \\ \hline
2 & Wasting an additional vote on a winning district   & $-1/2$         \\ \hline
3 & Winning a district $I$ through campaigning              & $(3V^A_I+V^B_I)/2$   \\ \hline
4 & Shift $\shiftedvotes$ votes from a winning to a losing district & $-\shiftedvotes$                   \\ \hline
5 & Shift $\shiftedvotes$ votes from a losing to a winning district & $\shiftedvotes$                    \\ \hline
\end{tabular}
\normalsize

\end{table}
\end{restatable}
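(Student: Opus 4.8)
The plan is to reduce the lemma to a short case analysis built on the closed forms for $\mathcal{W}(B-A)$ the paper already records. First I would recall that, underlying the efficiency gap, the winning party in a district wastes exactly its surplus over half the total turnout, while the losing party wastes all of its votes; substituting the district total $V^A_I+V^B_I$ for the threshold in $W^B-W^A$ reproduces $\mathcal{W}(B-A)(I)=(3V^B_I-V^A_I)/2$ when $A$ wins and $(V^B_I-3V^A_I)/2$ when $B$ wins. The key structural observation is that each action in the table changes the vote totals of only the one or two districts it names and --- except in row 3 --- leaves every district's winner unchanged, so it suffices to differentiate the relevant closed form in those districts and say nothing about the rest of the state. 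Writing $\mathcal{W}$ as a function of $V^A_I$ and $V^B_I$ rather than of the surplus also automatically absorbs the fact that campaigning raises a district's total turnout (and hence moves the winner's threshold), so no separate bookkeeping is needed there.

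For rows 1, 2, 4, and 5 I would simply substitute. Campaigning one extra $A$-vote into a district $B$ wins sends $V^A_I\mapsto V^A_I+1$ in $(V^B_I-3V^A_I)/2$, a change of $-\tfrac{3}{2}$ (row 1); the same increment in an $A$-won district, applied to $(3V^B_I-V^A_I)/2$, gives $-\tfrac{1}{2}$ (row 2). Shifting $\shiftedvotes$ of $A$'s votes out of an $A$-won district ($V^A_I\mapsto V^A_I-\shiftedvotes$ there, contributing $+\shiftedvotes/2$) and into a $B$-won district ($V^A_I\mapsto V^A_I+\shiftedvotes$ there, contributing $-3\shiftedvotes/2$) sums to $-\shiftedvotes$ (row 4), and the reversed shift gives $+\shiftedvotes$ (row 5). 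I would remark that rows 4 and 5 are literally rows 1 and 2 (and their negatives) applied $\shiftedvotes$ times across the two districts --- a convenient consistency check --- and that all four computations tacitly assume the added or shifted votes do not flip a district, which is precisely what licenses using a single closed form throughout.

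Row 3 is the one that needs care and where I expect the real work to be, since it is the only action under which a district changes hands, so differentiating one closed form is invalid. I would decompose ``winning district $I$ through campaigning'' into two stages: party $A$ first adds just enough votes to draw level with $B$ while still losing --- that is $V^B_I-V^A_I$ votes, where $V^A_I,V^B_I$ are the totals in $I$ prior to this campaigning, each costing $-\tfrac{3}{2}$ by row 1 --- and then the vote that tips $I$ over, which moves $\mathcal{W}(B-A)(I)$ from its $B$-wins value at a tie, $(V^B_I-3V^B_I)/2=-V^B_I$, to its $A$-wins value $\approx V^B_I$, a jump of $\approx 2V^B_I$. Summing, $-\tfrac{3}{2}(V^B_I-V^A_I)+2V^B_I=(3V^A_I+V^B_I)/2$, matching the table up to the half-vote rounding inherent in an integer electorate. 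This also pins down what the row-3 expression means: $V^A_I,V^B_I$ are the pre-campaign district totals.

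So the only genuine subtlety is conceptual: row 3 forces one to split the impact into a smooth part (handled by row 1) and the discontinuity of the wasted-vote function exactly at the winner change, and to be explicit about which vote totals the stated expression refers to. Everything else is a one-line substitution into the two closed forms, together with the observation that untouched districts contribute nothing.
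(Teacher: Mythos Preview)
Your proposal is correct and follows the same case-by-case substitution into the two closed forms that the paper uses, including reducing rows 4 and 5 to (signed) applications of rows 1 and 2 across the two affected districts. The only stylistic difference is in row 3: the paper simply computes $\mathcal{W}(B-A)(I)$ directly at the before and after states (adding exactly $V_I^B-V_I^A$ votes with ties breaking for $A$, so $\mathcal{W}(B-A)_{\text{final}}=V_I^B$) rather than splitting into a smooth row-1 part plus the discontinuity at the winner flip, but the two computations coincide and yield $(3V_I^A+V_I^B)/2$.
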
 
\noindent
\begin{proof} See Appendix \ref{app:sec_results-1}.
\end{proof}

Next, we discuss the total impact on the change in wasted votes $\mathcal{W}(B-A)$, and thereby, the efficiency gap, as a new district plan gets drawn over the same vote data. As the total number of votes does not change in this case, this change can be tracked just through a reshuffle of units into winning and losing districts. For district assignment $\mathcal{I}$ in round-1, the difference between the wasted votes $  \mathcal{W}(B-A)_{\mathcal{I}}$ is expressed as:
\begin{align}
    \mathcal{W}(B-A)_{\mathcal{I}} & = \alpha \left( \sum_{I \in \mathcal{I}(W)} \frac{3V_{init,I}^{B}-V_{init,I}^{A}}{2} \right) + \alpha\left(\sum_{i\in \mathcal{I}(L)}\frac{V_{init,I}^{B} - 3V_{init,I}^{A}}{2} \right)  \notag \\
    & = \alpha \left( \sum_{I \in \mathcal{I}} \frac{V_{init,I}^{B}-V_{init,I}^{A}}{2} \right) + \alpha\left(\sum_{i\in \mathcal{I}(W)}V_{init,I}^{B} - \sum_{I\in \mathcal{I}(L)}V_{init,I}^{A} \right)
\end{align}
where $\mathcal{I}(W)$ and $\mathcal{I}(L)$ are the sets of  winning and losing districts, respectively. Then, after reshuffling to district assignment $\mathcal{J}$ in round-2, the change in $\mathcal{W}$ (defined by $\Delta \mathcal{W} (B-A)_{\mathcal{I} \rightarrow \mathcal{J}}$) and the final $\mathcal{W}$ is given as: 
\begin{align}
    & \Delta \mathcal{W} (B-A)_{\mathcal{I} \rightarrow \mathcal{J}} =   \alpha\left( \sum_{j\in \mathcal{J}(W)}V_{init,j}^{B} - \sum_{j\in \mathcal{J}(L)}V_{init,j}^{A} \right) -\alpha \left( \sum_{i\in \mathcal{I}(W)}V_{init,i}^{B} - \sum_{i\in \mathcal{I}(L)}V_{init,i}^{A} \right) \notag \\
    & \mathcal{W}(B-A)_{\mathcal{J}} = \mathcal{W}(B-A)_{\mathcal{I}} +   \Delta \mathcal{W} (B-A)_{\mathcal{I} \rightarrow \mathcal{J}} + \text{[Any wasted votes through campaign]} \label{eq: ItoJEG}
\end{align}
To conclude,  Table \ref{tab: deltaWcases} and Eq. \eqref{eq: ItoJEG} show that a campaign budget can be allotted (thereby updating $V_{init,I}^A, V_{init,I}^B$ to $V_I^A, V_I^B$) to achieve fairness of a target map, given that the allocation also satisfies the budget and voter-turnout constraints. Thus, votemandering can potentially include at least two (interdependent) ways: (1) Fixing plan $\mathcal{J}$ and allotting appropriate budget to satisfy the fairness bound, whilst benefiting from campaigning in round-1, and (2) Designing a target map with $\mathcal{J}$ that leads to a higher number of wins in round-2, maintaining fairness.  To measure the efficacy of votemandering, we define \textit{votemandering bonus} i.e., $\Delta$ which measures the gain in the number of wins after enabling votemandering. For a target plan $\tilde{\districtplan}$ and campaigning resulting with $\tilde{\voterballot}$,
\begin{equation}
\text{Votemandering bonus } \Delta = \election(\districtplan_0,\tilde{\districtplan}) + \election(\tilde{\districtplan},\voterballot_0) - 2 \election\Paren{\districtplan_0,\voterballot_0}
\end{equation}

Using this definition, a positive votemandering bonus would indicate that we have successfully votemandered. We now characterize the sufficient conditions for successful votemandering using the second way of improving the objective. Intuitively, we need a target map better than the initial map, and a baseline voter turnout to allow GOTV efforts to take place.
\begin{restatable}{theorem}{thmone} \label{thm: existence}
For any vote-share distribution and a corresponding fair initial map with assignment $\mathcal{I}$, the existence of strategies leading to a positive votemandering bonus is guaranteed if 
\begin{enumerate}
    \item A feasible, contiguous map with assignment $\mathcal{J}$ exists with a higher number of wins than the initial map.
    \item The voter turnout $\alpha$ satisfies:
    \[\alpha  \leq 1- \left( \frac{2 \Delta \mathcal{W} (B-A)_{\mathcal{I} \rightarrow \mathcal{J}}}{\sum_{J\in \mathcal{J}(W)} \sum_{j \in J} v_j^A} \right)\]
    where $\Delta \mathcal{W} (B-A)_{\mathcal{I} \rightarrow \mathcal{J}}$ is the change in the difference between wasted votes from assignment $I$ to $J$, and $v_j^A$ are party $A$'s votes in unit $j$.
\end{enumerate}
\end{restatable}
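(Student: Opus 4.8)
The plan is to argue constructively. Take the plan $\mathcal{J}$ furnished by condition~1 as \emph{both} the votemandered plan and the target plan, so $\tilde{\districtplan}=\mathcal{J}$ and the only remaining freedom is party $A$'s round-1 GOTV allocation $\{b_k^A\}_{k\in K}$. Then the votemandering bonus is $\Delta=\election(\districtplan_0,\tilde{\voterballot})+\election(\mathcal{J},\voterballot_0)-2\,\election(\districtplan_0,\voterballot_0)$, and condition~1 gives $\election(\mathcal{J},\voterballot_0)\ge\election(\districtplan_0,\voterballot_0)+1$. So it suffices to exhibit one feasible allocation $\{b_k^A\}$ for which (i) the campaigned map wins at least as many seats as the initial map, $\election(\districtplan_0,\tilde{\voterballot})\ge\election(\districtplan_0,\voterballot_0)$, and (ii) the votemandered map $(\mathcal{J},\tilde{\voterballot})$ satisfies the efficiency-gap constraint~\eqref{eq: egMIP}; this already forces $\Delta\ge 1>0$.

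Part (i) is the easy half: a GOTV campaign by $A$ adds only $A$-votes (constraints~\eqref{eq: budgetcapA} and \eqref{eq: budgetcap2} with $b_k^A\ge 0$), so it can never turn a district that $A$ wins in $\districtplan_0$ into a loss. Hence $\election(\districtplan_0,\tilde{\voterballot})\ge\election(\districtplan_0,\voterballot_0)$ holds for \emph{every} feasible allocation, reading $\voterballot_0$ as the round-1 data that already incorporates party $B$'s fixed allocation, as in Section~\ref{subsec:vm_model}. Everything therefore reduces to choosing the allocation so as to secure (ii).

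For (ii) I would track the wasted-vote difference of plan $\mathcal{J}$ via Eq.~\eqref{eq: ItoJEG}. Before $A$ campaigns, the numerator of the state efficiency gap under plan $\mathcal{J}$ equals $\mathcal{W}(B-A)_{\mathcal{I}}+\Delta\mathcal{W}(B-A)_{\mathcal{I}\to\mathcal{J}}$ over the same total vote count $T$ (repartitioning units into $\mathcal{J}$ does not change $T$), while fairness of the initial map gives $\Abs{\mathcal{W}(B-A)_{\mathcal{I}}}\le 0.08\,T$. If $\Delta\mathcal{W}(B-A)_{\mathcal{I}\to\mathcal{J}}\le 0$, the inequality in condition~2 is automatic, $\mathcal{J}$ is already at least as fair as $\mathcal{I}$ on the binding side, and no campaigning is needed. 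Otherwise, by Lemma~\ref{lem: tabvotes} (row~2 of Table~\ref{tab: deltaWcases}), each extra $A$-vote placed in a district of $\mathcal{J}(W)$ --- a district $A$ wins under baseline turnout, which it therefore keeps --- lowers the numerator by $1/2$ and raises the denominator by $1$. Adding $2\,\Delta\mathcal{W}(B-A)_{\mathcal{I}\to\mathcal{J}}$ such votes pulls the numerator back to $\mathcal{W}(B-A)_{\mathcal{I}}$ while the denominator has only grown, so the efficiency gap of $(\mathcal{J},\tilde{\voterballot})$ is at most that of the fair initial map, which is (ii). This allocation is admissible precisely when it fits into the spare turnout capacity of $\bigcup_{J\in\mathcal{J}(W)}J$, namely $(1-\alpha)\sum_{J\in\mathcal{J}(W)}\sum_{j\in J}v_j^A$, i.e., when $\alpha\le 1 - 2\,\Delta\mathcal{W}(B-A)_{\mathcal{I}\to\mathcal{J}}\big/\big(\sum_{J\in\mathcal{J}(W)}\sum_{j\in J}v_j^A\big)$, which is exactly condition~2 --- spreading the votes greedily so that no single district's population is exceeded, which the aggregate bound allows. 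One also wants $\mathcal{B}^A$ at least this large in~\eqref{eq: budgetconstraint}; this budget sufficiency is implicit in the statement, or can be appended as a hypothesis.

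The step I expect to be the main obstacle is (ii), because~\eqref{eq: egMIP} is a two-sided bound on a ratio with a moving denominator: one must argue that driving the numerator back down to its already-admissible value $\mathcal{W}(B-A)_{\mathcal{I}}$ cannot overshoot into the ``less than $-0.08$'' regime --- which it does not, since with $\Delta\mathcal{W}(B-A)_{\mathcal{I}\to\mathcal{J}}>0$ party $A$'s additions only decrease the numerator and one stops exactly at $\mathcal{W}(B-A)_{\mathcal{I}}$ --- and that the simultaneous growth of the denominator is monotonically helpful; one must also verify the $2\,\Delta\mathcal{W}(B-A)_{\mathcal{I}\to\mathcal{J}}$ extra votes are realizable within the per-unit turnout caps, not merely in aggregate. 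Part (i) and the reduction through condition~1 are essentially immediate.
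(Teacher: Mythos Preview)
Your proposal is correct and follows essentially the same route as the paper's proof: fix $\tilde{\districtplan}=\mathcal{J}$, observe that condition~1 already secures the round-2 gain while $A$'s GOTV can only help in round~1, and then restore fairness of $(\mathcal{J},\tilde{\voterballot})$ by dumping extra $A$-votes into the winning districts $\mathcal{J}(W)$ so that, by row~2 of Table~\ref{tab: deltaWcases}, the wasted-vote excess $\Delta\mathcal{W}(B-A)_{\mathcal{I}\to\mathcal{J}}$ is cancelled; the aggregate spare-turnout capacity $(1-\alpha)\sum_{J\in\mathcal{J}(W)}\sum_{j\in J}v_j^A$ then yields exactly condition~2. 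Your treatment is in fact a bit more careful than the paper's on two points the paper leaves implicit---the two-sided nature of the EG bound together with the monotone effect of the growing denominator, and the need for $\mathcal{B}^A$ to cover the required allocation---so nothing is missing.
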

\noindent
\begin{proof}
We prove this by achieving fairness of the map with assignment $\mathcal{J}$ using a strategic allocation of budget, thereby establishing the existence of strategies. We primarily satisfy Eq. \eqref{eq: ItoJEG}. We continue with the proof details in Appendix \ref{app:sec_results-2}.
\end{proof}

In practice, it is generally much easier to votemander (as we demonstrate in Section \ref{sec: experimentalresults} and through case studies in Section \ref{sec:case_study}), except under highly specific conditions such as near $100\%$ voter turnout and nearly all voters favoring a single party. The first way of votemandering, as discussed above, also allows for a positive bonus to be achieved through an increase in wins in the first round, as long as the voter turnout allows for such campaigning to occur in a fair way. Its campaigning effects on fairness, as translated from the additional number of wins, can be dissolved in the votemandered map through reorganization of the campaigned map. While this first way is easier to see in practice, its dependence on the specificity of $\mathcal{J}$ makes it difficult to establish sufficient conditions for votemandering as it demands map making, given an initial assignment $\mathcal{I}$. We explore specific votemandering strategies using both ways in detail in Section \ref{sec:local_vm}.

\subsection{Efficiency of Votemandering Heuristic} \label{sec: vmefficiency}
We now establish the polynomial time complexity of the votemandering heuristic. Recall that it takes a pool of maps $\pool$ as an input, and outputs the target map in $\pool$  maximizing the votemandering objective with respect to the given initial map, i.e., it finds the target map with the maximum votemandering bonus. Proposition \ref{thm: algo} confirms the correctness of Algorithm \ref{algo: globalvotemandering} in its convergence to the optimal target map. Theorem \ref{thm: polytime} now shows that this may be achieved efficiently. 
\begin{restatable}{theorem}{thmtwo} \label{thm: polytime}
Let $\pool$ be a pool of $\poolsize$ candidate target district plans such that $\pool$ is a subset of feasible, but not necessarily fair, $n$-districts plans. A plan in $\pool$ that maximizes the votemandering bonus may be found in poly($\poolsize$, $n$) time.
\end{restatable}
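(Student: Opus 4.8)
The plan is to bound the running time of Algorithm~\ref{algo: globalvotemandering} operation by operation, isolating the one genuinely nontrivial subroutine — the fairness step of Line~\ref{line:fairness_step} — and showing that it too runs in polynomial time once the target plan is fixed. Everything outside the fairness step is routine: sorting $\pool$ costs $O(\poolsize\log\poolsize)$ plus one evaluation of $\election(\districtplan_i,\voterballot_0)$ per plan, and each such evaluation merely sums the $\le |K|$ unit vote totals in each of the $n$ districts and compares, so it is $O(n|K|)$; the main loop runs at most $\poolsize$ times, since the test of Line~\ref{line:termination} only shortens it, and costs $O(1)$ per iteration apart from the fairness step.

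For $s_{\text{max}}^{1}$ (Line~\ref{line:find_s1_max}): with the round-1 plan $\districtplan_0$ fixed and its districts pairwise disjoint, winning district $I_j$ requires party $A$ to spend inside $I_j$ at least a threshold $c_j$ (essentially $V_{I_j}^B-\alpha V_{init,I_j}^A+1$, or $+\infty$ if even the per-unit caps of \eqref{eq: budgetcap2} cannot cover it), computable in $O(|K|)$ time; since \eqref{eq: budgetconstraint} bounds only the total spend, winning a set $S$ of round-1 districts costs exactly $\sum_{j\in S}c_j$, so the maximum number of simultaneously winnable round-1 districts is the largest $m$ such that the $m$ smallest finite $c_j$ sum to at most $\mathcal{B}^A$ — a greedy computation. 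Hence $s_{\text{max}}^{1}$ is found in polynomial time.

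The fairness step is the crux. With $\districtplan_i$ fixed, \eqref{opt:votemander_mip} collapses to the efficiency-gap MIP~\eqref{prog: step2}: maximize the round-1 wins $\sum_I\hat{x}_I$ over budget allocations $\{b_k^A\}$ subject to the win definitions, the turnout/budget caps, and the votemandered-map EG bound~\eqref{eq: egMIP}. I would (i) loop over the target count $s^1\in\{0,\dots,s_{\text{max}}^1\}$ ($\le n+1$ values), and (ii) for each $s^1$ reduce the question ``can $A$ win $s^1$ round-1 districts while the votemandered map with plan $\districtplan_i$ keeps EG within $\pm0.08$?'' to polynomially many linear programs. The reduction exploits the additive bookkeeping of Lemma~\ref{lem: tabvotes} and Eq.~\eqref{eq: ItoJEG}: once one fixes the discrete \emph{flip pattern} — which round-1 districts $A$ wins and which round-2 districts of $\districtplan_i$ turn from $B$ to $A$ under the campaigned data — the quantity $\mathcal{W}(B-A)$ of the votemandered map and the denominator $\sum_J(V_J^A+V_J^B)$ become affine in $\{b_k^A\}$, so the win requirements, caps, and the two-sided EG bound are all linear and LP feasibility settles the pattern. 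The claim to be established is that only polynomially many flip patterns matter: on the round-1 side, the cheapest $s^1$ districts dominate, because spare budget only relaxes the binding EG bound (campaigning monotonically decreases $\mathcal{W}(B-A)$, rows~1--2 of Table~\ref{tab: deltaWcases}, and may always be withheld or routed into losing votemandered districts); on the round-2 side, only districts within a budget-reachable margin can flip, and sweeping them in increasing flip cost gives a nested family of $O(n)$ candidates, each fixing the spend level and hence the denominator. Together these give $\mathrm{poly}(\poolsize,n)$ overall, and polynomial in the bit-size of the vote data.

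The main obstacle I expect is precisely this last combinatorial reduction — ruling out an exponential search over which round-1 districts $A$ wins and which round-2 districts campaigning flips. It needs a careful monotonicity/exchange argument (spare budget is always redeployable without breaking feasibility of a cheaper round-1 choice, and the relevant round-2 flips are nested) together with bookkeeping for the denominator $\sum_J(V_J^A+V_J^B)$ in~\eqref{eq: egMIP}, which ties the EG bound to the total amount actually spent. Should clean nesting fail, the fallback is to enumerate only the $O(n)$ break values of total spend at which some district's win status can change and solve one LP per break value, which still keeps the count polynomial.
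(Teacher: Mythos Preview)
Your outline agrees with the paper up to the point where you analyze the fairness step (Line~\ref{line:fairness_step}), but diverges there in a way that leaves a genuine gap. The paper does \emph{not} loop over the round-1 win count $s^1$ and argue that the cheapest $s^1$ round-1 districts suffice. Instead it (i) aggregates the per-unit variables $b_k^A$ into \emph{piece} variables $z_{IJ}$, one for each nonempty intersection of a round-1 district $I\in\mathcal I$ with a round-2 district $J\in\mathcal J$ (at most $n^2$ of them, with precomputable capacities), and (ii) fixes the round-2 win/lose status $\hat y_J$ of the votemandered map rather than anything on the round-1 side. With $\hat y_J$ fixed, the $\tau_J$ variables disappear and the EG constraint becomes a single linear inequality in the $z_{IJ}$; the paper solves this linear program, checks whether any round-2 losing-district constraint $\sum_I z_{IJ}\le V_J^B-V_J^A$ is tight, flips that district to $W$ if so, updates the fairness right-hand side by the jump of row~3 in Table~\ref{tab: deltaWcases}, and re-solves. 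Since each flip removes one district from $\mathcal J(L)$, at most $n$ LPs are solved per candidate plan.

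Your dominance claim---that for a target $s^1$ the cheapest $s^1$ round-1 districts can be taken without loss---is the step that fails. The budget committed to winning a round-1 district $I$ must land in units of $I$, and those units sit in specific round-2 districts; by Lemma~\ref{lem: tabvotes} a vote spent in a round-2 winning district lowers $\mathcal W(B-A)$ by $1/2$, but in a round-2 losing district by $3/2$. Hence a slightly more expensive round-1 district whose units lie inside round-2 $L$ districts can leave the EG constraint strictly looser than a cheaper district whose units lie inside round-2 $W$ districts, even after all spare budget is optimally routed: the exchange you propose can raise $\mathcal W(B-A)$ by as much as the full cost of the cheaper district. This breaks the greedy ordering on the round-1 side, and your fallback of enumerating break values of total spend does not resolve which subset of round-1 districts is won at each spend level. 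The paper's device---freezing $\hat y_J$ and letting the LP choose the round-1 allocation freely across pieces---is precisely what circumvents this combinatorial search.
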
 

\begin{proof}
Following Algorithm \ref{algo: globalvotemandering}, see that the only complicated part is the fairness step \eqref{line:fairness_step} in the MIP in Section \ref{subsec: egdefn}. We show that each target map can be checked in polynomial time, enabling us to move through the pool quickly until convergence. We sketch the proof here and provide details in Appendix \ref{app:sec_results-3}.
\begin{enumerate}
    \item For each district in round-1, we decompose its space into \emph{pieces} that each belongs to a district in round-2. For each such piece, we define its capacity = min(its voter-turnout capacity, budget needed to win the round-2 district it is part of (only if part of a losing district)).
    \item Given party $B$'s investment and the original vote shares of $A$ and $B$, we next find the win/lose ($W/L$) status of the districts in the votemandered map and compose a linear program 
    to find the maximum wins in the campaigned map while constraining on the status of the districts, implemented by the variables for investment in the pieces.
    \item If the $(W,L)$ constraint for district $k$ (in the votemandered map) is tight in the optimal solution, do: i) mark $k$'s status as a win and update the fairness of the votemandered map, ii) add a constraint for allocating the budget needed to win $k$. We then solve the updated linear program, and if the objective increases, we repeat all the steps with updated $W/L$ status and constraints until the objective stops increasing. This converges in polynomial time since there is a predefined number of districts with an $L$ status, bounded above by $n$.
\end{enumerate} 
These three steps suffice to prove Theorem \ref{thm: polytime}.
\end{proof}
\subsection{Analysis of factors impacting votemandering} \label{sec: experimentalresults}

Although the conditions for votemandering may be easily satisfied in practice, the required budget to ensure a positive bonus can vary. The efficacy of votemandering depends on several factors, including the initial vote share distribution across the state, the initial district assignment plan, and the available campaign budget for parties $A$ and $B$. Furthermore, it is influenced by various externally imposed constraints on the redistricting process, such as EG, compactness, the number of majority-minority districts, proportionality, etc. As a result, we opt for a randomized approach, i.e., Algorithm \ref{algo: globalvotemandering}, for a pool of maps to examine the dependence on these factors and, in turn, demonstrate the efficacy of votemandering under various conditions.
The pool of plans is randomly generated using recombination and contains plans that all satisfy the externally imposed constraints. We fix a randomly generated vote share distribution across a grid with $20 \times 20$ units such that each unit $i$ has a population $p_i$ uniformly chosen between $350-400$ and vote shares $(v_{init,i}^A/p_i, v_{init,i}^B/p_i)$ between $20-80\%$ for each party. Each feasible map from the pool provides a unit-to-district assignment, mapping the 400 units to 10 districts, with district populations allowed to deviate $1\%$ from the average district population.

\subsubsection{Impact of increasing budget.} \label{subsec: budget}

\begin{figure}
    \centering
    \begin{subfigure}[b]{0.45\textwidth}
         \centering
         \includegraphics[width=5 cm]{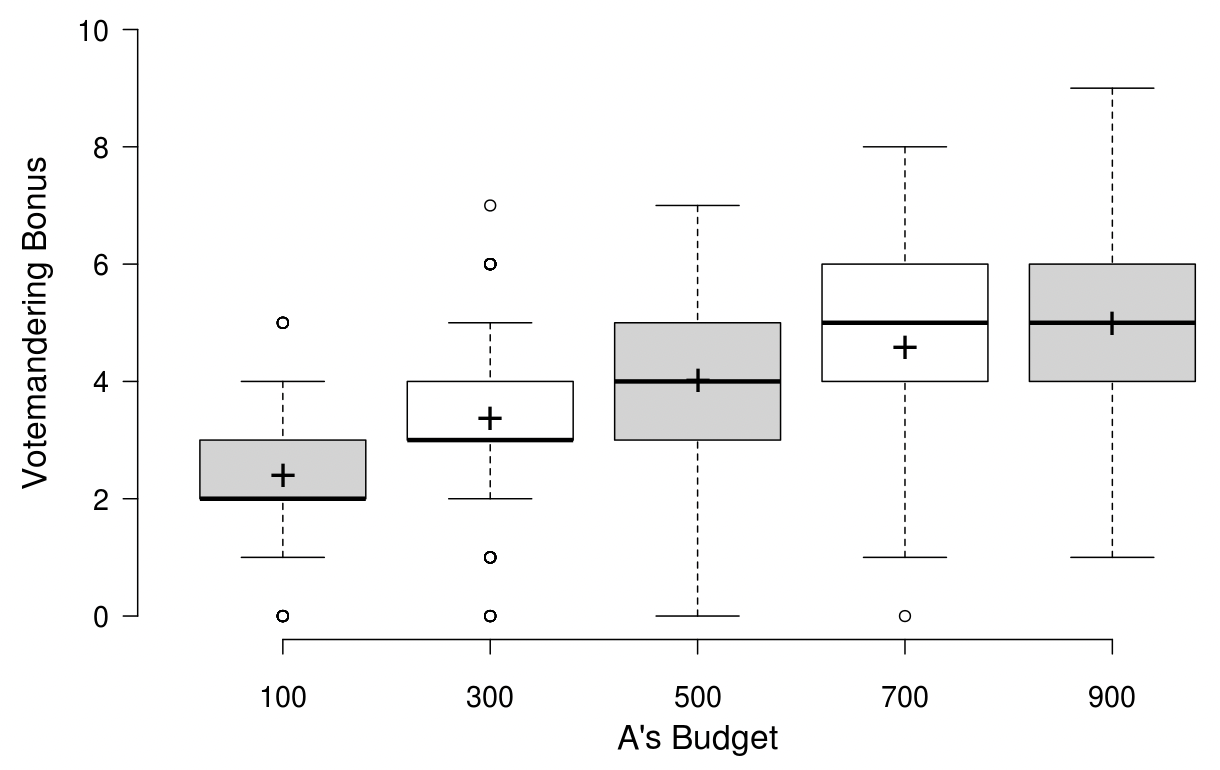}
         \caption{Increasing budget of party A}
         \label{fig: seatsvsbudgetA}
     \end{subfigure}
     \hfill
     \begin{subfigure}[b]{0.45\textwidth}
         \centering
         \includegraphics[width=5 cm]{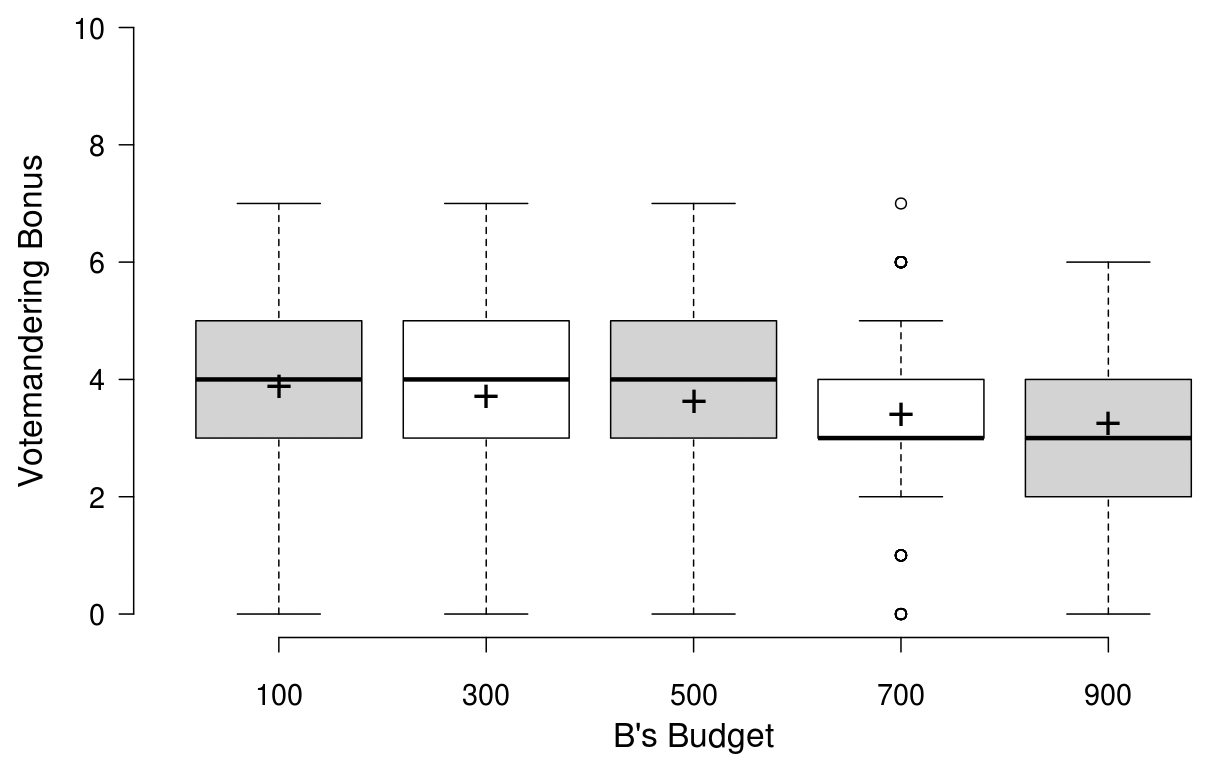}
         \caption{Increasing budget of party B}
         \label{fig: seatsvsbudgetB}
     \end{subfigure}
    \caption{Party A's votemandering bonus, with increase in both budgets}
    \label{fig: seatsvsbudget}
\end{figure} 

We present our first key result through Figure \ref{fig: seatsvsbudget}, which tracks how increasing the campaign budget strengthens the ability to votemander. Recall that the budget equals the number of votes that can be influenced above the baseline voter turnout, with an upper bound given by total party affiliation shares. Figure \ref{fig: seatsvsbudget} plots Party $A$'s votemandering bonus as the parties increase their budget uniformly. In experiments for Figure \ref{fig: seatsvsbudgetA}, Party $B$'s budget $\mathcal{B}^B$ is fixed at 400, and Party $A$'s budget $\mathcal{B}^A$ is varied, whereas for Figure \ref{fig: seatsvsbudgetB}, $\mathcal{B}^B$ is varied, and $\mathcal{B}^A$ is fixed at 400. In both experiments, we plot the majority party $A$'s votemandering bonus coming from its strategic investment. Recall that we do not assume any campaigning strategies from Party $B$. Given any budget allocation of $B$, if $A$ has access to the allocation information, then the algorithm finds the best strategies for $A$. Here, we let $B$ invest most straightforwardly, making its budget investment proportional to each unit's population, allowing fractional investments.

Figure \ref{fig: seatsvsbudgetA} shows a steady increase in the bonus through the means and medians shifting upwards with the increase in $\mathcal{B}^A$. The bonus for $\mathcal{B}^A=100$ indicates the objective that can be attained by accessing Party $B$'s budget investment information while $A$ puts in a little campaigning effort itself. Note that the increase in bonus in Figure \ref{fig: seatsvsbudgetA} is not linear. Improving the allowed budget has diminishing returns in the form of objectives. This is expected since the objective, and therefore the bonus, is capped by the total number of seats available in both rounds.

Most interestingly, the bonus has a counterintuitive relation with increasing $\mathcal{B}^B$ as shown in Figure \ref{fig: seatsvsbudgetB}. As opposed to a clear steady increase in \ref{fig: seatsvsbudgetA}, increasing $\mathcal{B}^B$ does not ensure a steady decrease in the bonus. Although increasing $\mathcal{B}^B$ may impact Party $A$'s chances of winning in the first round, $A$ may use this information to create better districts in the second round. This is achieved by letting Party $B$ win some districts in the votemandered map, only to lose those in the target map as the campaign effects diminish. Because of this trade-off, the decreasing trend is not obvious: Party $A$'s votemandering bonus remains largely unaffected until $\mathcal{B}^B$ reaches 500.  

\subsubsection{Impact of increasing compactness.}\label{subsec: compactness}

\begin{figure}
    \centering
    \begin{subfigure}[b]{0.45\textwidth}
         \centering
    \includegraphics[width=5cm]{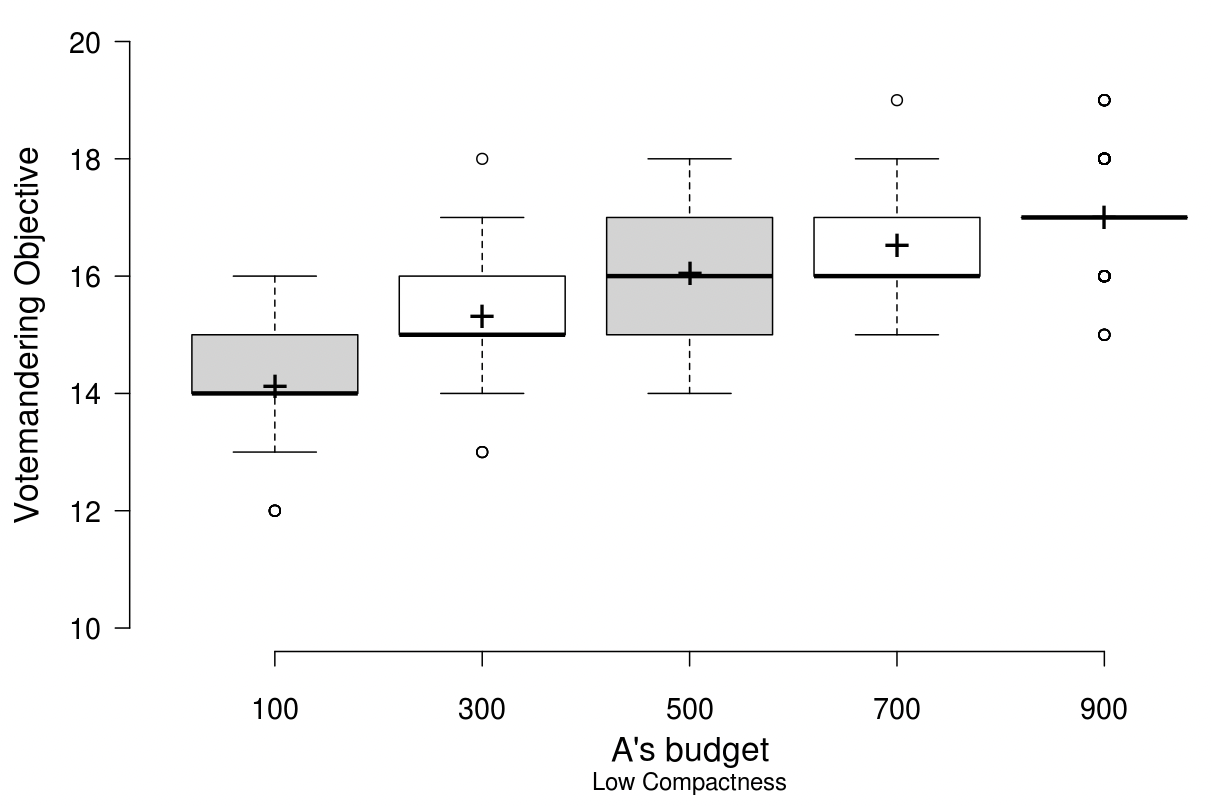}
         \caption{Lower compactness}
         \label{fig: lowc}
     \end{subfigure}
     \hfill
     \begin{subfigure}[b]{0.45\textwidth}
         \centering
    \includegraphics[width=5cm]{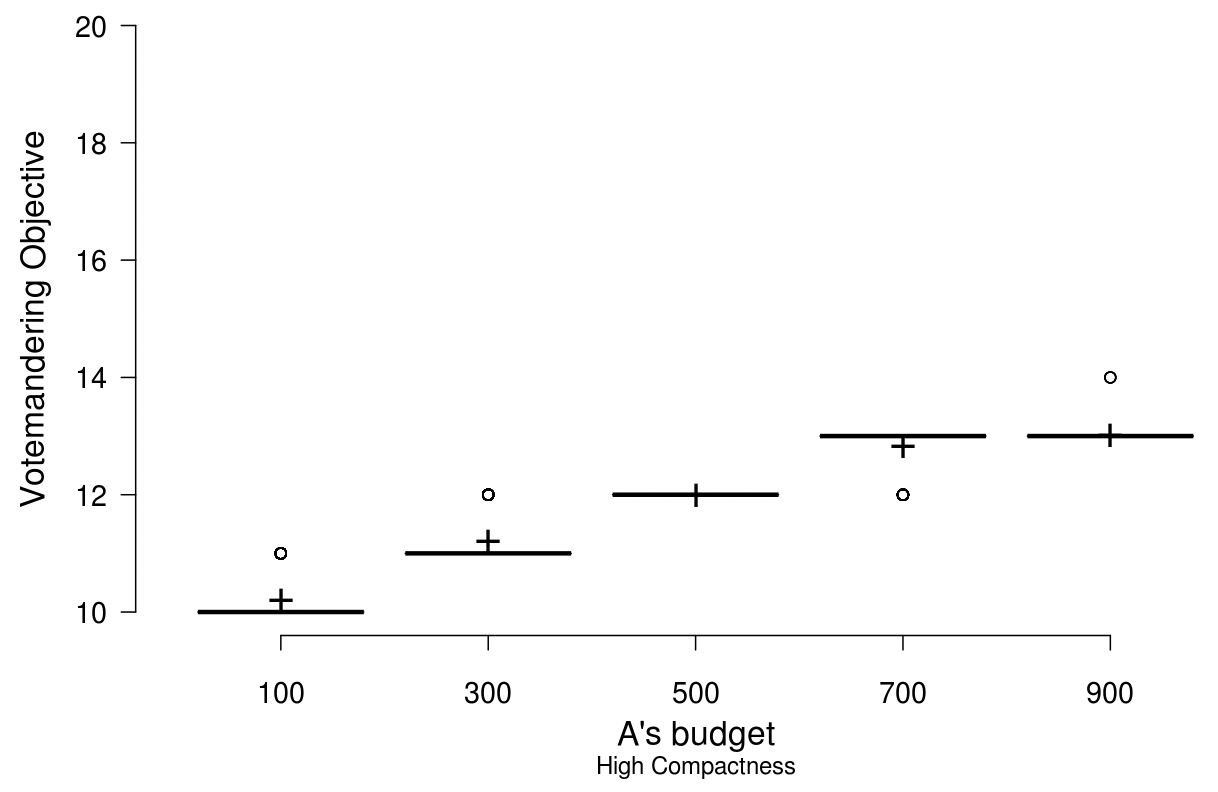}
         \caption{Higher compactness}
         \label{fig: highc}
     \end{subfigure}
   \caption{Party A's votemandering objective with lower and higher compactness bounds}
    \label{fig: seatsvscompactness}
\end{figure} 

The metric of compactness is generally not perceived as a fairness notion and is usually imposed to be in line with the (older) belief of creating districts that minimize the physical distance between units within a district. The salamander-shaped district in the first gerrymander suggests rejecting freehand-shaped districts and asking for compactness as a proxy for partisan neutrality \citep{polsby1991third}. However, compactness is often deemed orthogonal to fairness measures \citep{gurnee2021fairmandering}. Contrary to this belief, we demonstrate that imposing tighter compactness bounds limits the ability of votemandering, leading to better (robust) maps in general. We achieve this by comparing votemandering objectives on two separate pools of maps, generated through recombination: one with looser and one with tighter compactness constraints. For ease of handling, compactness is expressed through the number of cut edges, as done in the foundational work on recombination \citep{deford2021recombination}. The number of cut edges is defined as the number of edges in a state's unit adjacency graph with endpoints belonging to different districts. For instance, a $20\times 20$ grid graph with each district composed of two adjacent columns—making 10 districts overall—will have $9\times20=180$ cut edges. For showing the effects of compactness, the first pool has the maximum number of cut edges equal to $2 \times 180 = 360$, and the second pool has a bound of $0.75\times 180 = 135$ cut edges.

The results are given in Figure \ref{fig: seatsvscompactness}, which show that more compact plans lead to a lower number of seats achieved through votemandering. Recall Lemma \ref{lem: tabvotes}, which shows that investing in a losing district is 3 times more beneficial in achieving fairness, while winning through campaigning marks the investment as a winning-district specific. Then, campaigning in targeted units is usually followed by their re-assignments to losing districts, as a votemandering strategy to achieve fairness benefits. Compactness limits this scope of targeting units for the campaign and subsequent reassigning, by disallowing arbitrary shapes. We elaborate more on the intuition behind this phenomenon, as we discuss the local votemandering strategies in Section \ref{sec: localcharacteristics}. Note that Figure \ref{fig: seatsvscompactness} compares votemandering objectives, as opposed to bonuses shown in Figure \ref{fig: seatsvsbudget}, as here two different pools are used, which also significantly affects the distribution of wins in the initial maps, and thereby the bonuses.

\subsubsection{Impact of voter turnout.} \label{subsec: turnout}

\begin{figure}
    \centering
    \begin{subfigure}[b]{0.45\textwidth}
         \centering
    \includegraphics[width=5cm]{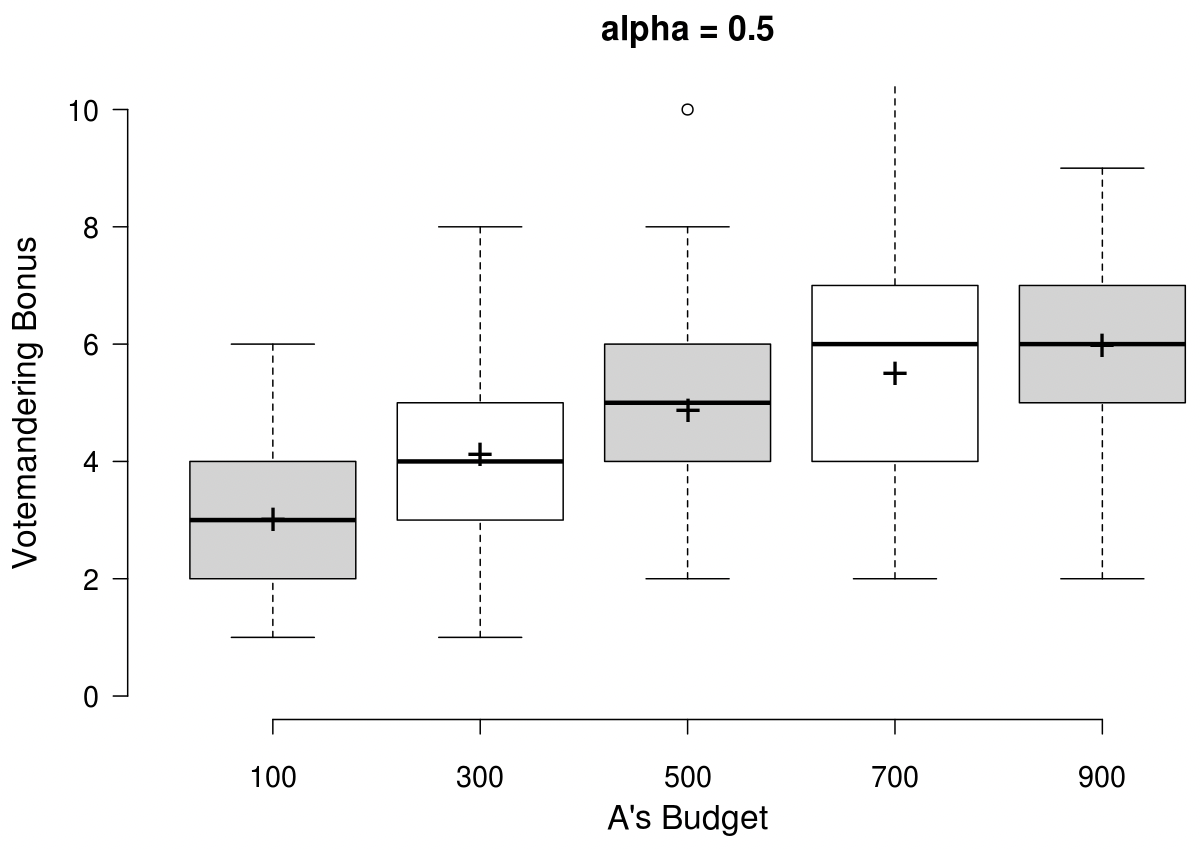}
         \caption{$\alpha$ = 0.5}
         \label{fig: seatsvsalpha10.5}
     \end{subfigure}
     \hfill
     \begin{subfigure}[b]{0.45\textwidth}
         \centering
     \includegraphics[width=5cm]{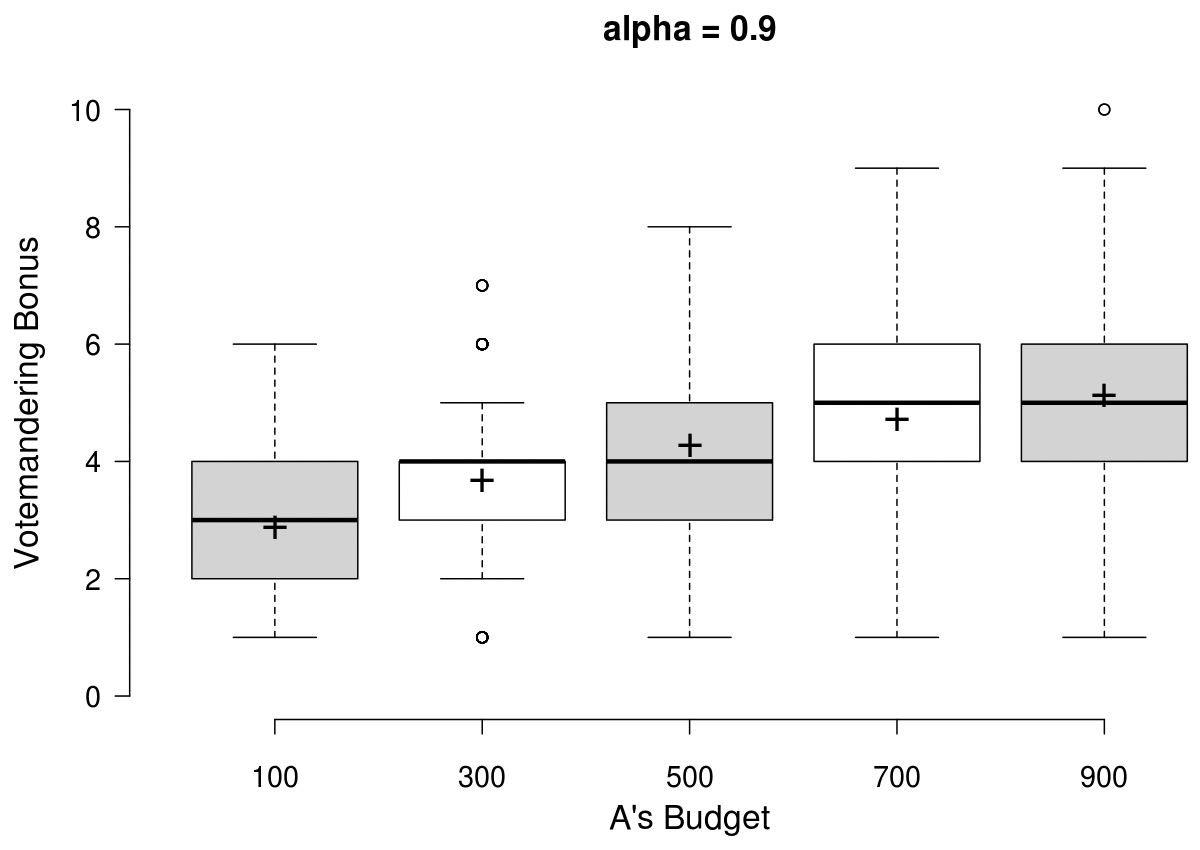}
         \caption{$\alpha$ = 0.9}
         \label{fig: seatsvsalpha10.9}
     \end{subfigure}
   \caption{Increase in A's votemandering bonus with voter turnout}
    \label{fig: seatsvsalpha1}
\end{figure} 

Intuitively, the ability to votemander is a function of how efficiently and thus also, how disproportionately we can allocate budget across the units. The parameter $\alpha$ captures the natural voter turnout and the selective campaigning by a party strategically adds more party votes, bounded above by the natural vote share of that party in every unit. Hence, it is straightforward to see that an increase in $\alpha$ would provide less flexibility to votemander, and will more accurately represent the true vote share, putting less weight on the campaigned votes. We rigorously show this in effect in Figure \ref{fig: seatsvsalpha1} where we plot the objective with respect to increasing $\alpha$. 
With both the means and medians shifting downwards with increasing $\alpha$, this demonstrates that a higher voter turnout supports a better representation of social choice through not only higher volume and election credibility, but also through disallowing political parties to votemander.

\subsubsection{Impact of spatial autocorrelation of voters (Moran's I)} 
\begin{figure}
    \centering
    \includegraphics[width=7cm]{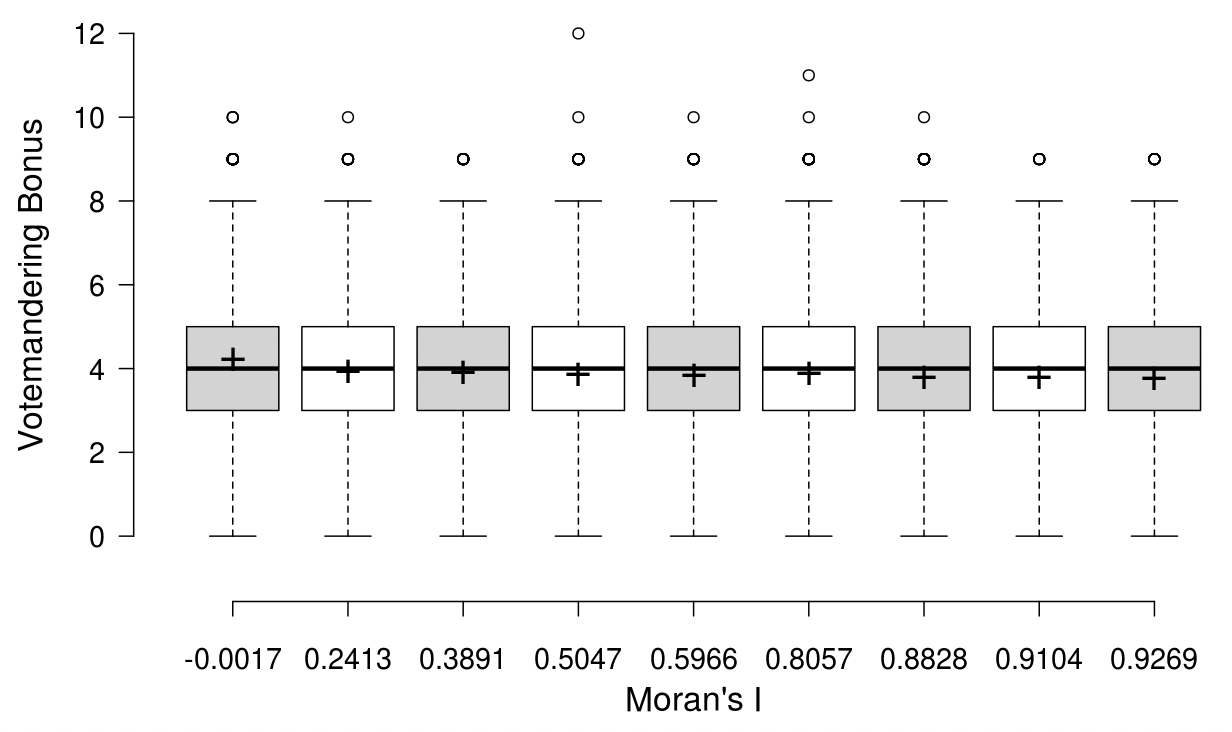}
    \caption{Votemandering bonus with increasing Moran's I}
    \label{fig: moranI}
\end{figure}
One may question if the current experimental setting of a geographically uniformly spread voter population is reasonable, as we often see clusters of societies divided across political and geographic lines. In this experiment, we show that clustering of this data does not have a very significant effect on the votemandering bonus. To demonstrate this, we use a popular spacial auto-correlation metric called \emph{Moran's I} \citep{duchin2021political}. This is a measure of the overall clustering of the spatial data. For  $(v_1,..v_{|K|})$ as the vector of vote shares, $\bar v$ as the average vote share, $y_{ij}$ as a binary variable indicating adjacency of units $i,j$ and $Y = \sum_{i,j}^{|K|} y_{ij}$ as the number of total adjacencies, Moran's I is defined as 
\begin{equation}
I = \frac {|K|} {Y} \frac {\sum_{i=1}^{|K|} \sum_{j=1}^{|K|} y_{ij} (v_i-\bar v) (v_j-\bar v)} {\sum_{i=1}^K (v_i-\bar v)^2}    
\end{equation}
Moran's I is usually used to measure the segregation of geospatial data and it varies between [$-1,1$] with -1 indicating anti-segregation, 0 with no segregation, and 1 with extreme segregation.  The randomly generated voter patterns used in Section \ref{subsec: budget}, \ref{subsec: compactness}, and \ref{subsec: turnout} produce Moran's I values in the range $(-0.01, 0.01)$. For the current set of experiments, we generate voter distributions with varying Moran's I values, group those into nine bins, and plot the votemandering bonuses in Figure \ref{fig: moranI}. 
Note that the overall party vote-shares are kept around the same value while increasing the clustering of data. 
The insensitivity of the votemandering bonus to variations in Moran's I 
supports the experimental design decision of keeping Moran's I near zero 
when varying the other redistricting factors 
in Sections \ref{subsec: budget}, \ref{subsec: compactness}, and 
\ref{subsec: turnout}. 

In summary,  campaign budgets affect the votemandering bonus with diminishing marginal improvements,  high voter turnout and stricter compactness bounds curtail votemandering, 
and voter clustering patterns have little effect on votemandering. Moreover, these factors together help establish the robustness of votemandering objectives to smaller uncertainties in voter-data. For the strategist party, assuming base voter inclinations at the lower end of their confidence intervals is sufficient to devise votemandering strategies. While access to $B$'s budget allocation information is essential, its slow rate of effect helps if the information is stochastic. In further continuation with the discussion in Section  \ref{subsec: budget}, it is worth emphasizing that the fusion of gerrymandering and strategic campaigning separates the effects of pure campaigning strategies from votemandering strategies, enabling a more robust dependence of budgets and voter inclinations on votemandering.

\section{Local Votemandering} 
\label{sec:local_vm}
The votemandering methods in Section \ref{sec:methodology} allow district lines in the target map to deviate significantly from those in the initial map. However, in practice, considerations are made for maintaining the original community boundaries (e.g., retaining  majority/minority districts) as well as structural boundaries (e.g., disallowing county splits) while drawing new district plans on the ground.
Some states also demand that 
redistricting plans remain close to the existing plan. 
For example, 
Nebraska requires the new plan to 
``preserve the cores of prior districts'' 
\citep{nebraska}, and 
the Wisconsin Supreme Court issued 
a similar ``least-change'' order 
for the 2020 cycle \citep{wiscsc}.

Inspired by these requirements, Section \ref{sec: localmethods} introduces a \emph{local votemandering heuristic} that conducts a local search within smaller map sections. This approach generates new plans that satisfy local proximity requirements while maintaining global fairness and budget constraints. The heuristic aims to maximize the votemandering objective, offering insights into crucial votemandering strategies as detailed in Section \ref{sec: localstrategies}. Although local votemandering provides a lower bound on global votemandering performance, it is a faster method for achieving a positive votemandering bonus. Additionally, it is parallelizable and thus scales better with increasing state sizes. Section \ref{sec: localcharacteristics} further explores the relationship between local and global votemandering and emphasizes the significance of key strategies in understanding this connection.

\subsection{Local Votemandering Methods} \label{sec: localmethods}

The local votemandering heuristic generates new target plans by applying small changes, or \textit{local boundary perturbations}, to existing plans between pairs of neighboring districts. These perturbations involve exchanging units between two districts while satisfying external proximity and redistricting requirements. Unlike the top-down approach in Section \ref{sec:methodology} (hereafter referred to as the global votemandering heuristic), this bottom-up approach strategically employs local perturbations to increment the votemandering bonus by one. A district adjacency graph is formed, with edges that have a potential positive bonus—containing at least one district with initial status $L$—assigned weights representing (budget, fairness) costs associated with the perturbations.

The ultimate objective is to maximize the overall bonus by finding a maximum-sized matching that adheres to budget and fairness constraints. The resulting target plan is created by applying perturbations corresponding to the maximum matching. Matchings are utilized because they enable mutually exclusive perturbations between district pairs.

The new target plan remains similar to the initial plan in terms of unit-to-district assignments, with alterations only involving districts that expect positive bonuses. Since matchings are used, the difference is quantified as, at most, $\frac{n}{2}$ independent recombination steps away from the initial map, where $n$ is the number of districts. The heuristic considers perturbations only between district pairs to avoid the complexity of perturbations within an arbitrary number of districts, to create plans closer to the initial plan, and to maintain the highest improvement ratio over the number of perturbed districts. Although it is technically feasible to generalize the heuristic to consider perturbing three or more districts simultaneously, the current approach focuses on pairs.

Generating bonuses within submaps—regions restricted to two adjacent districts—exposes the functionality of votemandering's key strategies, as illustrated in Figure \ref{fig: threestrategies}. Recall the two general votemandering ways discussed in Section \ref{sec: vmability}.  Stemming parallel to those, these strategies encompass securing a first-round win in the campaigned map and either of the second-round wins in the target map: with or without winning in the votemandered map. Depending on the initial map, any of the key strategies could be more cost-effective in terms of budget and/or fairness (or be infeasible). Given cost information about edges, we first formulate an optimization framework that produces the maximum-matching solution and subsequently discuss the formation and cost computation of the key strategies in Section \ref{sec: localstrategies}.



\begin{figure}
    \centering
    \begin{subfigure}[b]{0.45\textwidth}
         \centering
   \includegraphics[width=5cm]{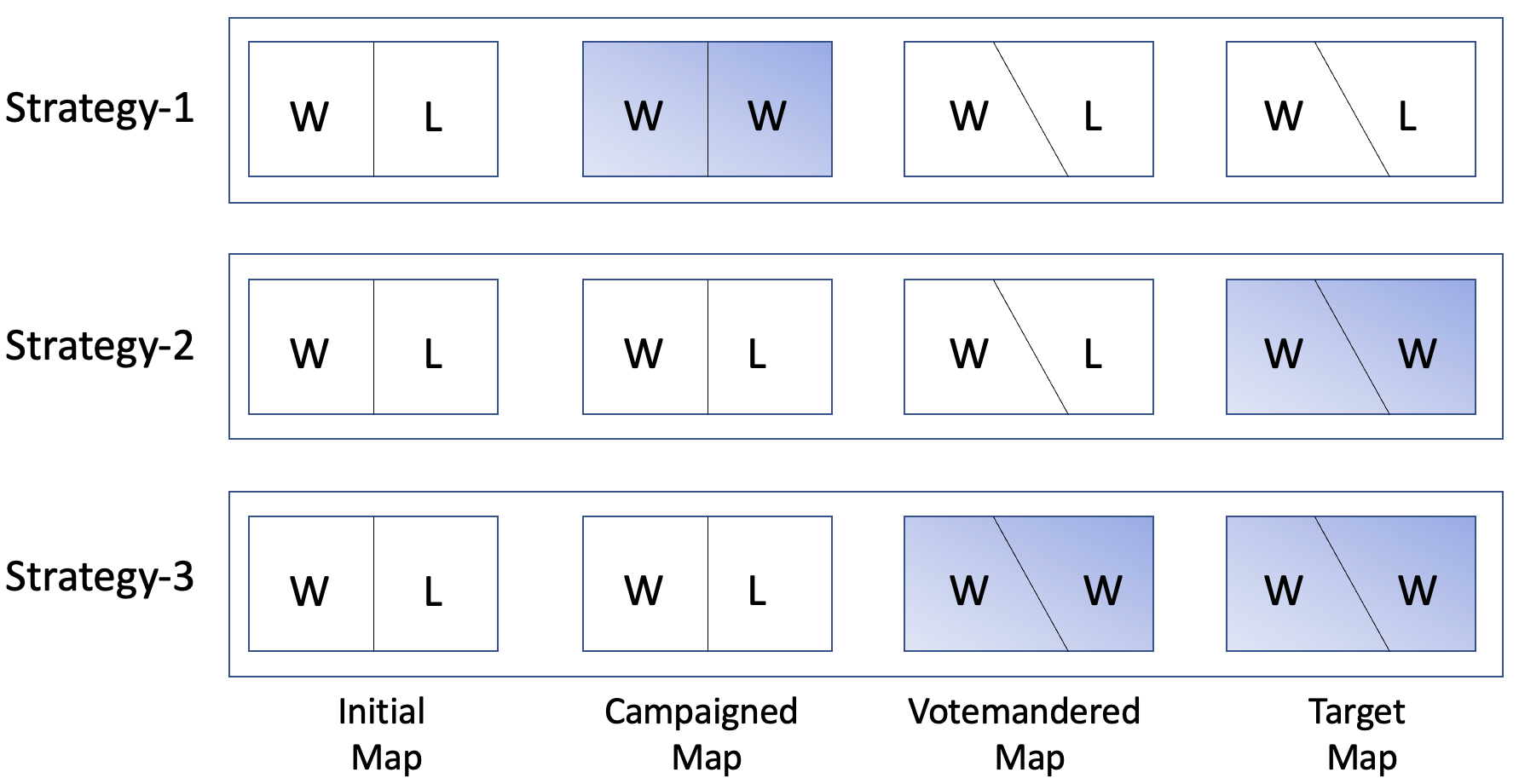}
         \caption{Between neighboring districts with $W,L$ status}
     \end{subfigure}
     \hfill
     \begin{subfigure}[b]{0.45\textwidth}
         \centering
     \includegraphics[width=5cm]{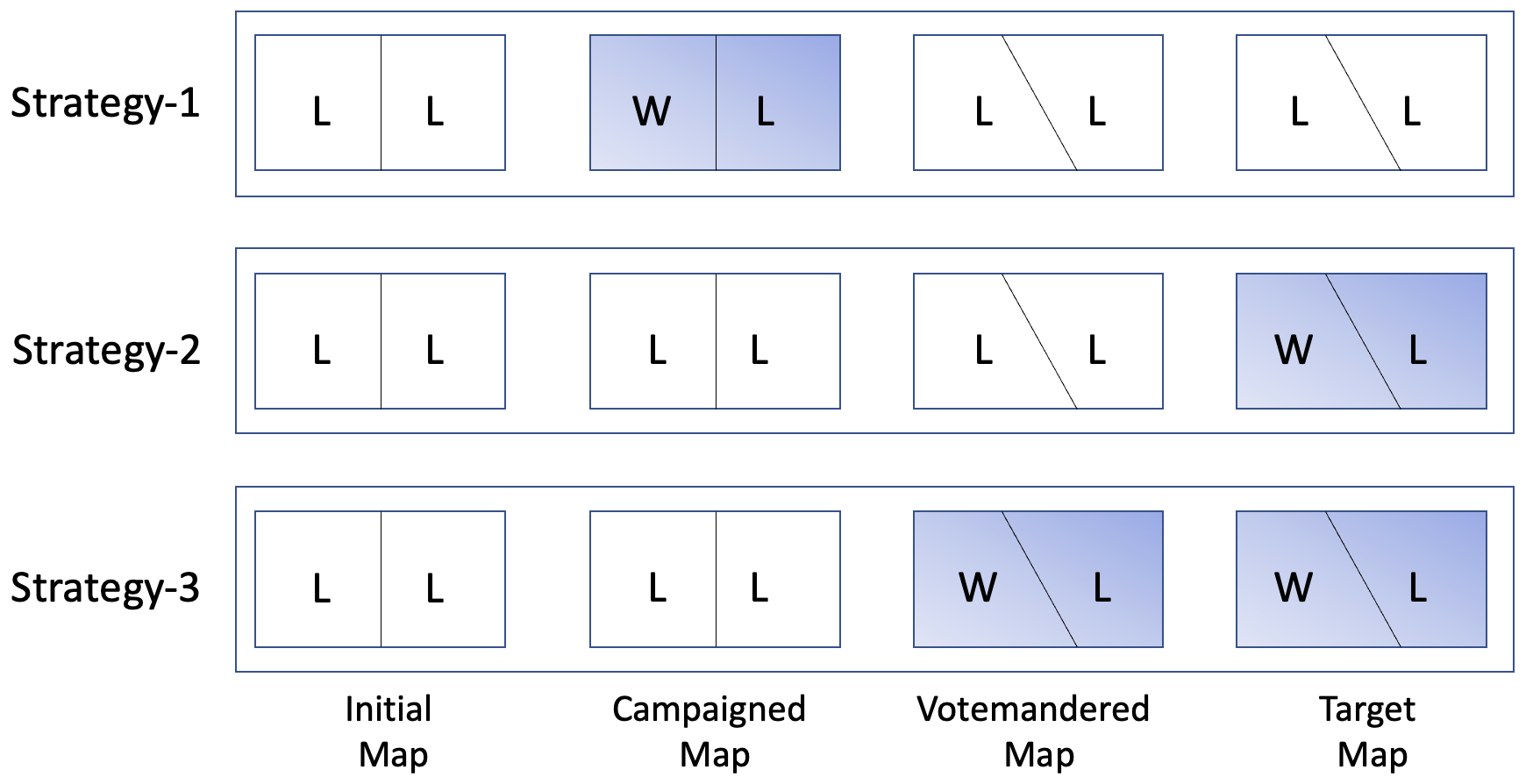}
         \caption{Between neighboring districts with $L,L$ status}
     \end{subfigure}
  \caption{The three key strategies to locally improve the votemandering bonus. The shaded districts show an improvement in the number of wins of the strategist party.}
    \label{fig: threestrategies}
\end{figure} 

\noindent
\subsubsection{An optimization framework for local votemandering.}
As illustrated above, the objective is to select the maximum number of mutually exclusive edges while adhering to the fairness and budget constraints. An edge in the optimal solution may correspond to any of the three strategies depicted in Figure \ref{fig: threestrategies}. Without loss of generality, we assume that the budget is never sufficient to win in all districts in round-1; otherwise, finding efficient votemandering strategies becomes trivial.

Let $n$ represent the set of nodes in the district adjacency graph, and let $E$ denote the set of edges. Let $E_{n_i}$ represent a set of edges (including those corresponding to all three strategies) incident on node $n_i\in n$. Let $b_e, f_e$ denote budget and fairness costs associated with an edge $e$, and let $b$ represent the budget spent to satisfy fairness constraints. Since we do not have a budget to win all districts, $b$ can always be spent on losing districts, contributing $3b/2$ to the fairness cost, as Lemma \ref{lem: tabvotes} demonstrates. The MIP formulation for the local votemandering heuristic can then be expressed as:

{\allowdisplaybreaks \begin{align}
      \max \  \sum_{e \in E} x_e \notag \\
        s.t. \ \ 
       &  \sum_{e\in E} b_e x_e + b \leq \mathcal{B}^A
        \notag \\
         & \sum_{e \in E} f_e x_e - \frac{3}{2}b \leq \ \text{ Fairness cost } \notag \\
        & \sum_{e \in E_{n_i}} x_e \leq 1 & \forall n_i \in n \notag \\
       &  x_{e} \in \{0,1\} &  \forall e \in E \label{opt: local}
\end{align}}

Note that \eqref{opt: local} is equivalent to finding a maximum cardinality matching with two knapsack constraints. Although \eqref{opt: local} is computationally challenging, it significantly simplifies the variable space by discarding unit-specific variables in \eqref{opt:votemander_mip}. In practice, the adjacency graph instance has $|n|$ nodes and only those edges with a positive bonus, making the instance sparse and the problem tractable. We provide an example of this heuristic in Appendix \ref{app: 4a}.


\subsection{Taxonomy of Local Votemandering Strategies} \label{sec: localstrategies}
As shown in \eqref{opt:general_votemander}, securing a win in round-1 is only possible through budget investment, while winning in round-2 can only be achieved through target plan design, which considers the original vote shares. The fairness constraint is relevant because the new plan must satisfy it on the votemandered map, factoring in the vote shares after investments. A strategist party employs the strategies in Figure \ref{fig: threestrategies} to generate bonuses using its budget and redistricting abilities:

\begin{enumerate}
\item Strategy-1: Secure a votemandering bonus with an extra win only in the campaigned map, using $A$'s budget investment. The edge weight vector is of the form: [significant budget, insignificant fairness cost]. The implementation bottleneck is the budget needed for winning (i.e., the vote margin for $L$ district) and finding perturbations that maintain the same $W/L$ status in the votemandered map.
\item Strategy-2: Secure an extra win only in the target map using $B$'s budget investment. The edge weight vector is: [no budget, insignificant fairness cost]. The bottleneck is identifying perturbations that enable $B$ to win in the votemandered map with a margin smaller than its investment, allowing $A$ to win in the target map with original vote shares.
\item Strategy-3: Use boundary perturbations to secure an extra win in the target map while also achieving an additional win in the votemandered map. The edge weight vector is: [insignificant budget, significant fairness cost]. The bottleneck is the fairness cost resulting from the extra win in the votemandered map.
\end{enumerate}

Other strategies, where $A$ wins in a different set of maps than shown in Figure \ref{fig: threestrategies}, may be considered. However, these are expensive and dominated by the key strategies when feasible. For example, winning in both campaigned and votemandered maps incurs significantly higher fairness costs than strategy-1 while providing the same bonus. Winning in both campaigned and target maps combines strategies 1 and 2, making it less likely than either. Therefore, for practical purposes, we only illustrate the three key strategies here and note the generalization of considering additional edge types in \eqref{opt: local}.

Consequently, up to three edges may exist between any two districts, each associated with a specific weight vector. Strategy-2 is particularly important because it only uses information about party $B$'s budget and virtually imposes no cost on party $A$. More details and exact calculations of edge weights can be found in Appendix \ref{app: 4b}.

To determine the optimal edge weights as inputs for \eqref{opt: local}, i.e., costs $b_e, f_e$ for all three strategies, optimal boundary perturbations must be specified while adhering to redistricting constraints such as compactness, contiguity, proximity, and population balance. To achieve this, we employ the randomized recombination technique once again. As with the global heuristic, we separate the problems of finding local perturbations and optimizing costs by creating a pool of plausible submaps for each edge.

For each strategy, we can now examine this pool and select the best edge—i.e., the perturbations leading to a new pair of districts with minimal costs. Finally, given a district adjacency graph, we add up to three edges for each pair of neighboring districts, with costs equal to those of the best pairs from the corresponding pool.


\subsection{Connection to Global Votemandering} \label{sec: localcharacteristics}

As discussed in Section 3, the votemandering bonus $\Delta$ describes how effectively we can votemander. Eq. \eqref{eq: VMbonus} further divides $\Delta$ into three sub-parts.

\begin{align}
\Delta & = \election(\districtplan_0,\tilde{\voterballot}) + \election(\tilde{\districtplan},\voterballot_0) - 2 \election\Paren{\districtplan_0,\voterballot_0} \notag \\
& = [\election(\districtplan_0,\tilde{\voterballot}) - \election\Paren{\districtplan_0,\voterballot_0}]  + [\election(\tilde{\districtplan},\voterballot_0) - \election(\tilde{\districtplan},\tilde{\voterballot}] +
[\election(\tilde{\districtplan},\tilde{\voterballot}) - \election\Paren{\districtplan_0,\voterballot_0}] \label{eq: VMbonus}
\end{align}

The first part of Eq. \eqref{eq: VMbonus}, i.e., the bonus resulting from the difference between the number of wins in campaigned and initial maps, depends on the optimal budget allocation of $A$. The second part, i.e., the difference between the target and votemandered maps, depends on $B$'s budget allocation. The third part, i.e., the difference between votemandered and initial maps, is essentially bounded as only a few discrete EG values are acceptable for the maps, meaning that the votemandered and initial maps can only have a small difference in their number of wins \citep{tam2017measuring}.

Comparing $\Delta$ for global maxima to when the space is restricted for local votemandering, the bonus breakdown highlights the exact three areas where the local heuristic operates (approximating the optimal) through its key strategies. This also suggests that the efficiency of votemandering can be explained using the generalized versions of key strategies, i.e., without necessarily restricting to two districts. Considering the efficiency of local votemandering, note that its produced optimal plan is also achievable by the global heuristic if the plan is present in its pool of maps. In fact, relaxing its more-than-sufficient matching constraints also indicates a possibility of improvement. Despite this, the deeper and targeted local search between every pair of neighboring districts better explores the existence of key strategies and may produce outlier plans more effectively. Moreover, it works precisely by exploiting the information of the initial map. As seen in Section \ref{sec:case_study}, the local votemandering heuristic works efficiently, even outperforming the global heuristic in one case.

Eq. \eqref{eq: VMbonus} also explains why increasing compactness leads to a lower objective and, consequently, a decrease in the ability to votemander. Using the notion of cut edges, higher compactness means fewer cut edges, i.e., a smaller shared boundary between two districts. The efficiency of votemandering depends on the ease of unit exchange across borders via strategies 1 and 2, and more generally, via using Lemma \ref{lem: tabvotes} implications to invest to win in a campaigned map and re-structure to shift this investment to a losing district in the votemandered map. This implies a relationship between the number of cut edges and the ease of votemandering via its key strategies: With a mandated fewer number of cut edges between any two districts, it is more challenging to find units that can be exchanged, resulting in a positive votemandering bonus.

In conclusion, while the global heuristic bypasses the matching constraints, the local heuristic provides an efficient search at the district level, generating plans that closely resemble the original ones. The local search in submaps is entirely parallelizable, potentially yielding much faster results compared to the global heuristic as instance sizes grow. In practice, this runs very quickly. After performing the local search for all edges, the corresponding adjacency graph of the initial map incorporates budget and fairness bounds as variable inputs in the optimization program \eqref{opt: local}. This allows for a more targeted and efficient approach to votemandering, taking advantage of the districts in the initial map.

\section{Case Study: Wisconsin State Senate Redistricting after 2020}
\label{sec:case_study}
This section demonstrates the existence of practical votemandering strategies using Wisconsin state senate redistricting after the 2020 census.  The redistricting cycle was delayed due to lawsuits, prompting the Wisconsin state legislature and the governor's People's Maps Commission to propose state and congressional district plans. The Wisconsin Supreme Court eventually approved the state senate and house maps drawn by the legislature \citep{ballotpedia}.

Wisconsin's balanced partisan composition provides a suitable environment for exploring votemandering's practical potential. The state senate has 33 seats, with approximately half up for election every two years. The Republican party (R) controls the state senate with a 21-11 majority (excluding one vacancy) and had a $51.1\%$ statewide senate election vote share in previous elections. Conversely, the Democratic party (D) won the 2020 presidential election in Wisconsin by a $0.63\%$ margin, and the state's governor is a Democrat. The Democratic governor's veto power over redistricting proposals from the Republican state legislature ensures that both parties influence the final state senate map. Additionally, the Wisconsin state constitution mandates that districts be compact, contiguous, and "bounded by county, precinct, town, or ward lines where possible" \citep{wiscconsti}.

Throughout this section, we consider the 2021 governor's office final state senate plan (\govmap) as the initial map. Deemed fair by the Princeton Gerrymandering Project with a score of A (\citeyear{wiscprinceton}), \govmap\ serves as a reasonable starting point for our votemandering case study, despite not being enacted.

The \govmap\ plan incorporates state senate election data from 2018 and 2020 (for 17 and 16 seats, respectively). According to the 2020 census, the ideal population for each senate district is 178,598. We assume a statewide voter turnout of $65\%$, the average from the 2018 and 2020 elections. Both parties receive a budget to influence 13,734 voters, constituting a $1\%$ total investment of the expected votes cast. The votemandering party strategizes its campaign investment, while the other party is assumed to allocate its investment proportionally to unit populations.

Section \ref{subsec:wi_global} investigates global votemandering effects for each major party, revealing that although both parties can benefit, the Democratic party achieves a more substantial votemandering bonus given \govmap\ as the initial map. Section \ref{subsec:wi_local} illustrates local votemandering strategies that better align with the Wisconsin Supreme Court's goal of minimizing changes to the previous district plan. Both parties continue to benefit within this restricted strategy space, and the votemandering bonus even increases for one party.

\subsection{Wisconsin Global Votemandering}
\label{subsec:wi_global}

\begin{figure}
    \centering
    \begin{subfigure}[b]{0.48\textwidth}
         \centering
        \includegraphics[width=4.5cm]{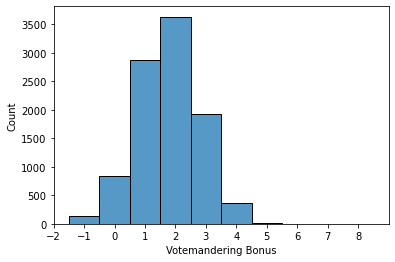}
         \caption{Republican Party Votemandering}
         \label{fig: rephist}
     \end{subfigure}
     \begin{subfigure}[b]{0.48\textwidth}
         \centering
        \includegraphics[width=4.5cm]{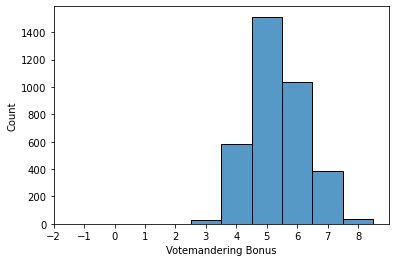}
         \caption{Democratic Party Votemandering}
         \label{fig: demhist}
     \end{subfigure}
    \caption{Distribution of Votemandering Objectives}
    \label{fig: histograms}
\end{figure} 
\begin{figure}
    \centering
   \includegraphics[width=5cm]{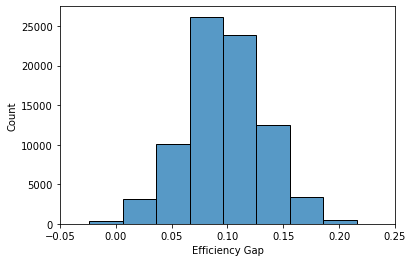}
    \caption{Distribution of Efficiency Gap measure over the pool of maps}
    \label{fig: EGhistogram}
\end{figure}

The global votemandering heuristic expounded in Section \ref{sec:methodology} 
is applied to Wisconsin state senate redistricting, using a pool of
80,000
candidate target maps generated via recombination. 
Figure \ref{fig: histograms} depicts 
the pool's distribution of votemandering bonus 
from each party's perspective. 
Each histogram shows the number of maps 
yielding a specific votemandering bonus 
for the votemandering party 
when fixed as the target map. 
The EG of \govmap, $0.1409$, 
indicates more wasted votes for the Democratic party, 
giving the Democratic party more room to gain seats 
while maintaining a safe EG. 
Hence the Democratic party tends to achieve 
a greater votemandering bonus
than does the Republican party, i.e., 
the distribution in Figure \ref{fig: demhist} 
is shifted to the right of 
the distribution in Figure \ref{fig: rephist}. 
Note the relatively high EG of \govmap\  
does not necessarily imply the map is a partisan gerrymander. 
Figure \ref{fig: EGhistogram} shows the distribution of EG for the pool of maps. 
Due to the spatial distribution of voters in Wisconsin, 
recombination tends to generate maps with positive (i.e., Republican-leaning) 
EG values.
Based on the EG value of \govmap\ and 
the pool's EG distribution, 
we impose a fairness bound of $0.0 \le \text{EG} \le 0.15$ 
throughout the case study. 

\begin{table}[ht]
\centering
\small
\caption{Wisconsin Global Votemandering Characteristics: Republican and Democratic Parties}
\label{tab: casestudy}
\begin{tabular}{|l|cc|cc|}
\hline
\multicolumn{1}{|c|}{Majority party} & \multicolumn{2}{c|}{Republican Votemandering}        & \multicolumn{2}{c|}{Democratic Votemandering}        \\ \hline
\multicolumn{1}{|c|}{}               & \multicolumn{1}{c|}{Number of Wins} & Efficiency Gap & \multicolumn{1}{c|}{Number of Wins} & Efficiency Gap \\ \hline
Initial Map                          & \multicolumn{1}{c|}{21}             & 0.1409         & \multicolumn{1}{c|}{12}             & 0.1409         \\ \hline
Campaigned Map                       & \multicolumn{1}{c|}{24}             & 0.2314         & \multicolumn{1}{c|}{15}             & 0.0468           \\ \hline
Votemandered Map                     & \multicolumn{1}{c|}{21}             & 0.1285         & \multicolumn{1}{c|}{16}             & 0.0096     \\ \hline
Target Map                           & \multicolumn{1}{c|}{23}             & 0.1915       & \multicolumn{1}{c|}{17}             & -0.0195         \\ \hline
\end{tabular}
\normalsize
\end{table}

To show an illustration of how these votemandering objectives are attained, we next describe optimal target maps with both parties as strategists. Table \ref{tab: casestudy} shows the characteristics of votemandering stages  specific to our examples. Figures \ref{fig: WI1}-\ref{fig: WI2} visually show the initial and target maps, strategic investment of the Republican party, and the various stages of votemandering on the real state data of Wisconsin. Figures \ref{fig: WIglobal1D}-\ref{fig: WIglobal2D} show the same for the Democratic party.

For the Republican votemandering, the objective is  47 seats across two rounds, against 42 with no strategic investment. The  Democratic votemandering bonus is 8, showing a larger improvement for the Democratic party as explained above. The proposed Republican map would show an EG of 0.1285 with the previous election data (apparently fairer than the initial map), with the actual value being 0.1915. This is reflected in the difference of 2 seats between the votemandered and target map. Most interestingly, although the investment leads to 3 new seats in the campaigned map, the new plan is such that it completely negates this effect, making the votemandered map win 21 seats. 

In these cases, both parties choose outlier maps as the target maps and make the votemandered maps fair. The strategic investment for both cases is done across the map by smartly choosing units: the bottom-left part in Figure \ref{fig: investmentglobD} is largely uniform (to make a difference only in round-1), the top-half part in both Figure \ref{fig: investmentglobR} and \ref{fig: investmentglobD} is non-uniform (to only affect units that remain part of losing districts in the target map) and finally, the bottom-half part in Figure
\ref{fig: investmentglobR} is sporadic, with the right corner serving as an investment made to satisfy the fairness bound.
\begin{figure}[!ht]
    \centering
    \begin{subfigure}[b]{0.32\textwidth}
         \centering
            \includegraphics[width=3cm]{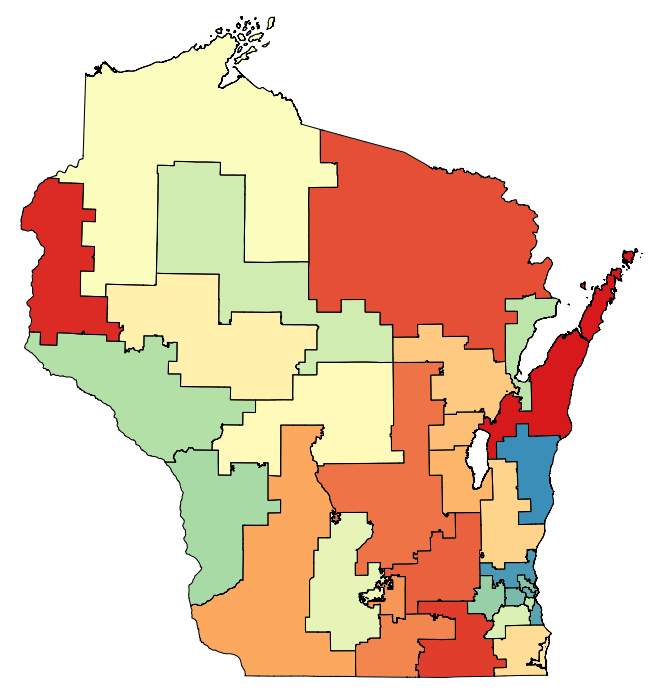}
         \caption{Initial Map}
         \label{fig: initmapglobR}
     \end{subfigure}
     \begin{subfigure}[b]{0.32\textwidth}
         \centering
        \includegraphics[width=3cm]{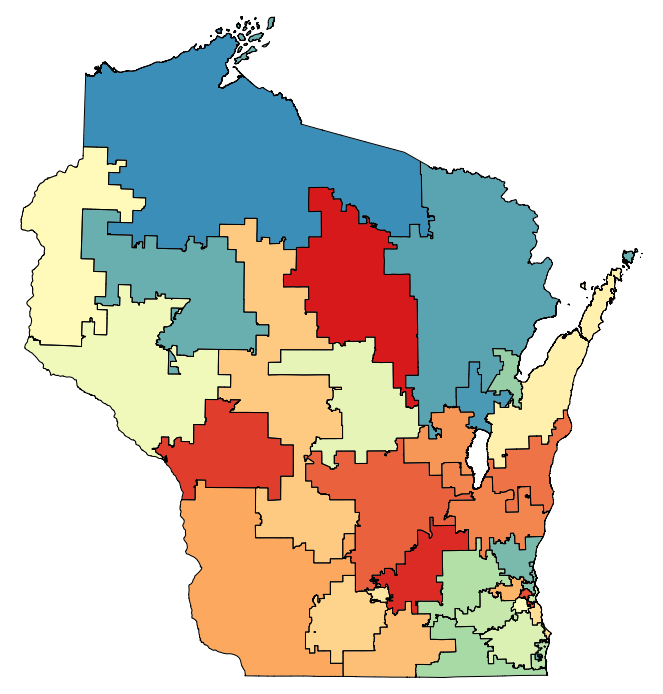}
         \caption{Target Map}
         \label{fig: targetmapglobR}
     \end{subfigure}
     \begin{subfigure}[b]{0.32\textwidth}
         \centering
        \includegraphics[width=3cm]{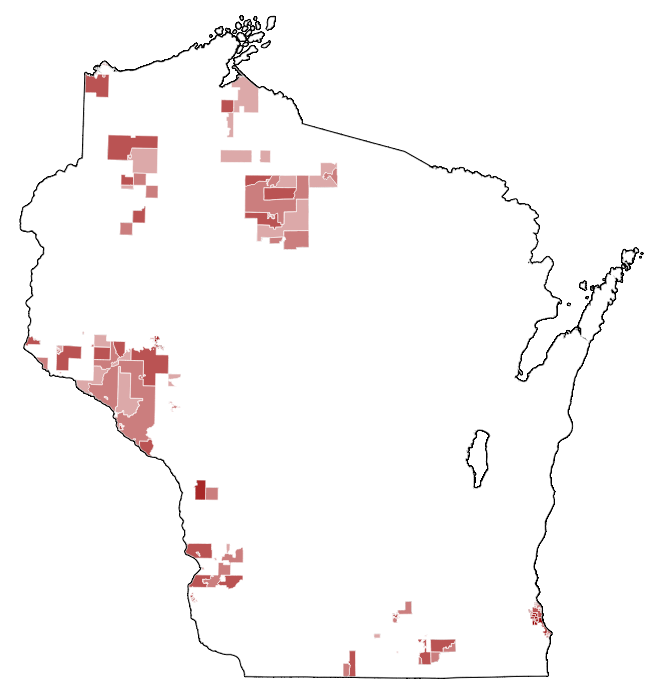}
         \caption{Investment}
         \label{fig: investmentglobR}
     \end{subfigure}
   \caption{Republican Global Votemandering: The initial map, the chosen target map, and the strategic investment of budget (with intensity  indicated by the darker color)}
    \label{fig: WI1}
\end{figure} 
\begin{figure}[!ht]
    \centering
    \begin{subfigure}[b]{0.24\textwidth}
         \centering
             \includegraphics[width=2.2cm]{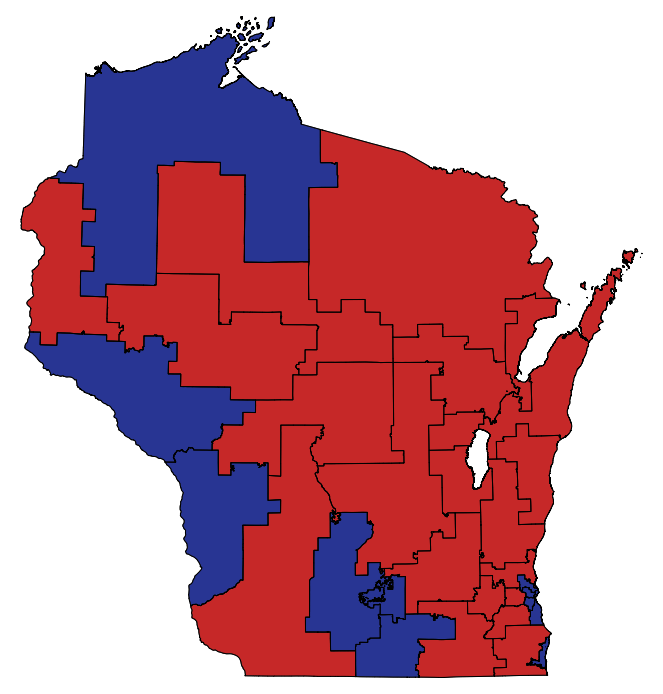}
         \caption{Initial Map}
         \label{fig: initmapglobwinsR}
     \end{subfigure}
     \begin{subfigure}[b]{0.24\textwidth}
         \centering
       \includegraphics[width=2.2cm]{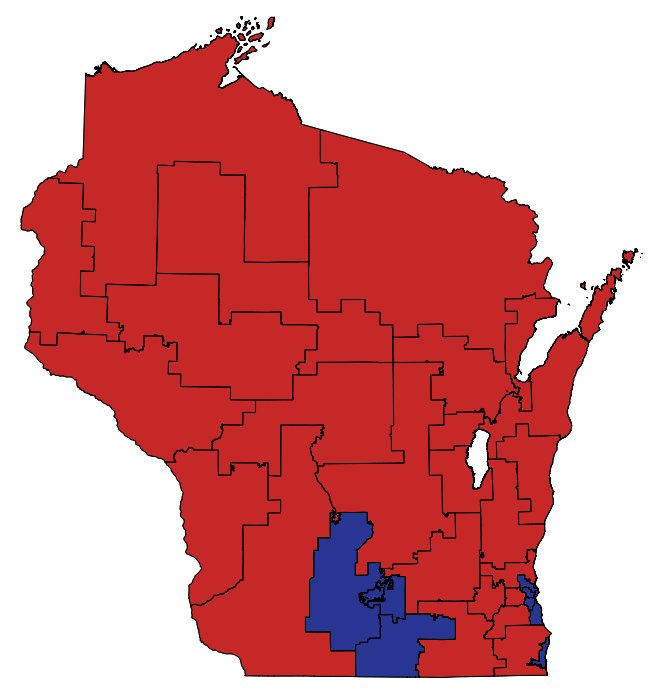}
         \caption{Campaigned Map}
         \label{fig: campaignedmapglobwinsR}
     \end{subfigure}
     \centering
    \begin{subfigure}[b]{0.24\textwidth}
         \centering
              \includegraphics[width=2.2cm]{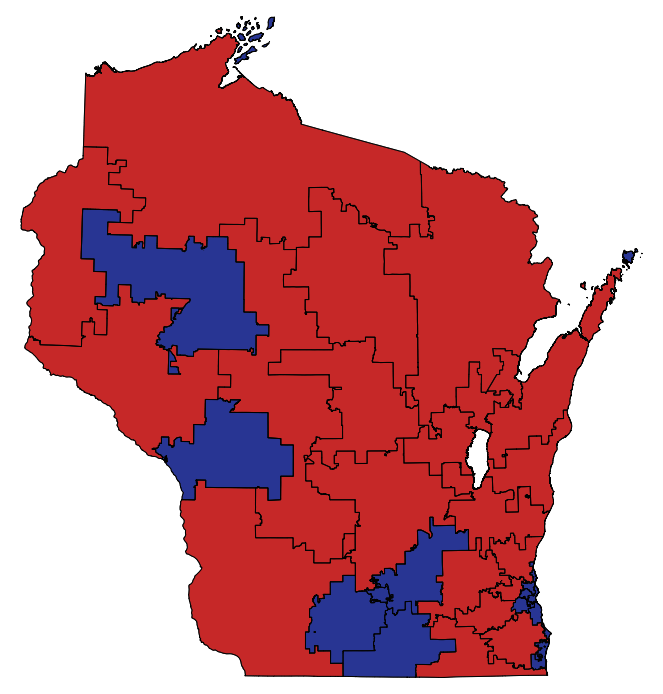}
         \caption{Votemandered Map}
         \label{fig: vmmapglobwinsR}
     \end{subfigure}
     \begin{subfigure}[b]{0.24\textwidth}
         \centering
      \includegraphics[width=2.2cm]{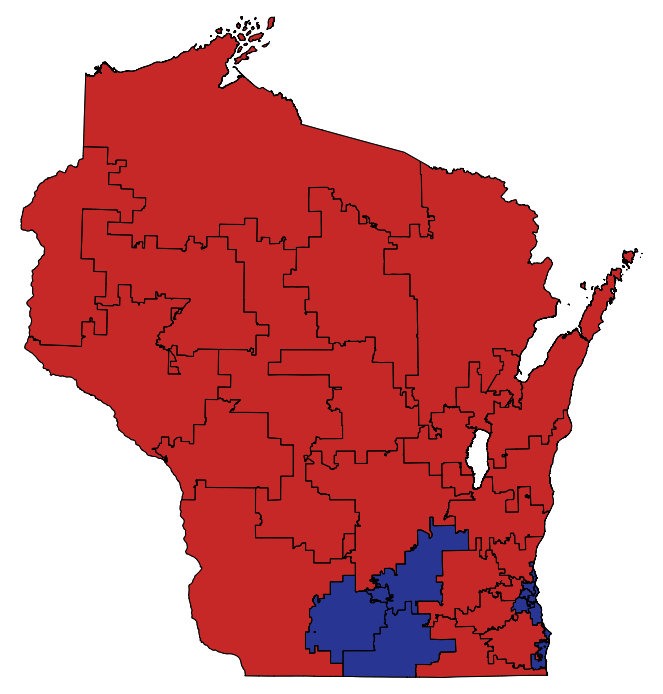}
         \caption{Target Map}
         \label{fig: targetmapglobwinsR}
     \end{subfigure}
   \caption{The Four Stages of Republican Global Votemandering, (with red and blue indicating the districts won by the Republican and Democratic parties, respectively)}
    \label{fig: WI2}
\end{figure} 
\begin{figure}[!ht]
    \centering
    \begin{subfigure}[b]{0.32\textwidth}
         \centering
            \includegraphics[width=3cm]{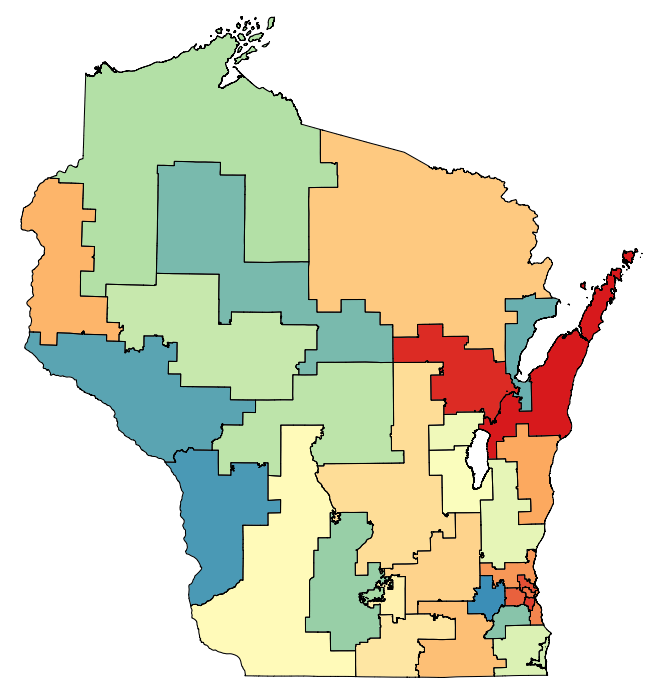}
         \caption{Initial Map}
         \label{fig: initmapglobD}
     \end{subfigure}
     \begin{subfigure}[b]{0.32\textwidth}
         \centering
    \includegraphics[width=3cm]{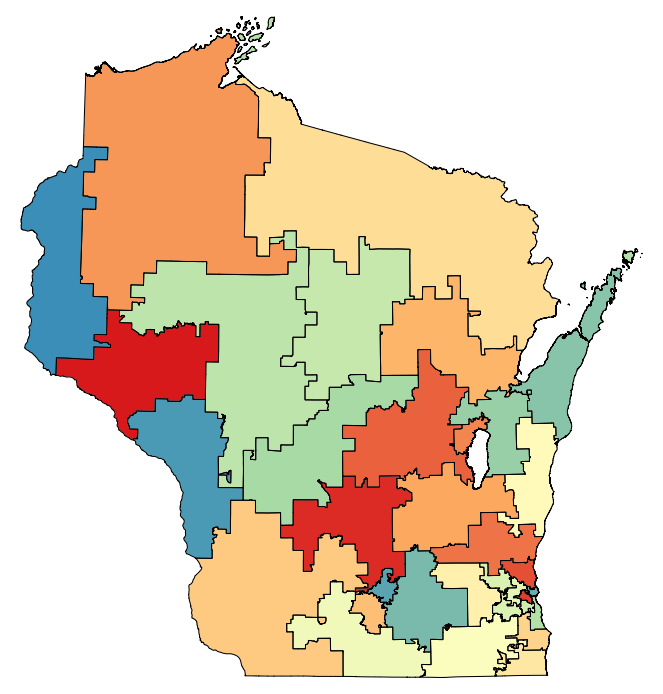}
         \caption{Target Map}
         \label{fig: targetmapglobD}
     \end{subfigure}
     \begin{subfigure}[b]{0.32\textwidth}
         \centering
        \includegraphics[width=3cm]{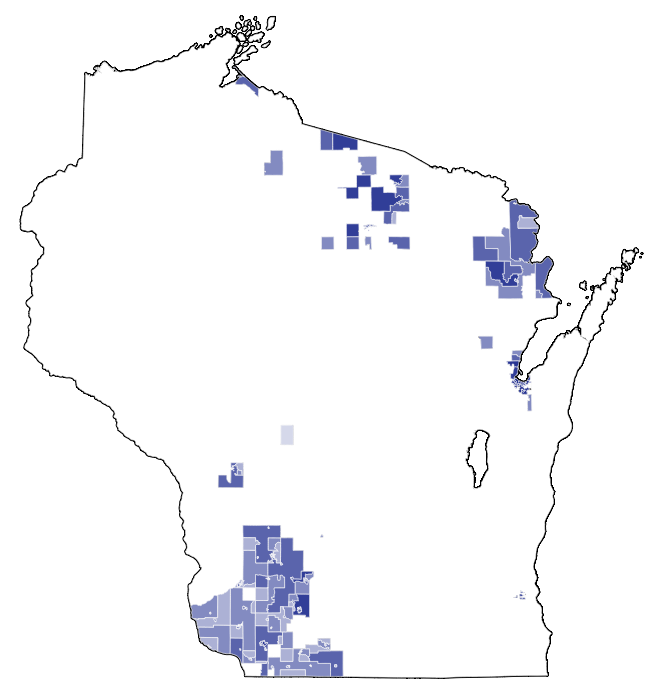}
         \caption{Investment}
         \label{fig: investmentglobD}
     \end{subfigure}
   \caption{Democratic Global Votemandering: The initial map, the chosen target map, and the strategic investment of budget (with intensity  indicated by the darker color)}
    \label{fig: WIglobal1D}
\end{figure}
\begin{figure}[!ht]
    \centering
    \begin{subfigure}[b]{0.24\textwidth}
         \centering
             \includegraphics[width=2.2cm]{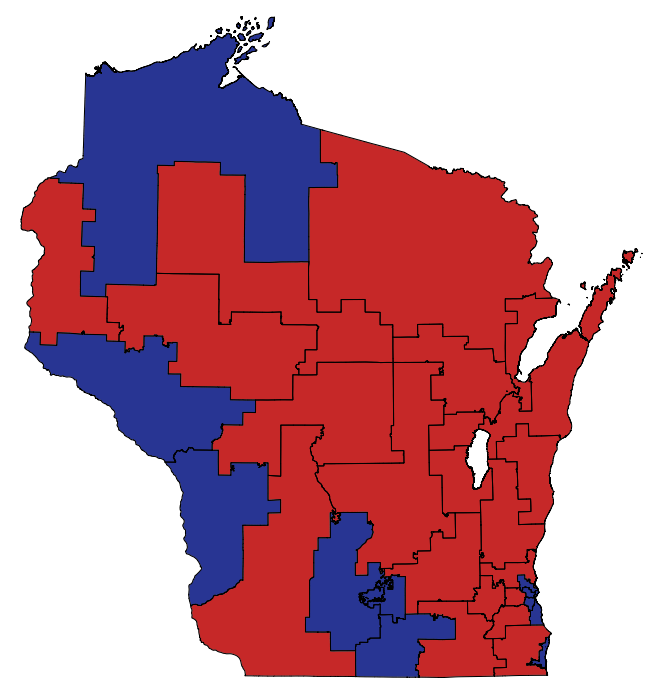}
         \caption{Initial Map}
         \label{fig: initmapglobwinsD}
     \end{subfigure}
     \begin{subfigure}[b]{0.24\textwidth}
         \centering
       \includegraphics[width=2.2cm]{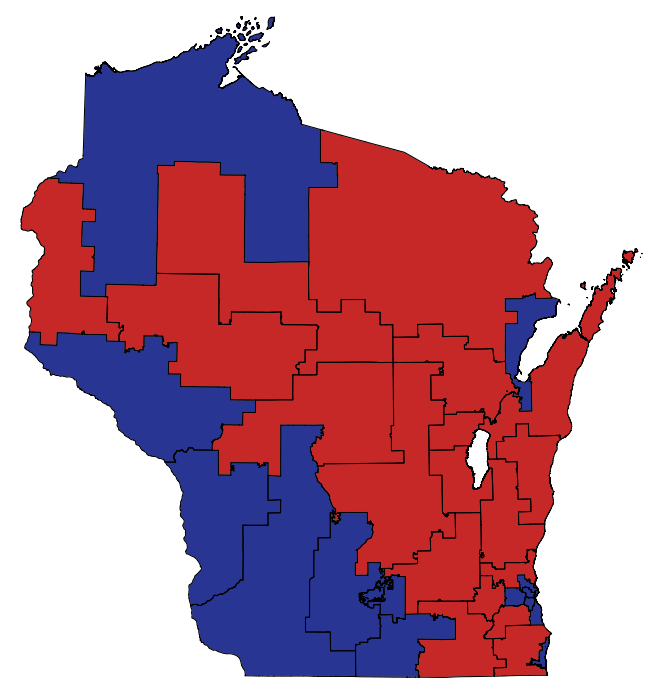}
         \caption{Campaigned Map}
         \label{fig: campaignedmapglobwinsD}
     \end{subfigure}
     \centering
    \begin{subfigure}[b]{0.24\textwidth}
         \centering
    \includegraphics[width=2.2cm]{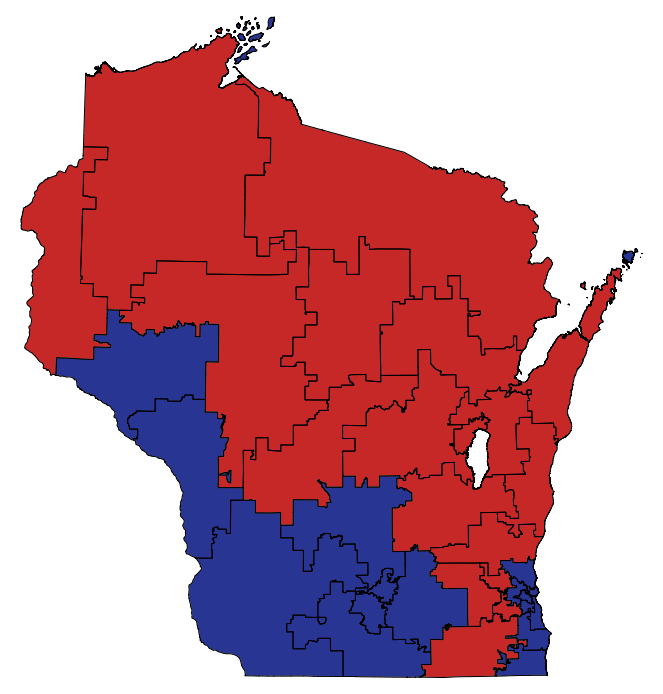}
         \caption{Votemandered Map}
         \label{fig: vmmapglobwinsD}
     \end{subfigure}
     \begin{subfigure}[b]{0.24\textwidth}
         \centering
      \includegraphics[width=2.2cm]{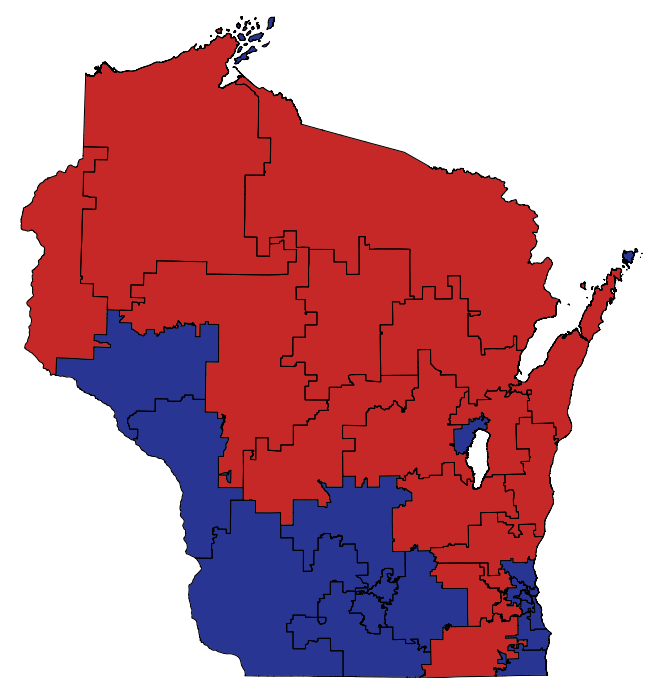}
         \caption{Target Map}
         \label{fig: targetmapglobwinsD}
     \end{subfigure}
   \caption{The Four Stages of Democratic Global Votemandering, (with red and blue indicating the districts won by the Republican and Democratic parties, respectively)}
    \label{fig: WIglobal2D}
\end{figure}

\subsection{Wisconsin Local Votemandering}
\label{subsec:wi_local}
When forced to decide the final state senate district plan, 
the Wisconsin Supreme Court announced that it would seek 
to make as few changes as possible to the existing map. 
In accordance with this goal, 
this section applies the bottom-up local votemandering heuristic 
to find strategies for both parties which produce target maps close to \govmap. 
\begin{table}[!hb]
\centering
\small
\caption{Local Votemandering Characteristics: Republican and Democratic Parties} \label{tab: casestudylocal}
\begin{tabular}{|l|cc|cc|}
\hline
\multicolumn{1}{|c|}{Majority party} & \multicolumn{2}{c|}{Republican Votemandering}        & \multicolumn{2}{c|}{Democratic Votemandering}        \\ \hline
\multicolumn{1}{|c|}{}               & \multicolumn{1}{c|}{Number of Wins} & Efficiency Gap & \multicolumn{1}{c|}{Number of Wins} & Efficiency Gap \\ \hline
Initial Map                          & \multicolumn{1}{c|}{21}             & 0.1409         & \multicolumn{1}{c|}{12}             & 0.1409         \\ \hline
Campaigned Map                       & \multicolumn{1}{c|}{21}             & 0.1478         & \multicolumn{1}{c|}{14}             & 0.1072           \\ \hline
Votemandered Map                     & \multicolumn{1}{c|}{21}             & 0.1442         & \multicolumn{1}{c|}{12}             & 0.1014      \\ \hline
Target Map                           & \multicolumn{1}{c|}{25}             & 0.2513        & \multicolumn{1}{c|}{20}             & -0.0888           \\ \hline
\end{tabular}
\normalsize
\end{table}

Both parties can gain advantages from local votemandering, although each party employs a significantly different approach. Figure \ref{fig: localgraphs} illustrates the \govmap\ district adjacency graph (\ref{fig: distadj}), the best Republican strategy discovered (\ref{fig: repmatch}), and the best Democratic strategy discovered (\ref{fig: demmatch}). Node colors represent the party with a higher vote-share in each district. Edges with nonzero weights indicate strategy-1 edges, while 0 weights correspond to strategy-2, and `FC' denotes fairness costs associated with strategy-3.

\begin{figure}[!hb]
    \centering
    \begin{subfigure}[b]{0.32\textwidth}
         \centering
    \includegraphics[width=3.5cm]{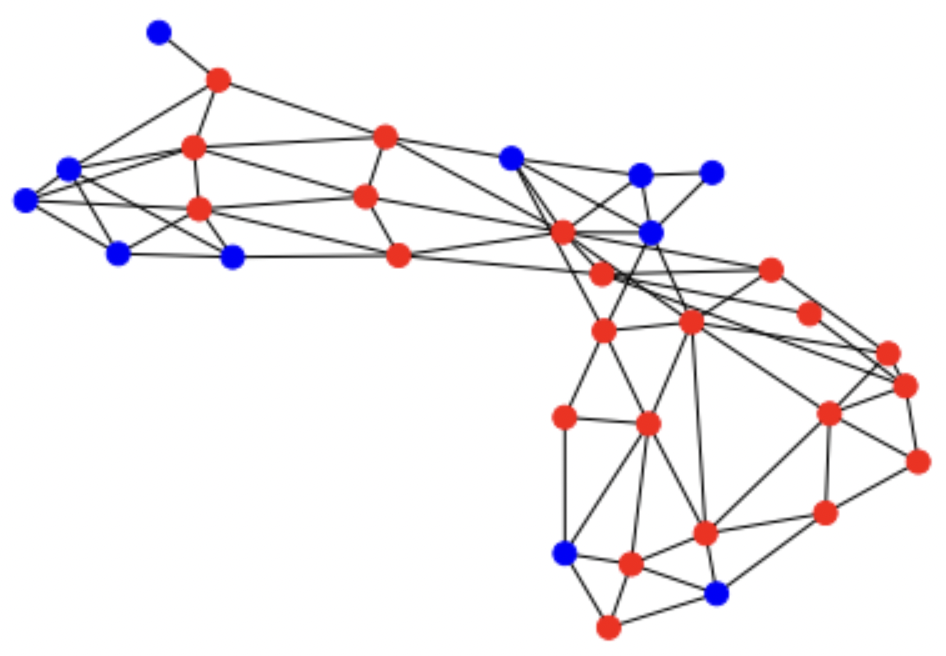}
         \caption{District Adjacency}
         \label{fig: distadj}
     \end{subfigure}
     \begin{subfigure}[b]{0.32\textwidth}
         \centering
    \includegraphics[width=3.5cm]{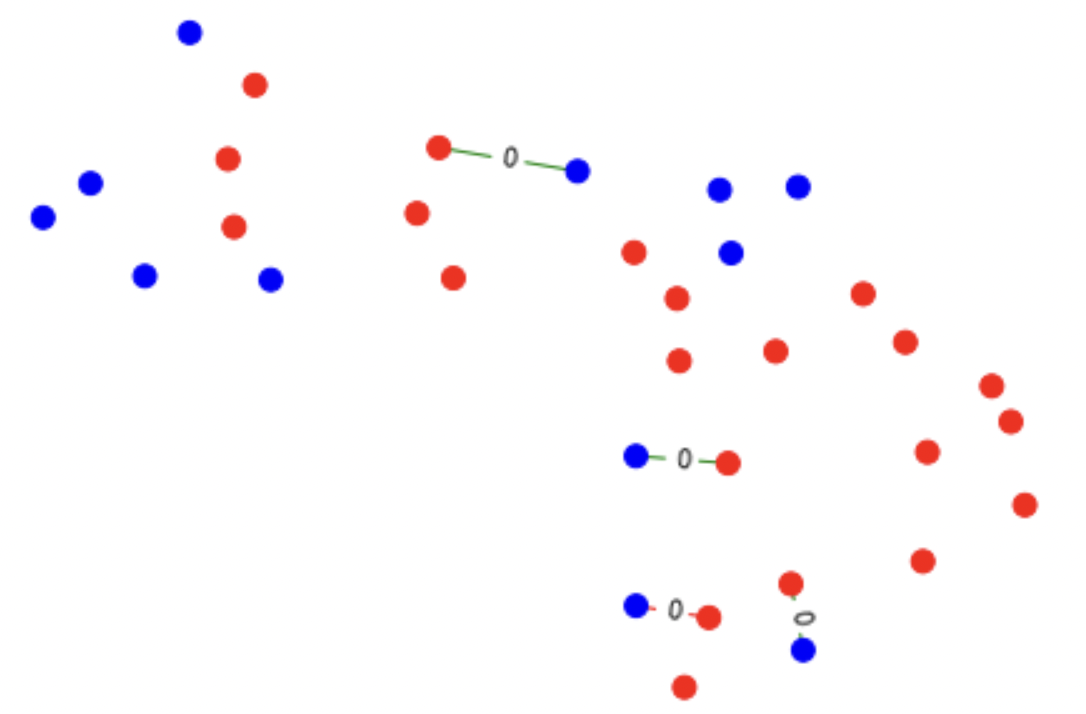}
         \caption{Republican Solution}
         \label{fig: repmatch}
     \end{subfigure}
     \begin{subfigure}[b]{0.32\textwidth}
         \centering
    \includegraphics[width=3.5cm]{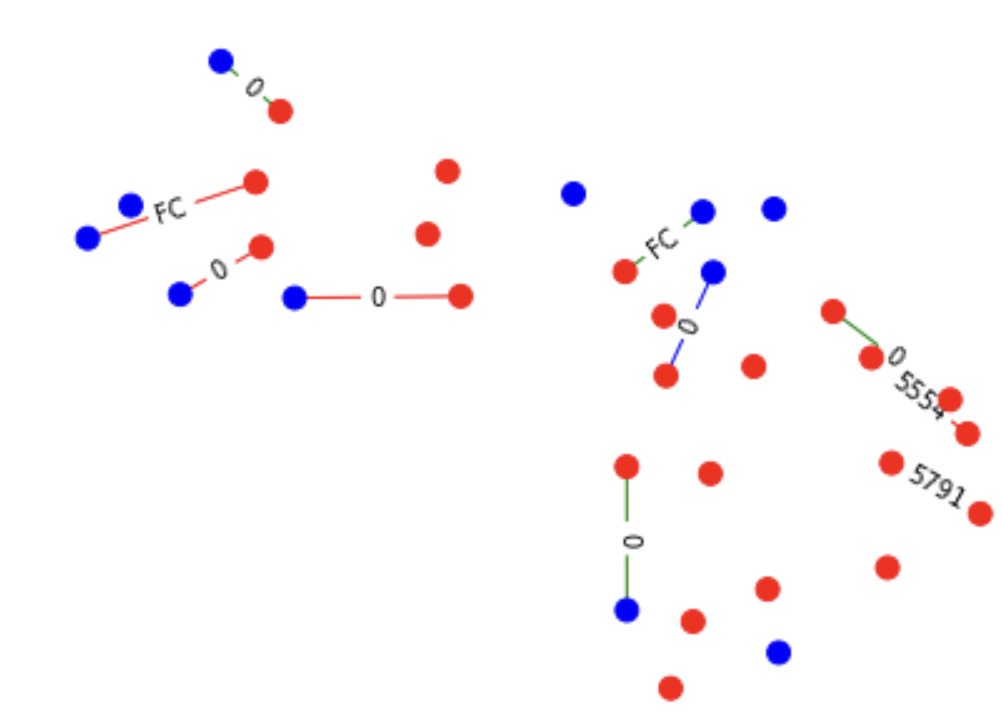}
         \caption{Democratic Solution}
         \label{fig: demmatch}
     \end{subfigure}
    \caption{Local Votemandering: The district adjacency graph and the maximum matching solutions}
    \label{fig: localgraphs}
\end{figure} 

The maximum matching solution for the Republican party (Figure \ref{fig: repmatch}) yields a votemandering bonus of four seats through strategy-2 improvements (i.e., without incurring monetary or fairness costs). Consequently, the Republicans' local votemandering bonus, which results in 46 seats across the two elections, is one seat fewer than their global votemandering bonus of 47 seats. The maximum matching solution for the Democratic party (Figure \ref{fig: demmatch}) attains a votemandering bonus of ten seats, surpassing their global votemandering bonus of eight, via two strategy-1, six strategy-2, and two strategy-3 edges. Strategy-1 edges consume a budget of 11,347 to boost Democratic voter turnout. Table \ref{tab: casestudylocal} details the map characteristics for each stage of local votemandering by each party, while Figures \ref{fig: WIlocal1R}-\ref{fig: WIlocal2D} in Appendix \ref{app: sec5} display the corresponding maps and campaign strategies.

Local votemandering strategies for both parties are not only computationally more efficient than their global votemandering counterparts but also generate target maps closer to \govmap. As described in Section \ref{sec:local_vm}, the closeness parameter is adjustable, allowing for the optimization of the bonus while producing maps as close to the initial map as desired. Notably, local votemandering relies extensively on 
strategy-2 improvements, which exploit the other party's campaign investments. Overall, the case study demonstrates 
the vulnerability of Wisconsin state senate redistricting 
to global and local votemandering.
\section{Conclusions and Future Directions}
\label{sec:conclusion}
In this study, we introduce the concept of \emph{votemandering}, a combination of strategic campaigning and gerrymandering employed to deceive fairness measures and increase seat-share across multiple elections. Focusing on the efficiency gap (EG) as a fairness metric, we establish sufficient conditions for a positive votemandering bonus (Theorem \ref{thm: existence}) and present an efficient heuristic for identifying votemandering strategies (Algorithm \ref{algo: globalvotemandering}, Proposition \ref{thm: algo}, Theorem \ref{thm: polytime}).

Through computational experiments, we investigate the impact of campaign budget, compactness, voter turnout, and spatial autocorrelation of voters on votemandering efficacy. Our findings indicate that enhancing voter turnout and compactness, parameters seemingly unrelated to partisan fairness, can potentially mitigate the influence of votemandering. A case study of Wisconsin state senate redistricting illustrates practical votemandering strategies for both parties, emphasizing its applicability beyond hypothetical scenarios and into real-world instances.

To further demonstrate the practicality of votemandering, we introduce \emph{local votemandering}, which allows the party controlling redistricting to make minor adjustments to a limited number of district boundaries. Our heuristic efficiently discovers profitable district maps with minimal district pair recombination. In the Wisconsin case study, local votemandering yields a higher votemandering bonus for one party compared to the global, pool-based heuristic.

Future research may explore votemandering in the context of alternative fairness measures by utilizing the general framework outlined here. Determining the most effective fairness measure, or a combination thereof, to prevent votemandering remains an open question. Our votemandering model currently allows for strategic campaigning only in the first election; expanding it to include the second election would increase realism but also complexity. Further extensions might examine votemandering across more than two election rounds, which would necessitate accounting for migration patterns and shifts in voter sentiment. Moreover, allowing both parties to strategically allocate their campaign budgets would introduce a generalized version of the Colonel Blotto game in the redistricting context, presenting challenges such as breaking ties and addressing increased computational complexity.

Finally, our work carries significant policy and practical implications. We demonstrate that strategic campaigning can substantially impact redistricting outcomes despite the presence of fairness constraints and that the sole use of EG as a fairness measure may be insufficient. Our results advocate for additional measures that account for strategic behavior and policies that curtail manipulative campaigning practices. Ultimately, this study underscores the importance of considering strategic behavior when designing and evaluating redistricting processes, and the need for ongoing research on the consequences of political manipulation in democratic systems.

\newpage
\bibliographystyle{ACM-Reference-Format}
\bibliography{biblio}

\clearpage
\appendix
\section{Notation}\label{app:notation}
\begin{table}[!ht]
\small
\begin{tabular}{|c|l|}
\hline
\multicolumn{1}{|l|}{Notation}                                     & Definition                                                                     \\ \hline
$\districtplans$                                                   & Set of district plans $\{D\}$                                                  \\ \hline
$\voterballots$                                                    & Set of voter ballots $\{V\}$                                                   \\ \hline
$D_0,V_0$                                                          & Original plan, original vote data                                              \\ \hline
$\tilde{D}, \tilde{V}$                                             & New plan, new vote data                                                        \\ \hline
$\election$                                                        & Election function $\districtplans \times \voterballots \rightarrow \mathbb{N}$ \\ \hline
$f$                                                                & Fairness function $\districtplans \times \voterballots \rightarrow \mathbb{R}$ \\ \hline
$\delta$                                                           & Fairness threshold; the map is fair if $f(D,V) \leq \delta$                               \\ \hline
$K$                                                                & Set of units                                                                   \\ \hline
$n$                                                                & Number of districts                                                            \\ \hline
$z_{ij}^r$                                                         & Indicates 1 if unit $j$ is assigned to the district with center $i$            \\ \hline
$v_{init, k}^A$, $v_{init, k}^B$                                   & Vote shares of party $A$ and $B$ in unit $k$                                               \\ \hline
$\alpha$                                                           & Fractional baseline voter turnout                                              \\ \hline
$\mathcal{B}^A$, $\mathcal{B}^B$                                   & Party $A$ and $B$'s GOTV budgets                                               \\ \hline
$b_k^A, b_k^B$                                                     & Budget allocation by $A$ and $B$ in unit $k$                                   \\ \hline
$\hat{s}_i^1 $,  $\hat{s}_i^{2}$                                   & Indicate 1 for $A$'s wins in campaigned, target maps                                 \\ \hline
$\pool$                                                            & Pool of district plans                                                         \\ \hline
$\mathcal{I} = \{I_1, ..I_{n}\}$, $\mathcal{J}= \{J_1, ..J_{n}\}$  & District assignments in the original, new plans                                \\ \hline
$\hat{x}_{I_i},  \hat{y}_{J_j}$                                    & Indicate 1 for $A$'s 
 wins in the campaigned, votemandered maps                       \\ \hline
($V_{init,I}^{A}$, $V_{init,I}^{B}$) and ($V_{I}^{A}$, $V_{I}^{B}$) & Pre-campaigning and post-campaigning votes in district $I$                     \\ \hline
$\mathcal{W}(B-A)_I$                                               & Difference between wasted votes in district assignment $I$                                        \\ \hline
$\Delta$                                                           & Votemandering bonus                                                            \\ \hline
\end{tabular}
\caption{Notation}
\label{tab: notation}
\end{table}
\normalsize

\section{Section \ref{sec:methodology} Details} \label{app:sec_methodology}

\subsection{Votemandering Illustration}
\label{app: example}
As an illustration of the votemandering phenomenon, consider a hypothetical state comprised of a $10 \times 10$ grid of equipopulous counties to be partitioned into $k=5$ districts. We start with the initial map with given unit vote shares as marked in Figure \ref{fig: imex}. Both parties have close state-wide total vote proportions ($51\%, 49\%$), spread uniformly throughout the map.  Out of the total vote shares in each unit, 50$\%$ votes are cast without any campaign influence, i.e., it is the baseline voter turnout, allowing the remaining to be added through GOTV campaigning. In the initial map, party $A$ wins three districts out of 5, as marked by the units with the symbol `A'. The resulting efficiency gap (EG) of the initial map is $0.072$. In this case of 5 districts, we call a map `fair' if its EG is less than or equal to 0.20 \citep{stephanopoulos2015partisan}. Thus, any map proposed by $A$ for the second round must have its EG less than 0.20.

\begin{figure}
    \centering
    \begin{subfigure}[b]{0.45\textwidth}
         \centering
       \includegraphics[width=5cm]{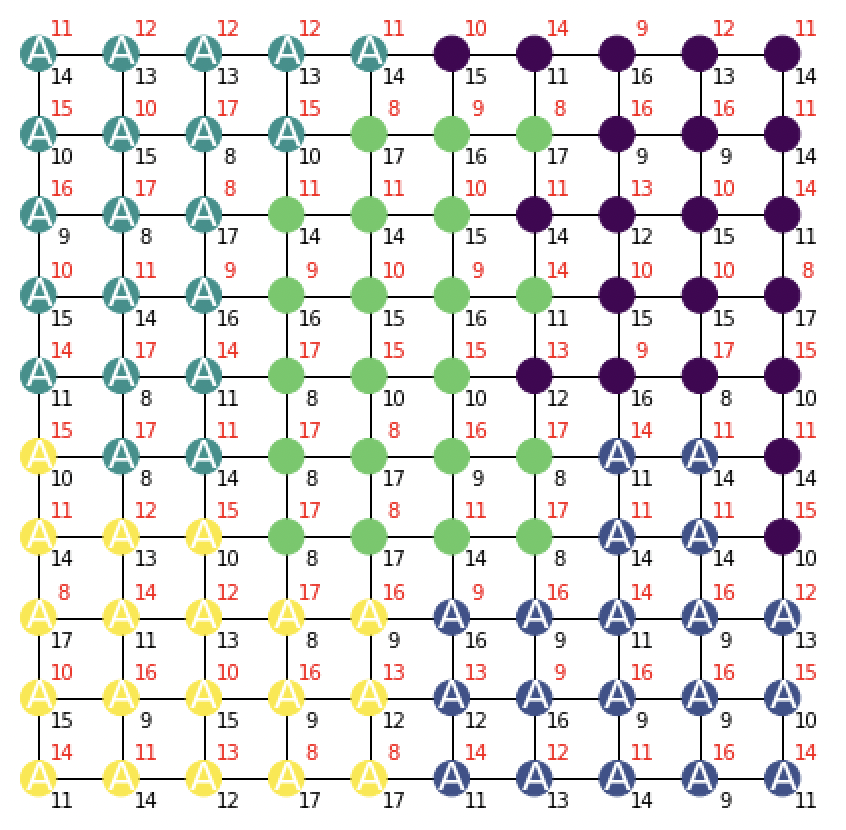}
         \caption{Initial Map}
         \label{fig: imex}
     \end{subfigure}
     \begin{subfigure}[b]{0.45\textwidth}
         \centering
       \includegraphics[width=5cm]{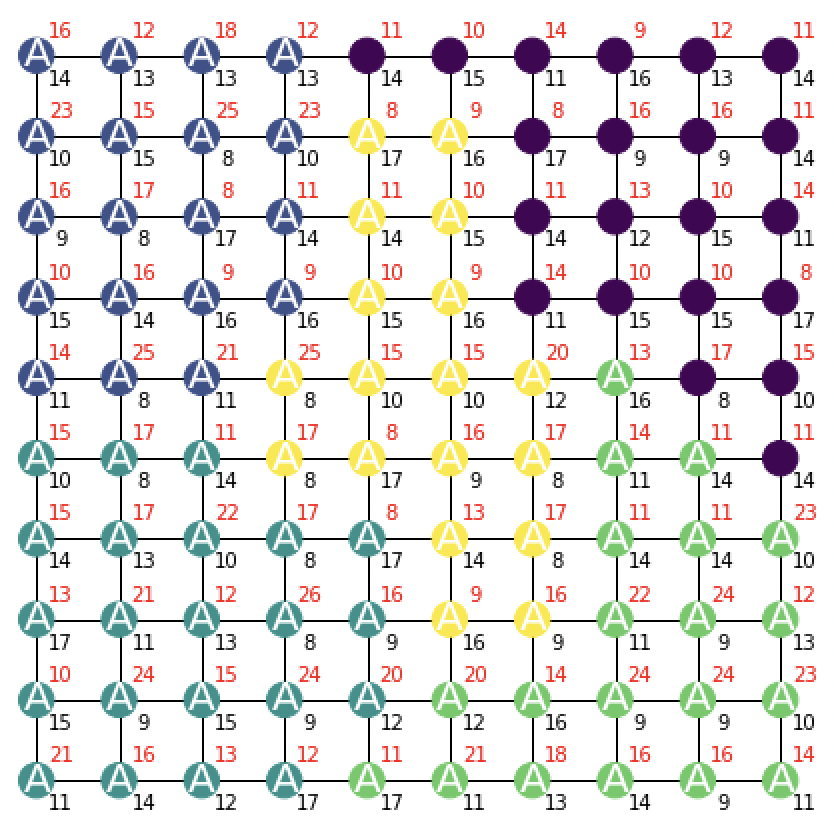}
         \caption{Votemandered Map}
         \label{fig: vmex}
     \end{subfigure}
  \caption{Initial map and the proposed map for round-2}
       \label{fig: original and target maps}
\end{figure}

\begin{figure}
    \centering
    \begin{subfigure}[b]{0.45\textwidth}
         \centering
    \includegraphics[width=6cm]{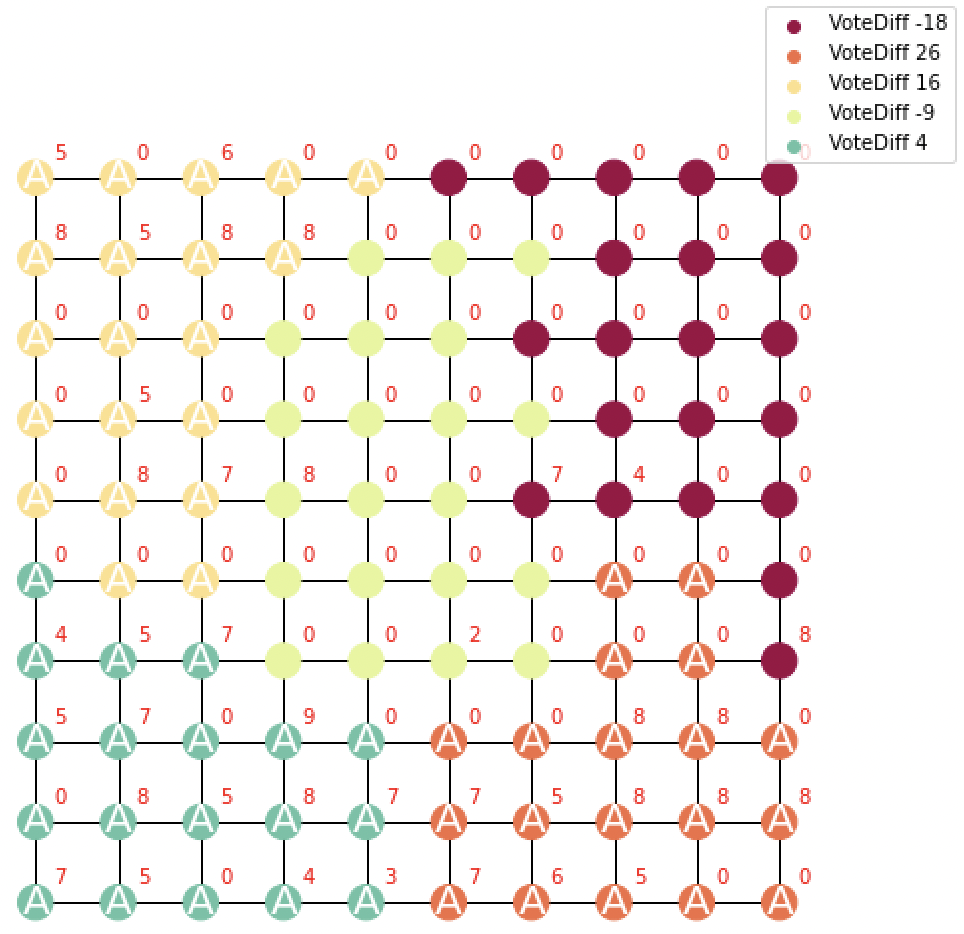}
         \caption{Initial Map}
         \label{fig: investimex}
     \end{subfigure}
     \begin{subfigure}[b]{0.45\textwidth}
         \centering
    \includegraphics[width=6cm]{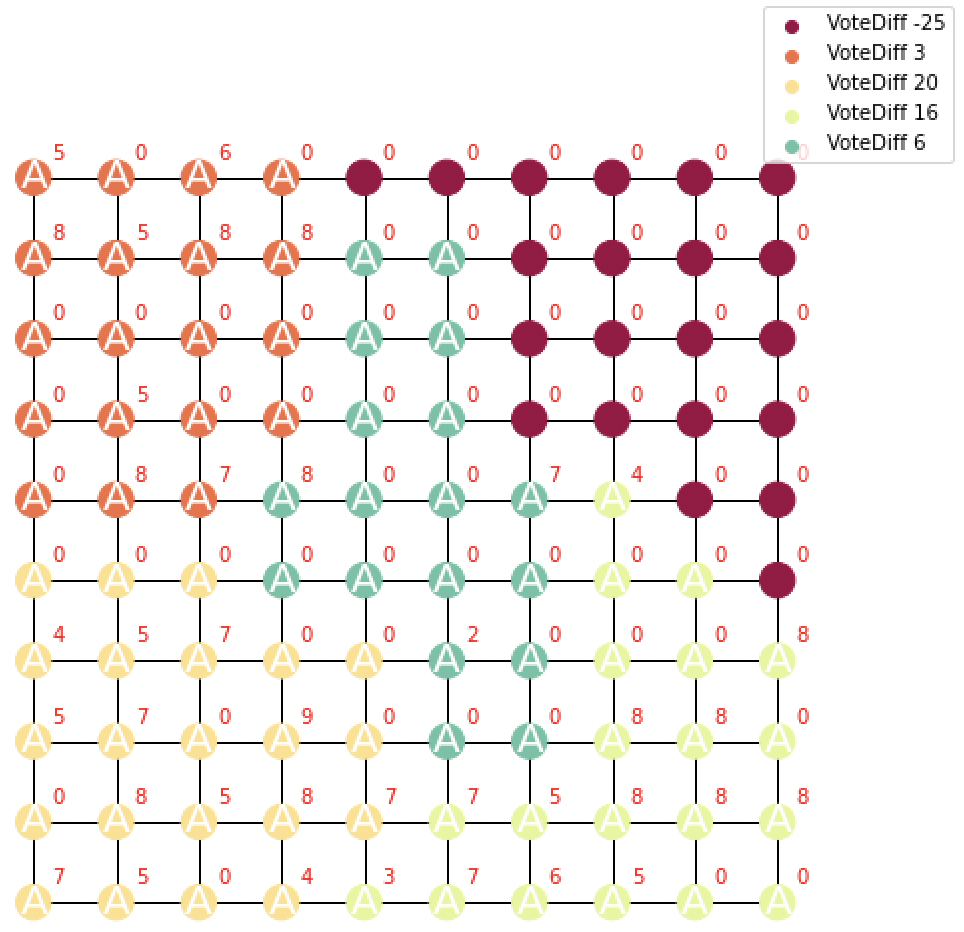}
         \caption{Target Map}
         \label{fig: investtmex}
     \end{subfigure}
 \caption{Investments as seen on both maps} \label{fig: investments}
\end{figure}

\begin{figure}
    \centering
    \begin{subfigure}[b]{0.45\textwidth}
         \centering
    \includegraphics[width=5cm]{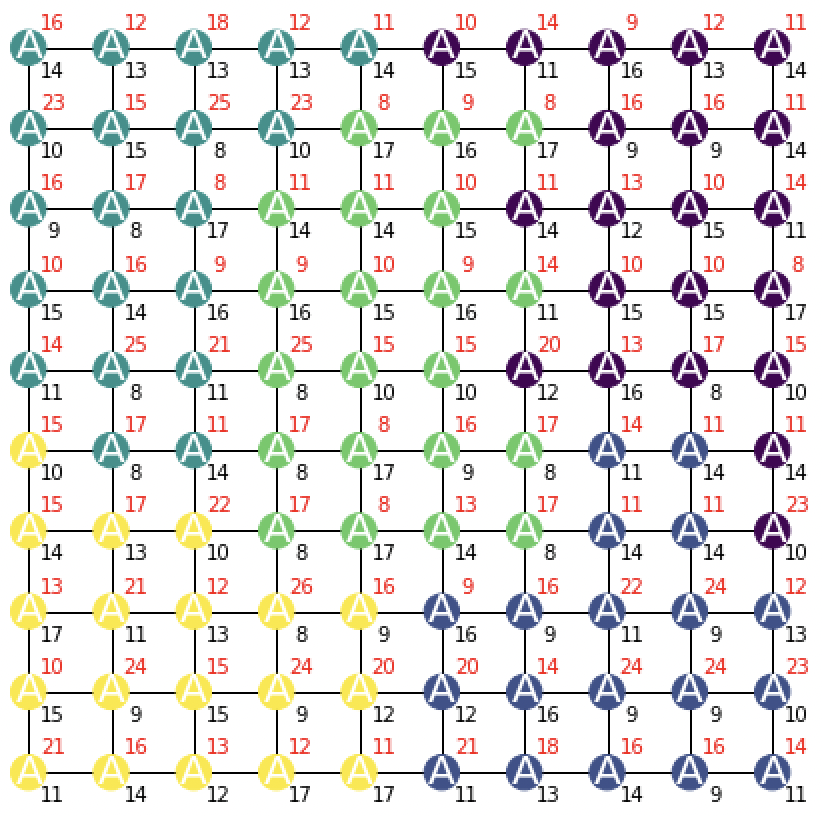}
         \caption{Round-1 (Campaigned map)}
         \label{fig: cmex}
     \end{subfigure}
     \begin{subfigure}[b]{0.45\textwidth}
         \centering
    \includegraphics[width=5cm]{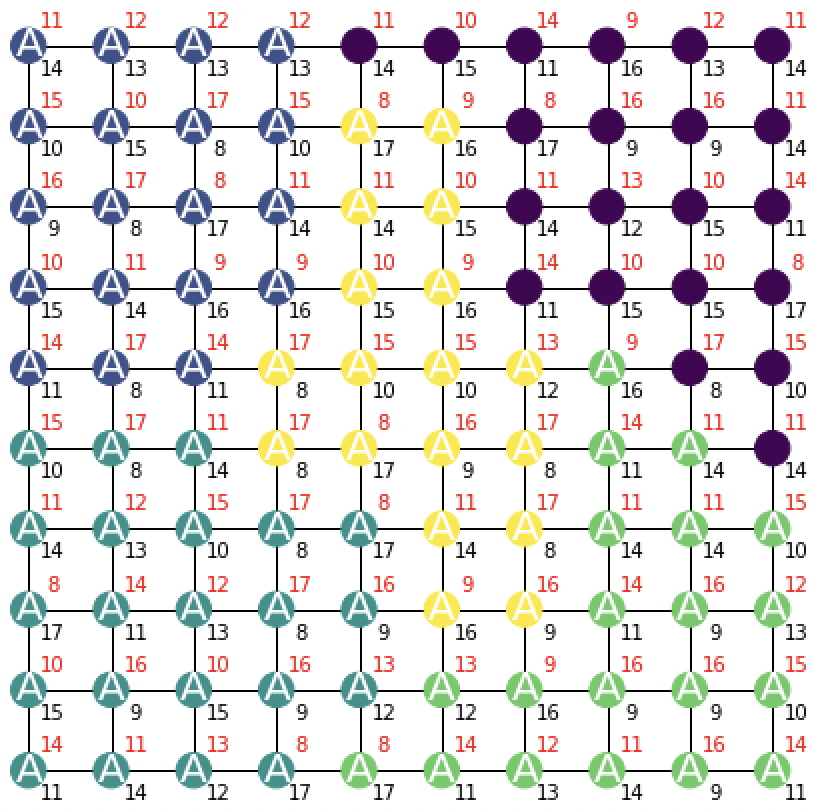}
         \caption{Round-2 (Target map)}
         \label{fig: trex}
     \end{subfigure}
   \caption{Final results}
    \label{fig: results}
\end{figure} 

In this example, party $B$ adds zero extra votes through campaigning. This is done to illustrate only the effect of A's campaigning on the election results and the proposed plan for round-2. Party B's campaigning has a different type of effect on votemandering as discussed in Section \ref{sec:results}. If party $A$ does not do any strategic campaigning or propose a different plan in round-2, it would win 3 seats in each round, making a total of 6 wins in two rounds. $A$ may choose to propose the plan in Figure \ref{fig: trex}, leading to 4 wins in round-2. However, the EG of this plan using the original vote shares, i.e., that of the target map is 0.28, making it an unfair proposal. We show that $A$ can successfully votemander in this case: via strategic campaigning, it manages to claim the fairness of this plan (winning 4 seats in round-2), while simultaneously winning all 5 seats in round-1 elections. Hence, as opposed to 6 without strategic campaigning, $A$ can win a total of 9 seats in two rounds.

We further illustrate the strategy implemented by $A$ through Figure \ref{fig: investments}.  The chosen plan for round-2 has multiple unit-to-district assignments same as that of the original plan, i.e., the two plans are not drastically different from each other. We see that $A$ has majorly invested in the districts it was already winning in the initial map. It also invests a sufficient budget in the districts it initially loses on, thereby winning all 5 in round-1 elections. This investment is clever: there is no investment in the units that are part of the losing district in the target map. The budget is allocated in such a way that we win that district (top-right positioned) in round-1 elections and we again lose it in the target map as well as the votemandered map (after redrawing its boundaries).  All investment then becomes a part of the winning districts in round-2, according to the new plan. 

To maintain the EG bound in the votemandered map,  $A$ needs to lose in at least one district. In the target map (using original vote shares), $A$ loses in one district and the map attains an EG score of 0.28, but the proposed votemandered map (using updated vote shares) gets an acceptable score of 0.195. The final results can be seen in Figure \ref{fig: results}. Using a budget of $~250$ for a state with a total voter population $5000$, i.e., by influencing just $5\%$ of the total voter population, $A$ can successfully votemander by winning 9/10 seats in two rounds. Moreover, if we extend this to 10 election rounds per redistricting cycle (as in the U.S.), the differences become starker, with 41 wins ($5 + 4\times9$) through votemandering as opposed to 30  wins ($3\times10$) without strategizing (out of 50).

\subsection{Proof of Proposition \ref{thm: algo}}
\label{app:sec_methodologyproof}
\begin{proof}
Algorithm \ref{algo: globalvotemandering} 
is a straightforward loop over all candidate plans in $\pool$. 
For each candidate plan $\districtplan_i$, 
Line \ref{line:fairness_step} computes the optimal objective value 
of \eqref{opt:votemander_mip} when $\districtplan_i$ is used. 
The \texttt{best\_plan} variable maintains the candidate plan 
with maximum objective value, \texttt{best\_obj}, 
among plans considered so far. 

It remains to prove the correctness of the termination condition 
in Line \ref{line:termination}. 
Observe the objective of \ref{opt:votemander_mip} 
decomposes into round-1 wins in the campaigned map, 
$s^1 \equiv \sum_{i \in K} \hat{s}_{i}^{1} 
    \equiv \election(\districtplan_0, \tilde{\voterballot})$, and 
round-2 wins in the target map, 
$s^2 \equiv \sum_{i \in K} \hat{s}_{i}^{2}
    \equiv \election(\tilde{\districtplan}, \voterballot_0)$. 
Fixing $\tilde{\districtplan} = \districtplan_i$ determines $s^2$ 
via the $z_{ik}^A$ variables, and 
$s^1$ becomes a function of 
the budget allocation variables $b_k^A$. 
The value of $s_{\text{max}}^{1}$ computed in Line \ref{line:find_s1_max} 
is found by removing the fairness constraints and second objective term from \eqref{opt:votemander_mip}, 
so $s_{\text{max}}^{1}$ is an upper bound on $s^1_i$ 
for any given plan $\districtplan_i$. 

Suppose the loop in Algorithm \ref{algo: globalvotemandering} 
terminates in iteration $j$ 
with $\texttt{best\_plan} = \districtplan_{i^*}$ 
for some $i^* < j$. 
For any $\ell \in \Set{j, j+1, \ldots, \poolsize}$, 
the maximum objective value using new plan $\districtplan_\ell$ is 
\begin{align*}
s^2 + \max_{\Set{b_k^A}_k : \text{ feasible in \eqref{opt:votemander_mip} with } \districtplan_\ell} \Paren{s^1}
&\le s^1_{\text{max}} + s^2\\
&= s^1_{\text{max}} + \election\Paren{\districtplan_\ell, \voterballot_0}\\
&\le s^1_{\text{max}} + \election\Paren{\districtplan_j, \voterballot_0}
&& \text{(by sorting of $\pool$)}\\
&< \texttt{best\_obj}
&& \text{(by Line \ref{line:termination})}\\
&= s^1_{i^*} + \election\Paren{\districtplan_{i^*}, \voterballot_0}.
\end{align*}
Hence none of the candidate plans omitted from consideration by loop termination 
could achieve a larger objective value for \eqref{opt:votemander_mip} than \texttt{best\_plan}. 
This completes the proof of Proposition \ref{thm: algo}. 
\end{proof}

\section{Section \ref{sec:results} Details} \label{app:sec_results}
\subsection{Proof of Lemma \ref{lem: tabvotes}} \label{app:sec_results-1}
\lem*
\begin{proof}
We discuss each action from Table \ref{tab: deltaWcases} and its impact on the difference between wasted votes below.
\begin{enumerate}
    \item  Wasting an additional vote on a winning district $I$: Before adding the extra vote, with votes $V_{I}^{A}$ and $V_{I}^{B}$ for both parties, the wasted votes are as follows 
    \begin{equation}
        w^A_I = V_{I}^{A}- \frac{V_{I}^{A}+V_{I}^{B}}{2}, \  \quad w_I^B = V_{I}^{B}
    \end{equation}
    Then, the initial difference between wasted votes is 
    \begin{equation}\mathcal{W}(B-A)_{init}(I)  = V_{I}^{B} - V_{I}^{A} +\frac{V_{I}^{A}+V_{I}^{B}}{2} \end{equation}
    After adding a vote to party $A$ , the wasted votes are updated as
     \begin{equation}w^A_I = V_{I}^{A}+1- \frac{V_{I}^{A}+1+V_{I}^{B}}{2}, \  \quad w^B_I = V_{I}^{B} \end{equation}
    Then,  \begin{equation}\mathcal{W}(B-A)_{final} (I) = V_{i}^{B}-V_{i}^{A}+ \frac{V_{i}^{A}+V_{i}^{B}}{2} - \frac{1}{2} = \mathcal{W}(B-A)_{init}(I)-\frac{1}{2} \end{equation}
    Resulting in $\Delta \mathcal{W}(B-A) = -\frac{1}{2} $. In words, if party $A$  adds a vote to a winning district, the difference between wasted votes changes by half a vote. 
    \item  Wasting an additional vote on a losing district: Before adding the extra vote, with votes $P_A$ and $P_B$ for both parties, the wasted votes are as follows:
     \begin{equation}w^A_I = V_{I}^{A}, \  \quad w^B_I = V_{I}^{B}- \frac{V_{I}^{A}+V_{I}^{B}}{2} \end{equation}
    Then,  \begin{equation}\mathcal{W}(B-A)_{init}(I) = V_{I}^{B}- \frac{V_{I}^{A}+V_{I}^{B}}{2}- V_{I}^{A} \end{equation}
    After adding a vote to party $A$ , the wasted votes are updated:
     \begin{equation}w^A_I = V_{I}^{A}+1, \  \quad w_I^B = V_{I}^{B}- \frac{V_{I}^{A}+V_{I}^{B}+1}{2} \end{equation}
    Then,  \begin{equation}\mathcal{W}(B-A)_{final}(I)  = V_{I}^{B}- \frac{V_{I}^{A}+V_{I}^{B}}{2}- V_{I}^{A} - \frac{3}{2} = \mathcal{W}(B-A)_{init}(I)-\frac{3}{2} \end{equation}
    Resulting in $\Delta \mathcal{W}(B-A) = -\frac{3}{2} $. For each vote added by $A$ in a losing district, the difference between wasted votes changes by $-3/2$ votes. 
    Clearly, if $A$ wants to decrease $W$ to satisfy the EG bound, it is more beneficial to waste votes in a losing district.
    \item Winning a district through campaigning: To win district $I$, $A$ just needs to add $V_{I}^{B}- V_{I}^{A}$ votes (assuming ties break in favor of $A$). Initially, the wasted votes difference can be computed as:
 \begin{equation}w^A_I = V_{I}^{A}, \ \quad w^B_I = V_{I}^{B} - \frac{V_{I}^{A}+V_{I}^{B}}{2} \end{equation}
 \begin{equation}\mathcal{W}(B-A)_{init}(I)  = V_{I}^{B} - \frac{V_{I}^{A}+V_{I}^{B}}{2} - V_{I}^{A}  \end{equation}
After party $A$ adds $V_{I}^{B} - V_{I}^{A}$ votes to win the district, the difference between wasted votes is updated as:
 \begin{equation}w^A_I = 0 \ \quad w^B_I = V_{I}^{B} \end{equation}
 \begin{equation}\mathcal{W}(B-A)_{final}(I)  = V_{I}^{B} \end{equation}
Then, the change in $W$ is computed as:
 \begin{equation}\Delta \mathcal{W}(B-A) = V_{I}^{B} - ( V_{I}^{B} - \frac{V_{I}^{A}+V_{I}^{B}}{2} - V^{A}_I) = \frac{3V_{I}^{A}+V_{I}^{B}}{2}  \end{equation}
Thus, given the initial vote count for district $I$, we can compute the change in the difference between wasted votes as $\frac{3V_{I}^{A}+V_{I}^{B}}{2}$.
\item Shift $x$ votes from a winning to a losing district: Using the analysis for $\Delta \mathcal{W}(B-A)$ when $A$ wastes votes on a winning district, we can compute $\mathcal{W}(B-A)$ when its reverse operation is performed. That is, when $A$ removes $x$ extra votes  from district $i$, the change in $\mathcal{W}(B-A)$ is:
 \begin{equation}\Delta \mathcal{W}(B-A) = \frac{x}{2} \end{equation}
Further, when $A$ distributes these votes in $B$'s winning district $J$,  $\Delta \mathcal{W}(B-A)$ is updated as:
 \begin{equation}\Delta \mathcal{W}(B-A) = \frac{x}{2} +\frac{-3}{2}\times x = -x \end{equation}
If $A$ chooses to redistribute these votes to another $A$'s winning district $l$, we get:
 \begin{equation}\Delta \mathcal{W}(B-A) = \frac{x}{2} + \frac{-1}{2}\times x = 0  \end{equation}
\item Shift $x$ votes from a losing to a winning district: Similar to the previous case, shift from a losing to a winning district results in a difference of $\Delta \mathcal{W}(B-A)= x$.
\end{enumerate}
This concludes the discussion of the strategy space of $A$ and its impact on the difference between wasted votes. \end{proof}
\subsection{Proof of Theorem \ref{thm: existence}} \label{app:sec_results-2}
\thmone*
\begin{proof} Since assignment $\mathcal{J}$ has a higher number of wins for the majority party compared to the initial map, the corresponding votemandering bonus is always positive. That is, even if we maintain the same number of wins in the first round after campaigning, the second round has an improvement. For a map with assignment $\mathcal{J}$, we then need to establish fairness by showing
\begin{align}
     \mathcal{W}(B-A)_{\mathcal{J}} = \mathcal{W}(B-A)_{\mathcal{I}} +   \Delta \mathcal{W} (B-A)_{\mathcal{I} \rightarrow \mathcal{J}} + \text{[campaign wasted votes]}\leq \text{ EG bound }
\end{align}
We know that the initial map is fair, i.e., $\mathcal{W}(B-A)_{\mathcal{I}} \leq$  EG bound. Since $\mathcal{J}$ has higher number of wins than $\mathcal{I}$, wlog, $\Delta \mathcal{W} (B-A)_{\mathcal{I} \rightarrow \mathcal{J}}$ is considered positive. If it's negative and makes $\mathcal{W}(B-A)_{\mathcal{J}} \leq -$ EG bound, we naturally establish that the map gives more advantage to the minority party as compared to the initial map (and is  acceptable), making a trivial case. If $\mathcal{W}(B-A)_{\mathcal{J}} \geq$  EG bound, then we  use Table \ref{tab: deltaWcases} to spend budget to satisfy the bound. However, this budget allocation is not trivial: the allocation of budget should not change the $W/L$ status of the districts in $\mathcal{J}$ and it needs to satisfy the individual unit voter-turnout constraints. Intuitively, the constraints may get violated under special cases like very high voter turnout or the election mandate hugely tilting towards a party. We next deduce sufficient conditions that allow such budget allocation to occur. 

For districts in $\mathcal{J}(W), \mathcal{J}(L)$, i.e., the winning and losing districts respectively, the capacities of budget allocation can be written as
\begin{align}
    \text{Total capacity of winning districts} & = \sum_{J\in \mathcal{J}(W)} \sum_{j \in J} (1-\alpha)v_i^A = c_1\\
    \text{Total capacity of losing districts} &  =\sum_{J\in \mathcal{J}(W)}  \min \{ \ \sum_{j \in J} (1-\alpha)v_i^A, \ V_J^B-V_J^A\} = c_2
\end{align}
Both $c_1$ and $c_2$ can be computed using the assignment $\mathcal{J}$. Using Table \ref{tab: deltaWcases}, this translates to a bound on the effect on wasted votes:
\[\text{Max. achievable difference in $\mathcal{W}(B-A)$ through budget allocation } = -\frac{1}{2}c_1- \frac{3}{2}c_2\]
Then, the sufficient condition for votemandering becomes 
\begin{align}
    \mathcal{W}(B-A)_{\mathcal{I}} +   \Delta \mathcal{W} (B-A)_{\mathcal{I} \rightarrow \mathcal{J}} -\frac{1}{2}c_1- \frac{3}{2}c_2 \leq \text{ EG bound }
\end{align}
We can simplify this condition further to deduce a (comparatively stringent) sufficient condition on $\alpha$ by asking if allocating only on $\mathcal{J}(W)$ can satisfy the bound:
\begin{align}
     0  & \geq \Delta \mathcal{W} (B-A)_{\mathcal{I} \rightarrow \mathcal{J}} -\frac{1}{2}c_1 \notag \\
   c_1 & = \sum_{J\in \mathcal{J}(W)} \sum_{j \in J} (1-\alpha)v_i^A \geq  2 \Delta \mathcal{W} (B-A)_{\mathcal{I} \rightarrow \mathcal{J}} \notag\\
     (1-\alpha) & \geq \left( \frac{2 \Delta \mathcal{W} (B-A)_{\mathcal{I} \rightarrow \mathcal{J}}}{\sum_{J\in \mathcal{J}(W)} \sum_{j \in J} v_i^A} \right) \notag\\
     \therefore \alpha & \leq 1- \left( \frac{2 \Delta \mathcal{W} (B-A)_{\mathcal{I} \rightarrow \mathcal{J}}}{\sum_{J\in \mathcal{J}(W)} \sum_{j \in J} v_i^A} \right)
\end{align}
Thus, we can see that the lesser the difference between the wasted votes of $\mathcal{I}$ and $\mathcal{J}$, i.e., $\Delta \mathcal{W} (B-A)_{\mathcal{I} \rightarrow \mathcal{J}}$, the higher the voter turnout votemandering strategies can handle.
\end{proof}

\subsection{Proof of Theorem \ref{thm: polytime}} \label{app:sec_results-3}
\thmtwo*
\begin{proof}
We elaborate on each of the steps outlined in the main body. Careful observation of the fairness MIP reveals that the first set of constraints \eqref{eq: budgetcapA}-\eqref{eq: budgetconstraint} track the effects of budget on updating vote shares, the second set of constraints \eqref{eq: seat1MIP}-\eqref{eq: seat2MIP} translates these effects into the  $W/L$ status of campaigned and votemandered map, the third set \eqref{eq: tau1}-\eqref{eq: tau2} further translates this into the fairness evaluation of the districts and finally the state in \eqref{eq: egMIP}. The objective only uses the variables $\hat{x}_I$ that indicates the $W/L$ status in the campaigned map. 

We use $b_k$ variables (over the space of $K$ units) for capturing the budget investment in each unit. Since all remaining constraints and objective use the win/lose variables ($\hat{x}_I, \hat{y}_J$) using the set notation, we can simplify the $b_k$ variables space to just capture the budget invested in \emph{pieces} formed by overlapping each district in $\mathcal{I}$  with each district in $\mathcal{J}$. We let variables $z_{IJ}$ denote the budget invested in the set of units that belong to district $I$ and $J$ in round-1 and 2, respectively. As outlined in the idea sketch, the spending capacity of each piece is defined as $c_{IJ}= \min$(division voter-turnout capacity, budget needed to win the round-2 district it is part of (if it's a losing district)).

Further, when we fix the $W/L$ status in the votemandered map, i.e., variables $\hat{y}_J$, we simplify the constraints by eliminating the $\tau$ variables. Finally, we see that the MIP is reduced to  the following linear formulation:
\begin{align}
      \max \  \sum_{I\in \mathcal{I}} \hat{x}_{I} \notag \\
        s.t. \ \ 
       &  1-M(1-\hat{x}_{I}) \leq  (V_I^{A}- V_I^{B}) \leq M \hat{x}_{I} & \forall I \in \mathcal{I}  \notag \\
       & \sum_{I\in \mathcal{I}} z_{IJ} \leq V_{J}^B-V_J^A & \forall J \in \mathcal{J}(L) \notag \\
       & \sum_{I, J \in \mathcal{I, J}} z_{IJ} \leq \mathcal{B}^A & \forall I \in \mathcal{I} \notag \\
       & \frac{1}{2}\sum_{J\in \mathcal{J}(W)} Z_{IJ} +\frac{3}{2}\sum_{J\in \mathcal{J}(L)} Z_{IJ} \geq \mathcal{W}(B-A)_{\mathcal{J}}-\text{ EG bound } \notag \\
       &  z_{IJ} \leq c_{IJ} & \forall I, J \in \mathcal{I, J} \notag \\
      & \hat{x}_{I} \in (0,1),  \ \ \ z_{IJ} \geq 0 & \forall I, J \in \mathcal{I,J} 
    \label{prog: linearfairness}
  \end{align}
After solving this linear program, we check if the optimal solution involves a tight $\sum_{I\in \mathcal{I}} z_{IJ'} \leq V_{J'}^B-V_{J'}^A$ for a $J' \in \mathcal{J}(L)$. If not, this implies that the optimal investment leading to maximum seats in round-1 does not need to change the $W/L$ status of any district in the votemandered map. Otherwise, a tight constraint implies a change in $W/L$ status of a district, i.e., $\hat{y_{J}}$ towards optimality which changes the EG significantly and non-continuously. We thus make the required change in constraints and solve the updated linear program again. Corresponding to district $J'$, the updates include replacing  $\sum_{I\in \mathcal{I}} z_{IJ'} \leq V_{J'}^B-V_{J'}^A $ with 
\[\sum_{I\in \mathcal{I}} z_{IJ'} \geq V_{J'}^B-V_{J'}^A \]
and replacing the fairness constraint with 
\[\frac{1}{2}\sum_{J\in \mathcal{J}(W) \cup J'} Z_{IJ} - \frac{1}{2} (V_{J'}^B-V_{J'}^A) +\frac{3}{2}\sum_{J\in \mathcal{J}(W)/J'} Z_{IJ} \geq \mathcal{W}(B-A)_{\mathcal{J}}-\text{ EG bound } +\frac{3V_{J'}^B+V_{J'}^A}{2}\]
or 
\[\frac{1}{2}\sum_{J\in \mathcal{J}(W) \cup J'} Z_{IJ} +\frac{3}{2}\sum_{J\in \mathcal{J}(W)/J'} Z_{IJ} \geq \mathcal{W}(B-A)_{\mathcal{J}}-\text{ EG bound } +2V_{J'}^B\]
In case we have two tight constraints at any step, we  can greedily choose the district $J'$ that has the less fairness cost in terms of $\frac{3V_{J'}^B+V_{J'}^A}{2}$. This way, we have to solve at most $n$ linear programs to reach an optimal solution to the MIP, which is the case when every update to the linear program produces an optimal solution that improves the objective. Thus, we can find an optimal solution using polynomial efforts.
\end{proof}

\section{Section 4 details} \label{app: 4}
\subsection{Local votemandering example}\label{app: 4a}
Here we illustrate the local votemandering heuristic through a grid graph example. We consider a $20 \times 20$ grid with fixed vote shares for parties $A$ and $B$, forming 10 districts, with other parameters the same as our settings in section 3. The initial map is given in Figure \ref{fig: localvmIM}, in which $A$ wins 4 seats.

\begin{figure}
    \centering
    \begin{subfigure}[b]{0.45\textwidth}
         \centering
          \includegraphics[width=5cm]{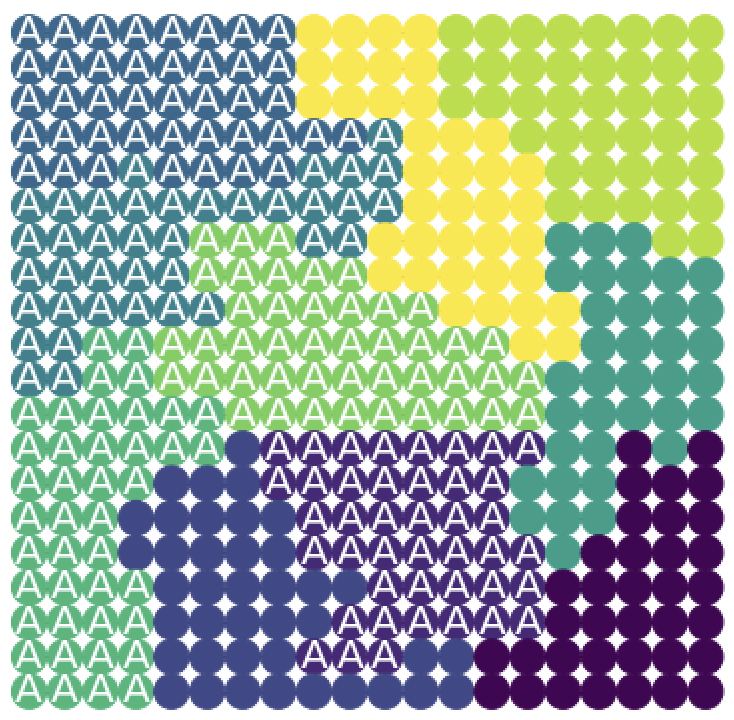}   
         \caption{Initial Map}
         \label{fig: localIM}
     \end{subfigure}
     \hfill
     \begin{subfigure}[b]{0.45\textwidth}
         \centering
        \includegraphics[width=5.1cm]{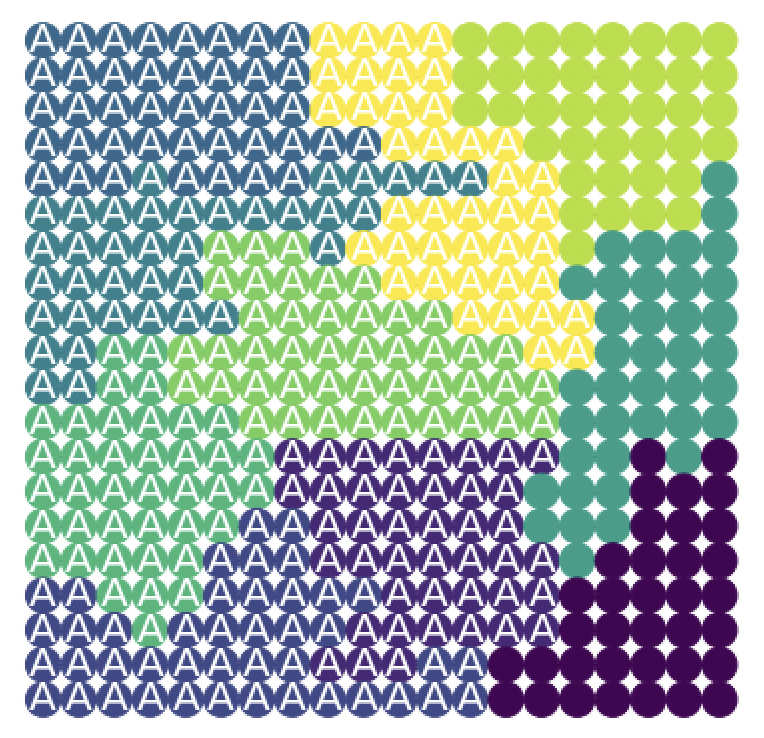}
         \caption{Target Map}
         \label{fig: localTM}
     \end{subfigure}
    \caption{Local Votemandering: Initial and Target Maps}
    \label{fig: localvmIM}
\end{figure} 

For each pair of neighbors in the initial map, we find strategy 1,2, and 3 edges and their corresponding weights. The edge weights are computed using recombination local search. Finally, the matching problem and its solution are shown in Figure \ref{fig: localvm}. The edges with nonzero weights signify strategy-1 edges, 0 weights are corresponding to strategy-2 and `FC' refers to fairness costs from strategy-3. The final target map is shown in Figure \ref{fig: localvmIM}. This local votemandering solution gives a votemandering bonus of 3, spending a budget of 174. 
\begin{figure}
    \centering
    \begin{subfigure}[b]{0.45\textwidth}
         \centering
          \includegraphics[width=4cm]{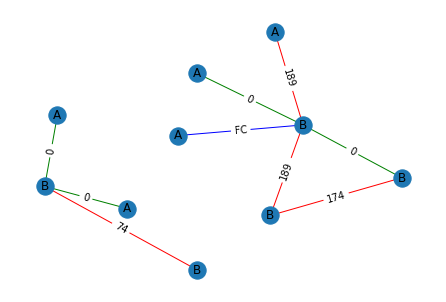}
         \caption{Problem Instance}
         \label{fig: localvmmatching}
     \end{subfigure}
     \hfill
     \begin{subfigure}[b]{0.45\textwidth}
         \centering
        \includegraphics[width=4cm]{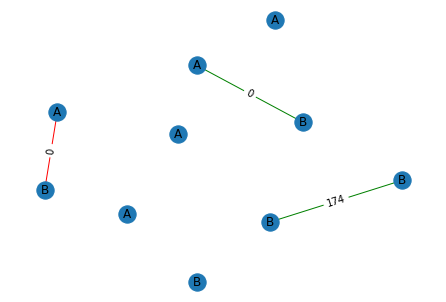}
         \caption{Maximum Matching Solution}
         \label{fig: localvmsol}
     \end{subfigure}
    \caption{Local Votemandering: Heuristic}
   \label{fig: localvm}
\end{figure}

\subsection{Local votemandering strategies}\label{app: 4b}
\textbf{Strategy-1:}\\
Consider a submap of a pair of neighboring districts $(D_1, D_2)$. In this strategy, we allocate sufficient budget in a currently losing district $D_1$ (change $D_1$ status from $L \rightarrow W$ in the first round elections i.e. in the campaigned submap) and propose perturbations with its neighbor $D_2$ so that it again loses in the second round (mark $D_1$ status as $L$ in the second round i.e. in the votemandered/target submap). We thus increase the votemandering bonus by 1, via a campaigned map win at $D_1$. 
The budget required here is simply the margin with which $D_1$ loses in the campaigned map ($margin(D_1$)) with no investment. Naturally, the perturbations for the new submap involve exchanging units across districts $D_1, D_2$ that nullify the effect of budget investment and regain the $L$ status of $D_1$ in the votemandered submap.  Let this imply a movement of $V_A^T$ party $A$ votes and $V_B^T$ party $B$ votes from $D_2$ to $D_1$. If $(D_1, D_2)$ have status $(L, W)$ in the initial submap, Lemma \ref{lem: tabvotes} implies:
\begin{equation}
     \Delta \mathcal{W}(B-A)_i \text{ for } (L, W) = -V_A^T - V_B^T -\frac{a_1}{2}-\frac{3a_2}{2}+\frac{b_2}{2}+\frac{3b_1}{2}
\end{equation}
where, $a_i, b_i$ is the budget allocation in district $i$ (in the initial submap) by party $A$ and $B$ respectively. Similarly, if the current status is $(L, L)$, we have:
\begin{equation}
    \Delta \mathcal{W}(B-A)_i \text{ for } (L, L) = -V_A^T - V_B^T -\frac{a_1}{2}-\frac{a_2}{2}+\frac{3b_2}{2}+\frac{3b_1}{2}
\end{equation}
 Strategy-1 edge weight is then given as ($margin(D_1), \Delta \mathcal{W}(B-A)_i$). Note the fairness cost from the initial map to the votemandered map is insignificant since the votemandered submap retains the $W/L$ status of both districts. 

\noindent
\textbf{Strategy-2:}\\
In this strategy, we let party $B$ win $D_1$ in campaigned and the votemandered submap and propose perturbations between $D_1, D_2$ such that $D_1$ wins in the second round (mark from $L\rightarrow W$  in the target submap). The target submap is chosen such that the votemandered submap leads to no change in the wins, but the target submap (using original vote shares) secures a votemandering bonus of 1. This is possible $margin(D_1)$ is smaller than $B$'s budget allocation in $D_1$. This also suggests that with an increase in $B$'s budget allocation, $A$ can specifically use strategy-2 to votemander more efficiently. 
The change in $\Delta \mathcal{W}$ for this setting is exactly the same as for strategy-1, giving strategy-2 edge weight as ($0, \Delta \mathcal{W}(B-A)_i$). Moreover, we don't expect $\Delta \mathcal{W}$ to change a lot here as well, since there is no change in the number of wins where the fairness constraint is concerned, i.e. in the votemandered submap.

\noindent
\textbf{Strategy-3:}\\
In this strategy, we let party $B$ win a district in the first round but propose perturbations such that $A$ wins in both the votemandered as well as the target submaps (mark from $L\rightarrow W$ in the second round elections). The difference with strategy-2 is that here we don't depend on $B$'s budget allocation to secure a win, but achieve only via the changes in the district boundaries i.e. the local perturbations. The target submap is chosen such that both the votemandered submap and the target submap have $D_1$ as a winning district. Naturally, here we expect $\Delta \mathcal{W}$ to change significantly that of the initial map, as $A$'s number of wins increases in the votemandered submap. 

If $(D_1, D_2)$ have status $(L, W)$ and $V_{A}^1, V_B^1$ are the original vote shares of district $D_1$, Lemma \ref{lem: tabvotes} implies:
\begin{equation}
    \Delta \mathcal{W}(B-A)_i \text{ for } (L, W) =    -V_A^1 - V_B^1 -\frac{a_1}{2}-\frac{3a_2}{2}+\frac{b_2}{2}+\frac{3b_1}{2}
\end{equation}
where, $a_i, b_i$ are the budget allocations in district $i$ by party $A$ and $B$ respectively. Similarly, if the status is $(L,L)$, we have:
\begin{equation}
    \Delta \mathcal{W}(B-A)_i \text{ for } (L, L) = -V_A^1 - V_B^1 -\frac{a_1}{2}-\frac{a_2}{2}+\frac{3b_2}{2}+\frac{3b_1}{2}
\end{equation}

\section{Section 5 Details} \label{app: sec5}
We include the visualization of local votemandering strategies for both Republican and Democratic parties in Figures \ref{fig: WIlocal1R}- \ref{fig: WIlocal2D}.
\begin{figure}
    \centering
    \begin{subfigure}[b]{0.32\textwidth}
         \centering
            \includegraphics[width=3cm]{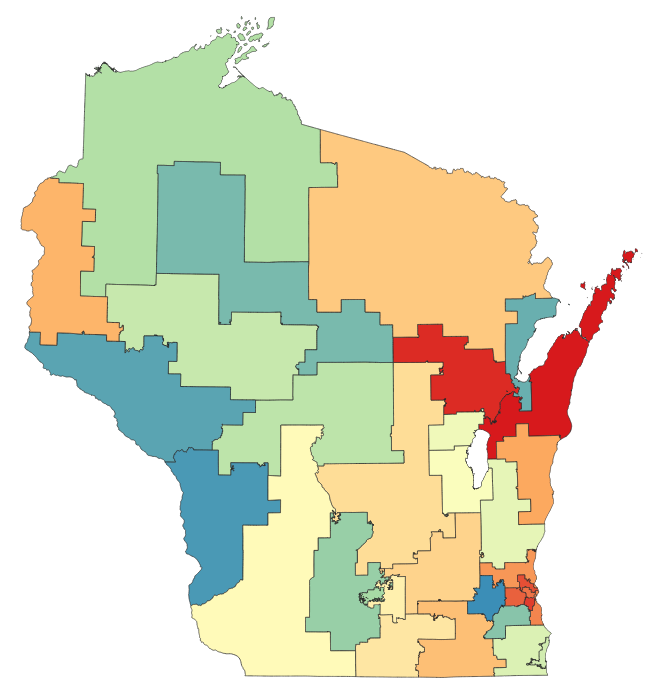}
         \caption{Initial Map}
         \label{fig: initialmaplocalr}
     \end{subfigure}
     \begin{subfigure}[b]{0.32\textwidth}
         \centering
    \includegraphics[width=3cm]{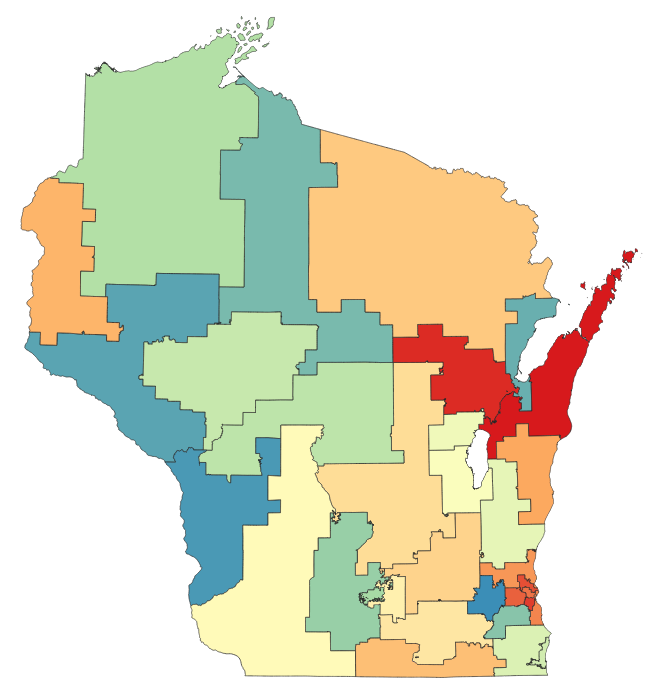}
         \caption{Target Map}
         \label{fig: targetmaplocalr}
     \end{subfigure}
     \begin{subfigure}[b]{0.32\textwidth}
         \centering
        \includegraphics[width=3cm]{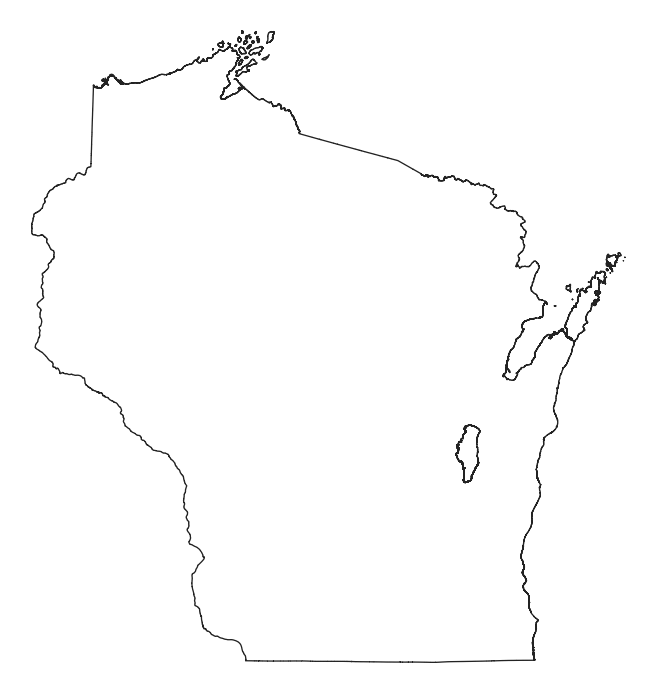}
         \caption{Investment}
         \label{fig: investmentlocalr}
     \end{subfigure}
   \caption{Republican Local Votemandering: The initial map, the created target map, and the strategic investment of budget (here, zero investment followed by all strategy-2 improvements)}
    \label{fig: WIlocal1R}
\end{figure} 
\begin{figure}
    \centering
    \begin{subfigure}[b]{0.24\textwidth}
         \centering
             \includegraphics[width=2.2cm]{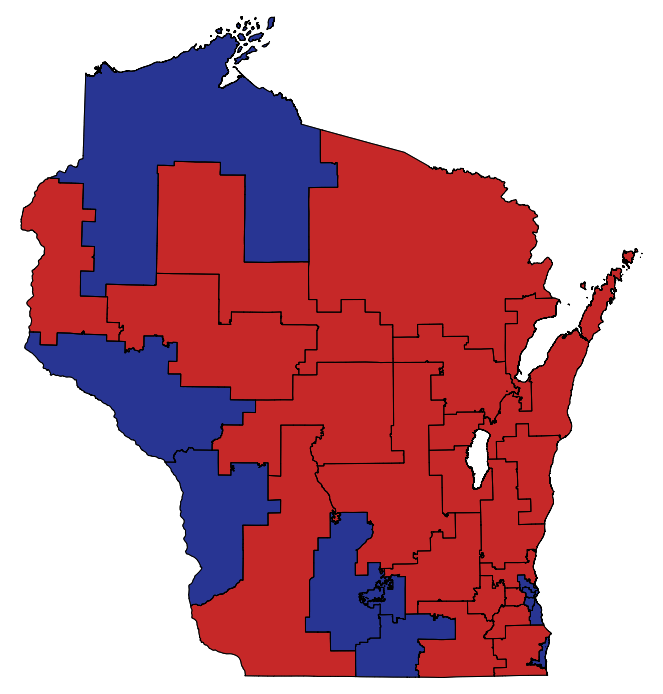}
         \caption{Initial Map}
         \label{fig: initmaplocalwinsr}
     \end{subfigure}
     \begin{subfigure}[b]{0.24\textwidth}
         \centering
       \includegraphics[width=2.2cm]{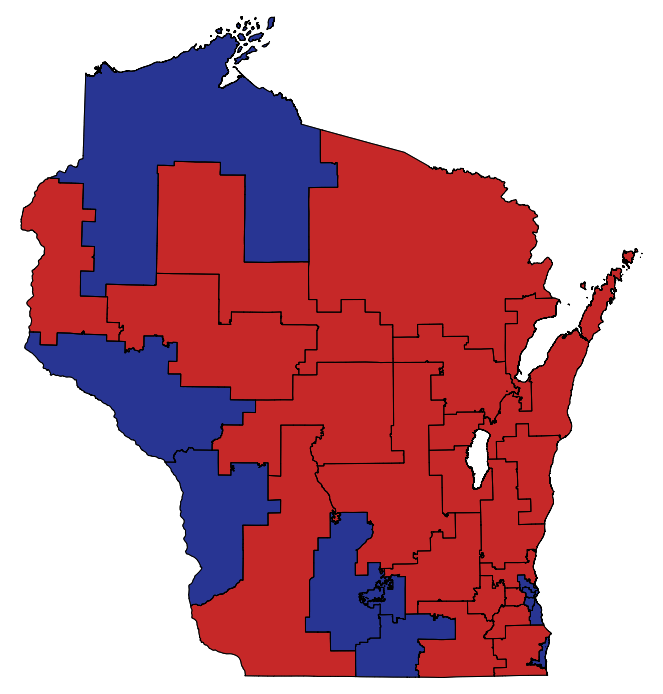}
         \caption{Campaigned Map}
         \label{fig: campaignedmaplocalwinsr}
     \end{subfigure}
     \centering
    \begin{subfigure}[b]{0.24\textwidth}
         \centering
    \includegraphics[width=2.2cm]{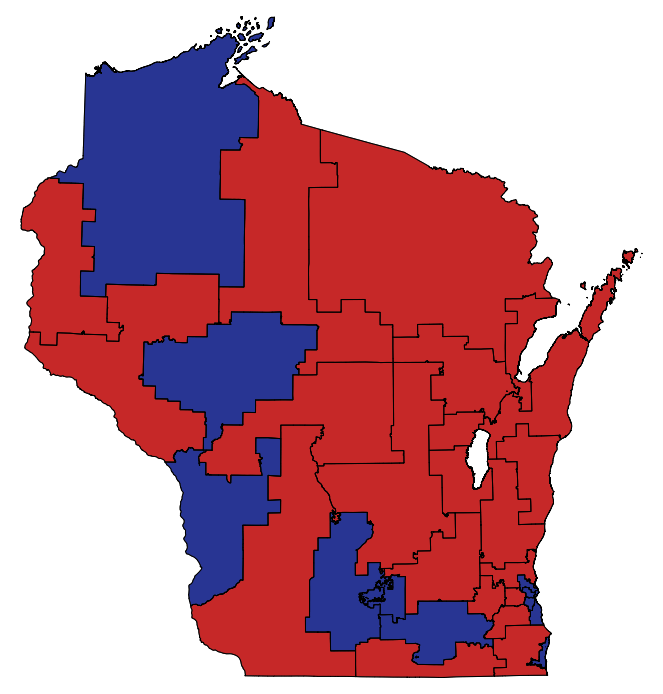}
         \caption{Votemandered Map}
         \label{fig: vmmaplocalwinsr}
     \end{subfigure}
     \begin{subfigure}[b]{0.24\textwidth}
         \centering
      \includegraphics[width=2.2cm]{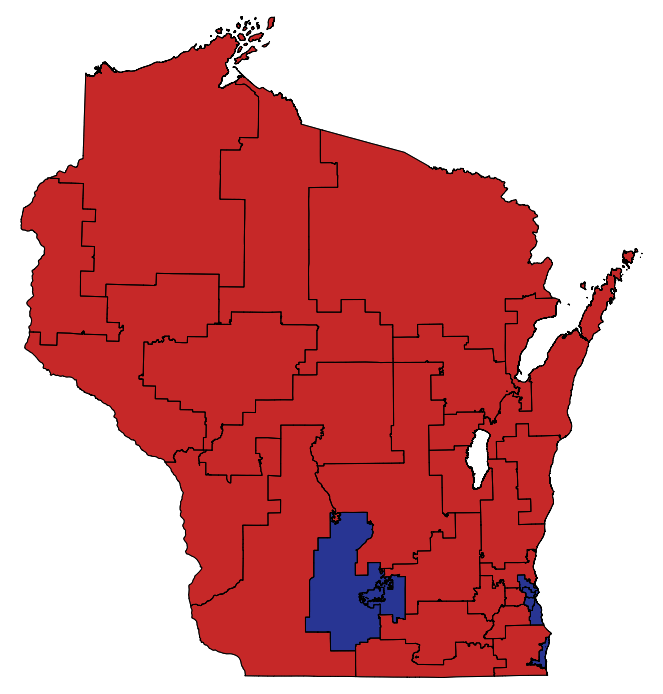}
         \caption{Target Map}
         \label{fig: targetmaplocalwinsr}
     \end{subfigure}
   \caption{The Four Stages of Republican Local Votemandering, (with red and blue indicating the districts won by the Republican and Democratic parties, respectively)}
    \label{fig: WIlocal2R}
\end{figure} 
\begin{figure}
    \centering
    \begin{subfigure}[b]{0.32\textwidth}
         \centering
            \includegraphics[width=3cm]{initmapglobD.png}
         \caption{Initial Map}
         \label{fig: initialmaplocald}
     \end{subfigure}
     \begin{subfigure}[b]{0.32\textwidth}
         \centering
    \includegraphics[width=3cm]{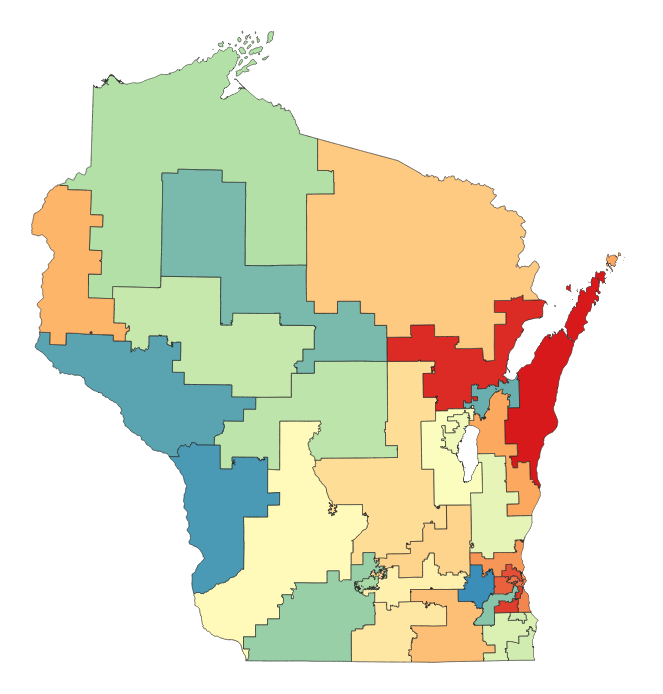}
         \caption{Target Map}
         \label{fig: targetmaplocald}
     \end{subfigure}
     \begin{subfigure}[b]{0.32\textwidth}
         \centering
        \includegraphics[width=3cm]{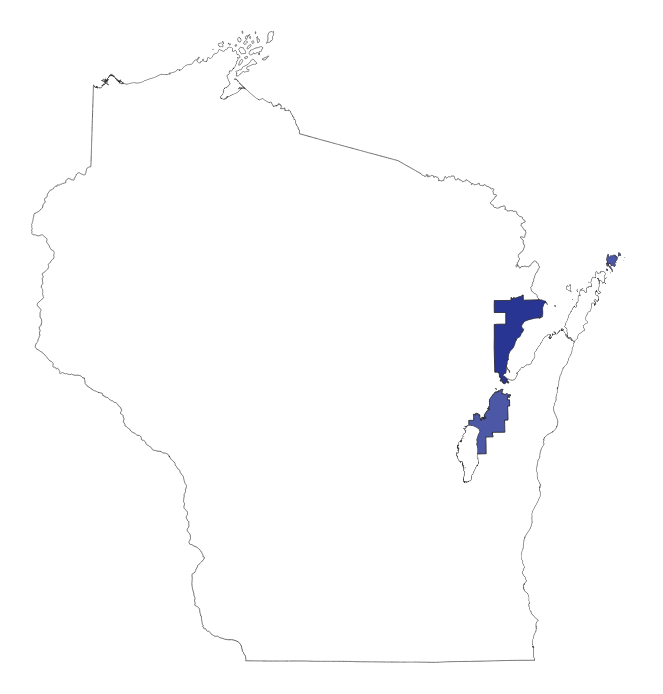}
         \caption{Investment}
         \label{fig: investmentlocald}
     \end{subfigure}
   \caption{Democratic Local Votemandering: The initial map, the created target map, and the strategic investment of budget (with intensity  indicated by the darker color)}
    \label{fig: WIlocal1D}
\end{figure} 
\begin{figure}
   \begin{subfigure}[b]{0.24\textwidth}
         \centering
             \includegraphics[width=2.2cm]{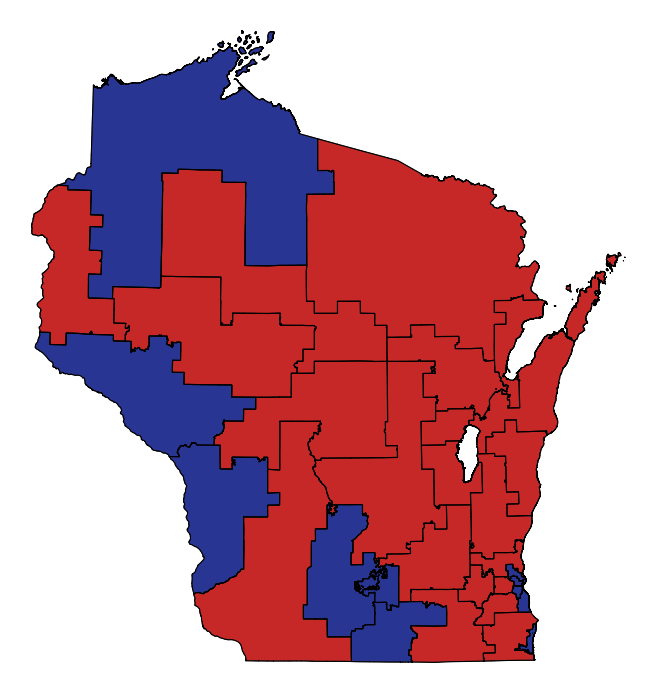}
         \caption{Initial Map}
         \label{fig: initmaplocalwinsd}
     \end{subfigure}
     \begin{subfigure}[b]{0.24\textwidth}
         \centering

       \includegraphics[width=2.2cm]{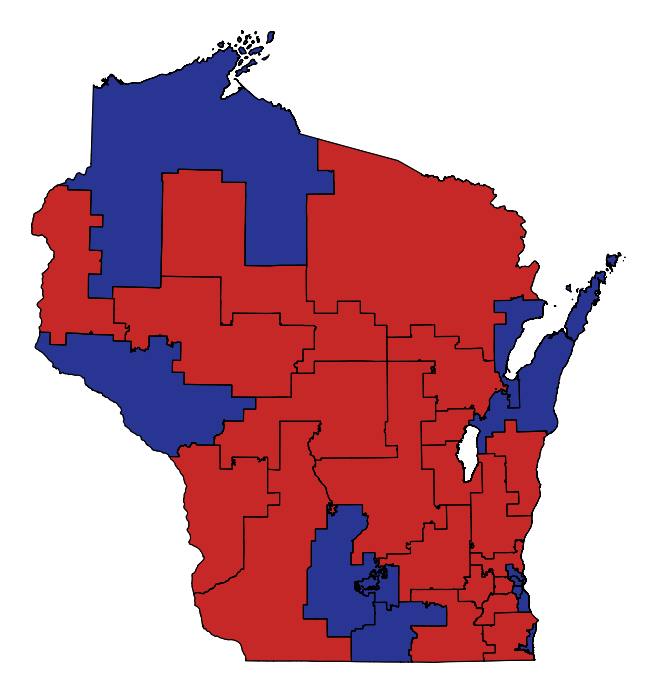}
         \caption{Campaigned Map}
         \label{fig: campaignedmaplocalwinsd}
     \end{subfigure}
     \centering
    \begin{subfigure}[b]{0.24\textwidth}
         \centering
    \includegraphics[width=2.2cm]{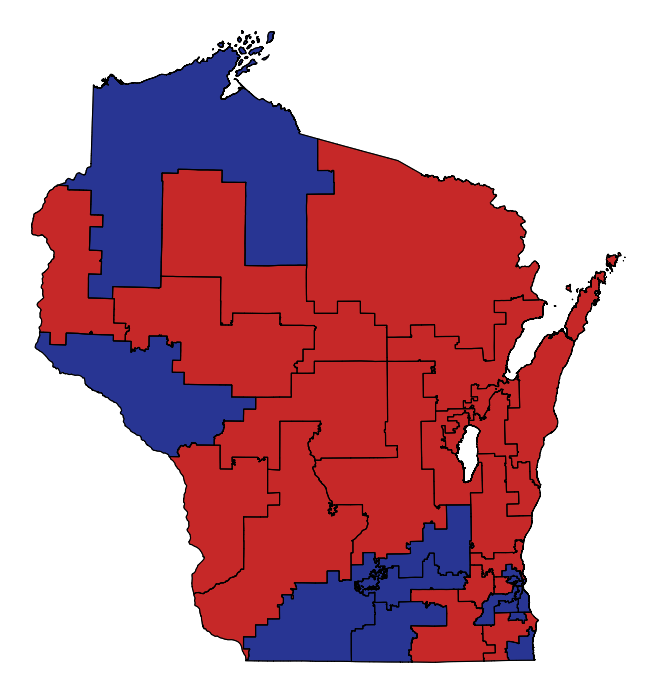}
         \caption{Votemandered Map}
         \label{fig: vmmaplocalwinsd}
     \end{subfigure}
     \begin{subfigure}[b]{0.24\textwidth}
         \centering
      \includegraphics[width=2.2cm]{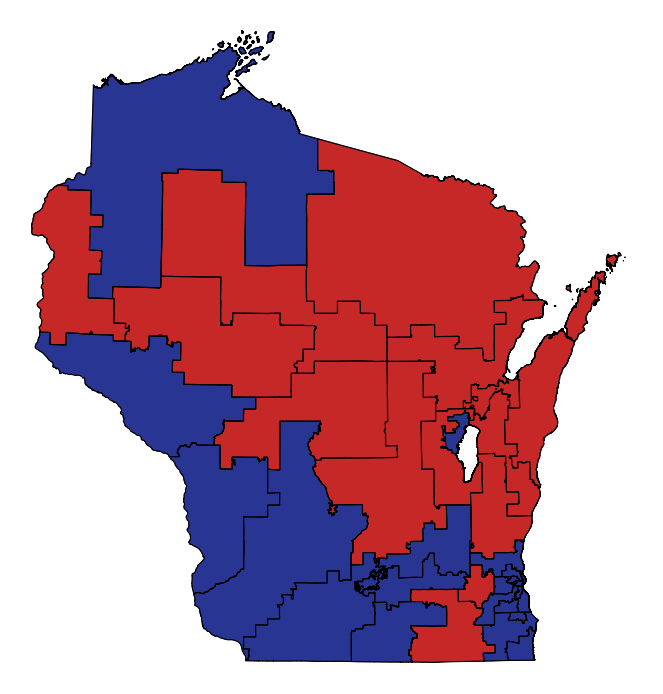}
         \caption{Target Map}
         \label{fig: targetmaplocalwinsd}
     \end{subfigure}
   \caption{The Four Stages of Democratic Local Votemandering, (with red and blue indicating the districts won by the Republican and Democratic parties, respectively)}
    \label{fig: WIlocal2D}
\end{figure} 

\end{document}